\theoremstyle{definition}
\newtheorem{defn}{\protect\definitionname}
\theoremstyle{plain}
\newtheorem{prop}{\protect\propositionname}
\theoremstyle{plain}
\newtheorem{thm}{\protect\theoremname}
\theoremstyle{plain}
\newtheorem{lem}{\protect\lemmaname}
  \newenvironment{proof}[1][\proofname]{\par
    \normalfont\topsep6\p@\@plus6\p@\relax
    \trivlist
    \itemindent\parindent
    \item[\hskip\labelsep
          \scshape
      #1]\ignorespaces
  }{%
    \endtrivlist\@endpefalse
  }
  \providecommand{\proofname}{Proof}
\theoremstyle{plain}
\newtheorem{cor}{\protect\corollaryname}
\providecommand{\corollaryname}{Corollary}
\providecommand{\definitionname}{Definition}
\providecommand{\lemmaname}{Lemma}
\providecommand{\propositionname}{Proposition}
\providecommand{\theoremname}{Theorem}
\begin{document}
\author{Dirk Bergemann\thanks{Department of Economics, Yale University, dirk.bergemann@yale.edu}\quad{}Stephen
Morris\thanks{Department of Economics, Massachusetts Institute of Technology, semorris@mit.edu}\quad{}Rafael
Veiel\thanks{Department of Economics, Massachusetts Institute of Technology, veiel@mit.edu}}
\title{A Strategic Topology on Information Structures\thanks{We acknowledge financial support through NSF Grant SES 2049744. We
are grateful for comments of seminar participants at the Cowles Foundation
Economic Theory conference, the Warwick economic theory conference,
the Panorama of Mathematics II at the Hausdorff Center.}}
\maketitle
\begin{center}
\textbf{Abstract}
\par\end{center}

Two information structures are said to be close if, with high probability,
there is approximate common knowledge that interim beliefs are close
under the two information structures. We define an ``almost common
knowledge topology'' reflecting this notion of closeness. We show
that it is the coarsest topology generating continuity of equilibrium
outcomes. An information structure is said to be simple if each player
has a finite set of types and each type has a distinct first-order
belief about payoff states. We show that simple information structures
are dense in the almost common knowledge topology and thus it is without
loss to restrict attention to simple information structures in information
design problems. 
\begin{flushleft}
\textsc{Keywords}: strategic topology, approximate common knowledge,
information design. 
\par\end{flushleft}

\begin{flushleft}
\textsc{JEL Classification}: C72
\par\end{flushleft}

\section{Introduction\label{sec:Introduction}}

\subsection{Motivation and Results}

Players make choices in a game where payoffs depend on some unknown
state of the world. Optimal strategic behavior will then depend on
players' beliefs about the states, their beliefs about others' beliefs,
and so on. The (common prior) information structure of the players
is then a probability distribution over the players' beliefs and higher-order
beliefs about the state. We are interested in how the information
structure impacts equilibrium outcomes in such a game. It is known
that equilibrium outcomes are sensitive to the infinite tails of higher-order
beliefs (\cite{rubinstein89email} and \cite{cava93}). Our main result
is a characterization of the coarsest topology on information structures
generating continuity of equilibrium outcomes. 

Following \cite{monderer1989approximating}, an event is said to be
\emph{common $p\text{-belief}$} if everyone believes it with probability
at least $p\text{, everybody believes with probability at least \ensuremath{p}}$
that everyone believes it with probability at least $p\text{,}$ and
so on. An event is said to be \emph{approximate common knowledge}
if it is common $p\text{-}$belief with $p\text{ close to \ensuremath{1\text{. }}}$Now
two information structures are close in the almost common knowledge
topology if, with high ex ante probability, there is approximate common
knowledge that their higher-order beliefs are close (in the product
topology). Our main result (Theorem \ref{thm: main} in Section \ref{thm: main})
establishes that the almost common knowledge topology is the coarsest
topology generating continuity of equilibrium outcomes.

Our definition of information structures excludes any (payoff-irrelevant)
correlating devices that players might have access to. In the language
of \cite{mertens1985formulation}, we restrict attention to \emph{non-redundant}
information structures. However, our main result allows the correlating
device to appear within the equilibrium solution concept: we study
\emph{belief-invariant Bayes correlated equilibria }where players'
action choices can be correlated but only when the correlation does
not alter players' beliefs and higher-order beliefs about the state.
While we think our notions of information structures and equilibrium
are the most natural for our main exercise, we also spell out (in
Section \ref{sec:Bayes-Nash-Equilibrium}) how our results change
if we allow more general information structures with redundancies
and Bayes Nash Equilibrium (which rules out correlation in the solution
concept). 

We say that an information structure is simple if its support is finite
and each type of a player has a distinct first-order belief. We show
that simple information structures are dense. One implication is that
it is without loss to focus on simple information structures in solving
such information design problems. 

\subsection{Alternative Topologies and Related Literature}

\cite{monderer1996proximity} and \cite{kajii1998payoff}, building
on \cite{monderer1989approximating}, identified topologies on information
structures that generated continuity of equilibrium outcomes. And
the topologies identified both have the same flavor as the almost
common knowledge topology. The key difference is that both papers
fix (different) sets of information structures with the restriction
that each profile of types gives rise to a distinct state, and is
not associated with higher-order beliefs. This makes the topology
hard to interpret. It also makes both directions of the proof easier
than in our problem. We describe, following our main result (Theorem
\ref{thm: main} in Section \ref{thm: main}), which steps of our
proof relate to this early work and which are novel to this paper.
We discuss the join measurability condition implicit in \cite{monderer1996proximity}
and \cite{kajii1998payoff} in Section \ref{subsec:Countable-Type-Spaces}.
Other differences are that the earlier work focuses on Bayes Nash
equilibrium and restricts attention to countable information structures
to ensure existence; we allow uncountable information structures (sets
of the universal type space) and we ensure equilibrium existence by
incorporating redundancies in our solution concept. But we also show
how our results apply to Bayes Nash equilibrium and general information
structures in Section \ref{sec:Bayes-Nash-Equilibrium}. The earlier
work measures closeness of equilibrium outcomes by players' expected
payoffs in equilibrium; we note in Section \ref{subsec:Value-Based-Topology}
that our results would be unchanged if we used this approach. 

\cite{dekel2006topologies} defined an interim strategic topology
on hierarchies of beliefs under the solution concept of interim correlated
rationalizability (ICR). Two belief hierarchies were said to be close
in the interim strategic topology if, in any game, an action that
was ICR at one hierarchy was approximately ICR at the other hierarchy.
It is well-known (e.g., from the work of Rubinstein (1989), Carlsson
and van Damme (1993) and Weinstein and Yildiz (2007)) that closeness
in the product topology is not sufficient for close strategic behavior.
\cite{chen2017characterizing} characterized the interim strategic
topology of \cite{dekel2006topologies} and showed that it requires
closeness of beliefs about some tail events (``frames''). However,
the connection to approximate common knowledge has been unclear. We
show that two information structures are close in our almost common
knowledge topology if and only if there is a high ex ante probability
that hierarchies are close in the interim strategic topology. 

\section{Model}

This section will introduce the model used in our main analysis, with
four main ingredients. 
\begin{enumerate}
\item We will hold fixed a finite set of players, a finite set of (payoff-relevant)
\emph{states} and a probability distribution over the states. 
\item We define a \emph{base game} to consist of players' possible actions
and their payoff functions; i.e., how each player's payoff depends
on the action profile chosen and the state. 
\item An \emph{information structure} will consist of a probability distribution
over the state and players' beliefs and higher-order beliefs about
the state, with the appropriate marginal over states.
\item The equilibrium solution concept will be the \emph{belief-invariant
Bayes correlated equilibrium} (BIBCE), see Definition 8 in \cite{bemo16}.
This is a joint distribution over the information structure and the
players' actions such that players' actions are best responses and
measurable with respect to players' beliefs and higher-order beliefs
about the state, but allowing payoff-irrelevant correlation of actions. 
\end{enumerate}
Our model of an information structure is restrictive in that it rules
out players observing multiple signals giving rise to the same beliefs
and higher-order beliefs. Equivalently, it rules out what \cite{mertens1985formulation}
labelled \emph{redundant} types, i.e., players observing payoff-irrelevant
signals through which they can correlate their behavior. On the other
hand, our equilibrium solution concept allows players to observe payoff-irrelevant
correlating devices in choosing actions. Thus we have made a modelling
choice to put correlation devices in the solution concept rather than
the information structure. 

In order to relate our work to the literature and to applications,
we will later discuss (in Section \ref{sec:Bayes-Nash-Equilibrium})
how our results easily translate to a setting with general information
structures (allowing \emph{redundancies}) and the more relaxed solution
concept of Bayes Nash equilibrium, i.e., if we move correlation possibilities
from the solution concept to the information structure. 

\subsection{Setting and Base Game}

There is a finite set of players $I$ and a finite set of (payoff)
states $\Theta$. Throughout the paper, we will fix a prior $\mu\in\Delta(\Theta)$.
Without loss, we will maintain the assumption that $\mu$ has full
support. 

A \emph{base game} then describes players' actions and payoffs: thus
a base game is a tuple $\text{\ensuremath{\mathcal{G}}}=\left((A_{i})_{i\in I},(u_{i})_{i\in I}\right)$,
where $A_{i}$ is a finite set of actions for player $i$ and $u_{i}:A_{i}\times A_{-i}\times\Theta\to[-M,M]$
is a payoff function for player $i$, where $A_{-i}:=\prod_{j\neq i}A_{j}$
and $M>0$ is an exogenous payoff bound. 

\subsection{Information Structures}

We follow \cite{hars67} and \cite{mertens1985formulation} in identifying
signals or types with players' beliefs and higher-order beliefs, or
hierarchies of beliefs. We first formally define the set of hierarchies
of beliefs.
\begin{defn}
(Hierarchies of Beliefs) The profile of players' hierarchies of beliefs,
$(\mathcal{T}_{i})_{i\in I}$ is defined recursively as follows: For
every $i$, let $\mathcal{\mathcal{T}}_{i}^{0}:=\left\{ *\right\} $
be a singleton and let $\mathcal{\mathcal{T}}_{i}^{1}:=\Delta(\Theta)$.
Given profiles $(\mathcal{T}_{i}^{m-1})_{i\in I}$ for $m>1$, define
for every $i$,
\[
\mathcal{T}_{i}^{m}:=\left\{ \left(\left(\tau_{i}^{1},\dots,\tau_{i}^{m-1}\right),\tau_{i}^{m}\right)\in\mathcal{T}_{i}^{m-1}\times\Delta(\mathcal{T}_{-i}^{m-1}\times\Theta):\text{marg}_{\Theta\times\mathcal{T}_{-i}^{m-2}}(\tau_{i}^{m})=\tau_{i}^{m-1}\right\} .
\]
Let $\mathcal{T}_{i}$ denote the set of sequences $\tau_{i}=(\tau_{i}^{m})_{m}$
so that for every $\overline{m}\in\mathbb{\mathbb{N}}$, the truncated
sequence $(\tau_{i}^{m})_{m\leq\overline{m}}$ belongs to $\mathcal{T}_{i}^{\overline{{m}}}$.

For simplicity, we will follow \cite{mertens1985formulation} in imposing
the product topology on hierarchies $\mathcal{T}_{i}$. Let $d_{\Pi}$
be a metric on $\Omega:=\Theta\times\mathcal{T}$ inducing the product
topology on $\mathcal{T}:=\prod_{i\in I}\mathcal{T}_{i}$ and the
discrete topology on $\Theta$.\footnote{The product topology on $\mathcal{T}$ is the coarsest topology so
that projections $\text{proj}_{\mathcal{T}^{m}}\colon\mathcal{T}\to\mathcal{T}^{m}$
are continuous, where for every $m$, $\mathcal{T}^{m}$ is endowed
with the weak topology. \cite{dekel2006topologies} provide a metric
that induces the product topology on $\mathcal{T}$, which for any
discount factor $\eta\in(0,1)$ can be described by $d_{\Pi}((\theta,\tau),(\hat{\theta},\hat{\tau}))=\boldsymbol{1}_{\theta\neq\hat{\theta}}+\sum_{n=1}^{\infty}\eta^{n}\max_{i}d_{w}^{n}(\tau_{i}^{n},\hat{\tau}_{i}^{n})$,
where $d_{w}^{n}$ is a product metric inducing the weak topology
on $\mathcal{T}_{i}^{n}$.} However, we later discuss (Section \ref{subsec:Interim-based-Topology})
why the use of the product topology on hierarchies is just for convenience
and not important for our arguments. \cite{mertens1985formulation}
show that for every $(\tau=(\tau_{i})_{i\in I},\theta)\in\Omega$
and every $i\in I$ there is a unique belief $\tau_{i}^{*}\in\Delta(\mathcal{T}_{-i}\times\Theta)$
so that for all $m\in\mathbb{N}$, $\tau_{i}^{m}=\text{marg}_{\mathcal{T}_{-i}^{m-1}\times\Theta}\left(\tau_{i}^{*}\right)$,
where $\tau\mapsto\tau^{*}=(\tau_{i}^{*})_{i\in I}$ is a homeomorphism.
Let $\mathscr{B}$ denote the Borel sigma-algebra on $\Omega$. 
\end{defn}
We will refer to $\Omega$ as the ``universal state space'', with
typical element $\omega=\left(\tau,\theta\right)$, where $\tau\in\mathcal{T}$.
Now an information structure is just a probability distribution on
the universal state space that respects the prior on states and the
labelling of hierarchies. 

\begin{defn}
(Information Structure). An information structure $P$ is a Borel
probability measure on $\Omega$ that satisfies two conditions: (i)
{[}consistency{]} the marginal of $P$ on $\varTheta$ is $\mu$;
(ii) {[}labelling{]} for every player $i$, there is a version of
the conditional probability $P_{i}\colon\mathcal{T}_{i}\to\Delta(\Theta\times\mathcal{T}_{-i})$
of $P$ so that 
\[
\tau_{i}^{*}=P_{i}(\tau_{i}),\ P\text{-a.s.}
\]

We will write $\mathcal{P}\subseteq\Delta(\Omega)$ for the set of
(common prior) information structures. We will sometimes describe
these as \emph{non-redundant} information structures because we do
not allow multiple types or signals of a player giving rise to the
same hierarchies of beliefs. This is to contrast them with the more
general \emph{redundant} information structures discussed in Section
\ref{sec:Bayes-Nash-Equilibrium}.

An information structure is \emph{finite} if it has finite support.
A hierarchy $\tau_{i}\in\mathcal{T}_{i}$ is \emph{finite} if it is
in the support of a finite information structure. The set of finite
types in $\mathcal{T}_{i}$ is denoted by $\mathcal{T}_{i}^{0}$ and
the set of finite states $\Omega^{0}\subseteq\Omega$ is given by
\[
\Omega^{0}:=\left\{ (\tau,\theta)\in\Omega:\forall\ i,\ \tau_{i}^{*}\in\mathcal{T}_{i}^{0}\right\} .
\]
\end{defn}

\subsection{Solution Concept: \protect \\
Belief-Invariant Bayes Correlated Equilibrium}

Together, a base game $\text{\ensuremath{\mathcal{G}}}$ and an information
structure $P$ define a game of incomplete information $(\text{\ensuremath{\mathcal{G}}},P)$.
Now we define our main equilibrium solution concept. We will be allowing
players' action choices to be correlated. So players' action choices
will be described by a \emph{decision rule,} mapping states and hierarchies
to action profiles; thus for any incomplete information game $\left(\text{\ensuremath{\mathcal{G}}},P\right)$,
a decision rule is a measurable map $\sigma:\Theta\times\mathcal{T}\to\Delta(A)$,
where $A:=\prod_{i\in I}A_{i}$ and $\Delta(A)$ is endowed with the
Euclidean topology.

An information structure $P$ and decision rule $\sigma$ jointly
induce a measure $\sigma\circ P\in\Delta(A\times\Theta\times\mathcal{T})$
in the natural way. We will be interested in \emph{outcomes} specifying
a joint distribution over actions and states $\nu\in\Delta(A\times\Theta)$.
Decision rule $\sigma$ \emph{induces} outcome $\nu_{\sigma}$ if
$\nu_{\sigma}$ is the marginal of $\sigma\circ P$ on $A\times\Theta$.
For every player $i$, a decision rule $\sigma$ and hierarchy $\tau_{i}\in\mathcal{T}_{i}$
induce a belief $\sigma\circ\tau_{i}\in\Delta(A\times\Theta\times\mathcal{T}_{-i})$,
which for every measurable set $E\subseteq\Theta\times\mathcal{T}_{-i}$
and action profile $a\in A$ satisfies $\sigma\circ\tau_{i}(\left\{ a\right\} \times E)=\int_{E}\sigma(a|\theta,\tau_{-i},\tau_{i})\text{ d}\sigma\circ\tau_{i}^{*}$.%
\begin{comment}
check definitions
\end{comment}

Our notion of equilibrium will be (one version of) incomplete information
correlated equilibrium. Two properties will be key. 
\begin{defn}
A decision rule $\sigma$ is \emph{$\varepsilon$-obedient} if for
every player $i$, action $a_{i}$ and deviation $a_{i}'$, 
\[
\intop\sum_{a_{-i}\in A_{-i}}(u_{i}(a_{i},a_{-i},\theta)-u_{i}(a_{i}',a_{-i},\theta))\text{ d}\sigma\circ\tau_{i}(a_{i},a_{-i},\tau_{-i},\theta)>-\varepsilon,\ a.s.
\]
\end{defn}
Thus if we interpret the decision rule $\sigma:\Theta\times\mathcal{T}\to\Delta(A)$
as a recommendation rule for a mediator in the game, $\varepsilon$-obedience
requires that players will not lose more than $\varepsilon$ by following
the recommendation. 

Importantly, this is an interim version of $\varepsilon$-obedience:
we are requiring that the decision rule is almost always approximately
optimal.\footnote{An ex-ante notion of $\varepsilon$-obedience would require interim
$\varepsilon$-obedience only with probability $1-\varepsilon$, and
thus allows players to choose actions that are not $\varepsilon$
best responses with probability $1-\varepsilon$. The ex-ante notion
of $\varepsilon$-obedience is much more permissive: \cite{engl95lower}
has established the lower hemi-continuity of ex ante $\varepsilon$-Bayes
Nash equilibrium in games of incomplete information with respect to
weak convergence of priors on a fixed type space.}

\begin{defn}
A decision rule $\sigma$ is \emph{belief-invariant} if, for every
$a_{i}\in A_{i}$, the marginal probability $\sigma({a_{i}}\times A_{-i}|(\tau_{i},\tau_{-i}),\theta)=\sigma_{i}(a_{i}|\tau_{i})$
does not depend on $(\tau_{-i},\theta)$.\footnote{\cite{lehrer10signaling} and \cite{forges06revisited} introduced
the notion of belief-invariance to the study of incomplete information
correlated equilibrium: \cite{forges93five} (implicitly) and \cite{lehrer10signaling}
and \cite{forges06revisited} (explicitly) imposed the restriction
in their definitions of the belief-invariant Bayesian solution. This
latter solution imposes (implicitly or explicitly) join feasibility,
so that the decision rule depends on the prolile of players' signals.
\cite{liu15} showed that a subjective version of BIBCE is equivalent
to interim correlated rationalizability. A belief-invariant decision
rule can be interpreted as a payoff-irrelevant correlating device
(where player $i$ observes private signal $a_{i}$). Now BIBCE is
equivalent to Bayes Nash equilibria if players are allowed to observe
a correlated device.}
\end{defn}
\begin{defn}
A decision rule $\sigma$ is an $\varepsilon$-belief-invariant Bayes
correlated equilibrium ($\varepsilon${- BIBCE}) of $\left(\text{\ensuremath{\mathcal{G}}},P\right)$
if it satisfies $\varepsilon$-obedience and belief invariance.
\end{defn}
We will say that a decision rule is a BIBCE if it is a $0$-BIBCE. 

\section{Main Result\label{sec:Main-Result}}

In this section, we first define an ``approximate common knowledge''
topology on information structures and then show it is the coarsest
topology to generate continuity of BIBCE outcomes (our main result).
\begin{comment}
In the previous section, we imposed the consistency condition on information
structures that the marginal on states was fixed. We won't use that
restriction in this section, so the topology is defined on information
structures without consistency.
\end{comment}

\subsection{The Approximate Common Knowledge Topology}

In words, we will say that two information structures are close if
each assigns high ex ante probability to the event that there is approximate
common knowledge that their interim beliefs are close. An event is
``approximate common knowledge'', if for some $p$ close to $1$,
everyone believes the event with probability at least $p$, everyone believes with probability at least $p$
that everyone believes the event with probability at least $p\text{,}$
and so on.... (\cite{monderer1989approximating}). The subtlety in
formalizing this definition is how we define the event that interim
beliefs are close. 

We first define $\varepsilon$-neighborhoods, the set of universal
states $\varepsilon$-close to a given universal state $\omega\in\Omega$. 
\begin{defn}
For every $\varepsilon>0$ and $\omega\in\Omega$, the $\varepsilon$-neighborhood
of $\omega$ is given by 
\[
\mathcal{N}_{\varepsilon}(\omega):=\left\{ \omega'\in\Omega:d_{\Pi}(\omega,\omega')<\varepsilon\right\} .
\]
For any $\varepsilon>0$, we define the $\varepsilon$-support of
$P\in\mathcal{P}$, to be the $\varepsilon$-neighborhoods of universal
states whose $\varepsilon$-neighborhoods are assigned positive probability
by $P$: 
\[
\text{supp}_{\varepsilon}(P):=\bigcup_{\omega\in\Omega:P(\mathcal{N}_{\varepsilon}(\omega))>0}\mathcal{N}_{\varepsilon}(\omega).
\]
Thus $\text{supp}_{\varepsilon}(P)$ is an $\varepsilon$-expansion
of the support of $P$. For any two priors $P,P'$, we define the
intersection of their $\varepsilon$-supports: 
\end{defn}
\begin{verse}
\[
\hat{T}_{\varepsilon}(P,P'):=\text{supp}_{\varepsilon}(P)\cap\text{supp}_{\varepsilon}(P').
\]
\end{verse}
\begin{figure}
\centering{}\begin{center}
\tikzset{every picture/.style={line width=0.75pt}} 
\begin{tikzpicture}[x=0.75pt,y=0.75pt,yscale=-1.2,xscale=1.2]
\draw  [fill={rgb, 255:red, 155; green, 155; blue, 155 }  ,fill opacity=0.78 ] (133,169.67) .. controls (133,148.13) and (150.46,130.67) .. (172,130.67) .. controls (193.54,130.67) and (211,148.13) .. (211,169.67) .. controls (211,191.21) and (193.54,208.67) .. (172,208.67) .. controls (150.46,208.67) and (133,191.21) .. (133,169.67) -- cycle ;
\draw  [fill={rgb, 255:red, 155; green, 155; blue, 155 }  ,fill opacity=0.78 ] (219.67,169.67) .. controls (219.67,148.13) and (237.13,130.67) .. (258.67,130.67) .. controls (280.21,130.67) and (297.67,148.13) .. (297.67,169.67) .. controls (297.67,191.21) and (280.21,208.67) .. (258.67,208.67) .. controls (237.13,208.67) and (219.67,191.21) .. (219.67,169.67) -- cycle ;
\draw   (99,169.67) .. controls (99,129.44) and (131.61,96.83) .. (171.83,96.83) .. controls (212.06,96.83) and (244.67,129.44) .. (244.67,169.67) .. controls (244.67,209.89) and (212.06,242.5) .. (171.83,242.5) .. controls (131.61,242.5) and (99,209.89) .. (99,169.67) -- cycle ;
\draw   (186.33,169.67) .. controls (186.33,129.44) and (218.94,96.83) .. (259.17,96.83) .. controls (299.39,96.83) and (332,129.44) .. (332,169.67) .. controls (332,209.89) and (299.39,242.5) .. (259.17,242.5) .. controls (218.94,242.5) and (186.33,209.89) .. (186.33,169.67) -- cycle ;
\draw  [fill={rgb, 255:red, 245; green, 166; blue, 35 }  ,fill opacity=0.69 ] (216,112) .. controls (216.67,114) and (244,130) .. (244.67,169.67) .. controls (245.33,209.33) and (216,227.33) .. (215.33,227.33) .. controls (214.67,227.33) and (186.67,208) .. (186.33,169.67) .. controls (186,131.33) and (215.33,110) .. (216,112) -- cycle ;
\draw    (214.67,74.67) -- (215.3,131.67) ; \draw [shift={(215.33,134.67)}, rotate = 269.36] [fill={rgb, 255:red, 0; green, 0; blue, 0 }  ][line width=0.08]  [draw opacity=0] (10.72,-5.15) -- (0,0) -- (10.72,5.15) -- (7.12,0) -- cycle    ;
\draw (138.33,163.57) node [anchor=north west][inner sep=0.75pt]  [font=\footnotesize]  {$\text{supp}( P')$};
\draw (148.33,250.07) node [anchor=north west][inner sep=0.75pt]  [font=\footnotesize]  {$\text{supp}_{\varepsilon }( P')$};
\draw (240.17,250.07) node [anchor=north west][inner sep=0.75pt]  [font=\footnotesize]  {$\text{supp}_{\varepsilon }( P)$};
\draw (249,163.57) node [anchor=north west][inner sep=0.75pt]  [font=\footnotesize]  {$\text{supp}( P)$};
\draw (189,50.73) node [anchor=north west][inner sep=0.75pt]  [font=\footnotesize]  {$\hat{T}_{\varepsilon }( P,P')$};
\end{tikzpicture} 
\par\end{center}\caption{Sets $\hat{T}_{\varepsilon}(P,P')$ and $\varepsilon$-supports of
$P$ and $P'$.}
\end{figure}
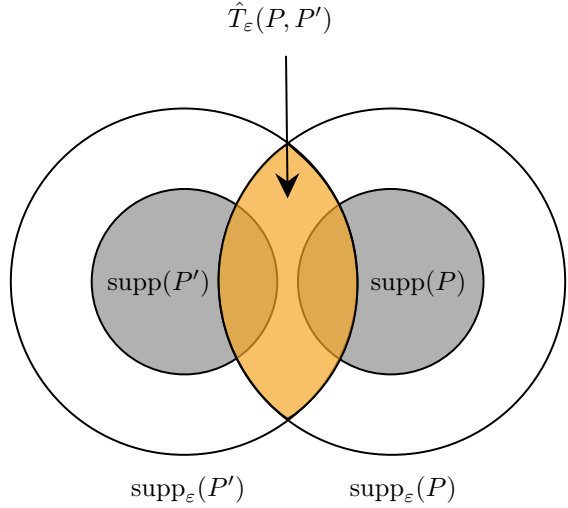

This is the set of universal states where interim beliefs are close
to interim beliefs in the support of both priors. Following \cite{monderer1989approximating},
for every probability $p\in[0,1]$ and event $E\in\mathscr{B}$, we
define $B^{p}(E)$ to be the set of universal states where all players
assign probability at least $p$ to the event $E$. Thus\textbf{
\[
B^{p}(E):=\left\{ (\tau,\theta)\in E:\forall i,\tau_{i}^{*}(E_{-i})\geq p\right\} 
\]
}where $E_{-i}:=\text{proj}_{\Theta\times\mathcal{T}_{-i}}(E)$. For
every $m\in\mathbb{N},$ let $\left[B^{p}\right]^{m}(E):=B^{p}\circ\cdots\circ B^{p}(E)$
denote the $m$-fold application of $B^{p}$; $\left[B^{p}\right]^{m}(E):=B^{p}\circ\cdots\circ B^{p}(E)$
is the set of universal states where all players assign probability
at least $p$ to all players assigning at least probability $p$....
($m$ times) to event $E$ being true. Now an event is said to be
\emph{common $p$-belief} if this is true for all $m$.
\begin{defn}
(Common $p$-Belief) For any event $E\in\mathscr{B}$, the event that
$E$ is \emph{common $p$-belief}, $C^{p}(E)$, is defined as
\[
C^{p}(E):=\bigcap_{m\in\mathbb{N}}\left[B^{p}\right]^{m}(E).
\]
\end{defn}
An event is said to be approximate common knowledge if it is common
$p$-belief for $p$ close to $1\text{. }$We can now define the approximate
common knowledge (ACK) distance\footnote{A distance is a map $d:\mathcal{P}\times\mathcal{P}\to[0,\infty)$
which is zero on the diagonal.} between any two priors, $P,P'\in\mathcal{P}$. 
\begin{defn}
\label{def:ACK}(Approximate Common Knowledge Distance) For every
$P,P'\in\mathcal{P}$, let 

\[
d^{ACK}\left(P,P'\right):=\inf\left\{ \varepsilon\geq0:\begin{array}{c}
P\left(C^{1-\varepsilon}\left(\hat{T}_{\varepsilon}(P,P')\right)\right)\geq1-\varepsilon\\
P'\left(C^{1-\varepsilon}\left(\hat{T}_{\varepsilon}(P,P')\right)\right)\geq1-\varepsilon
\end{array}\right\} .
\]
\end{defn}
Thus two priors are close if, under both priors, there is high probability
that there is approximate common knowledge that their interim beliefs
are close. 
\begin{defn}
(Approximate Common Knowledge Topology) The approximate common knowledge
(ACK) topology is the topology generated by open sets $\{ P'\in\mathcal{P}:d^{ACK}\left(P,P'\right)<\varepsilon\}_{P\in\mathcal{P}}$.
\end{defn}
While $d^{ACK}$$ $ fails the triangle inequality, the topology is
metrizable. 
\begin{prop}
\label{prop:The-ACK-topology}The ACK topology is metrizable. 
\end{prop}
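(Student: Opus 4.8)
The plan is to produce an honest metric inducing the ACK topology by treating $d^{ACK}$ as a \emph{near-metric} (a symmetric, point-separating distance obeying the triangle inequality only up to a multiplicative constant) and invoking the standard fact that the topology of such a distance is metrizable. Concretely, I would establish three properties of $d^{ACK}$: (a) it is symmetric and vanishes on the diagonal; (b) it satisfies a \emph{uniform} relaxed triangle inequality, i.e. there is a constant $K\ge 1$, independent of $P,Q,R$, with $d^{ACK}(P,R)\le K\bigl(d^{ACK}(P,Q)+d^{ACK}(Q,R)\bigr)$; and (c) it is nondegenerate, $d^{ACK}(P,Q)=0\Rightarrow P=Q$. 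Given (a)--(c), the chain construction $d(P,Q):=\inf\bigl\{\sum_{k=0}^{n-1}d^{ACK}(P_k,P_{k+1})^{\alpha}:n\ge 1,\ P_0=P,\ P_n=Q\bigr\}$, with a suitable exponent $\alpha\in(0,1]$ chosen in terms of $K$, is symmetric and satisfies the triangle inequality by construction, and the classical Frink estimate turns (b) into a two-sided comparison $c\,(d^{ACK})^{\alpha}\le d\le (d^{ACK})^{\alpha}$ for some $c>0$; hence $d$ separates points by (c), so $d$ is a genuine metric, its balls and those of $d^{ACK}$ are comparable, and the metric topology of $d$ is exactly the ACK topology.

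Property (a) is routine. Symmetry is immediate from Definition \ref{def:ACK}, since $\hat T_\varepsilon(P,P')$ and the two probability constraints are symmetric in the pair $\{P,P'\}$. For the diagonal one checks that $\text{supp}(P)$ is common $p$-belief $P$-almost surely for every $p<1$: the labelling condition and the disintegration of $P$ give, for $P$-a.e.\ $\omega$, that $\tau_i^*=P_i(\tau_i)$ assigns probability one to $\{(\theta,\tau_{-i}):(\theta,\tau_i,\tau_{-i})\in\text{supp}(P)\}\subseteq\text{supp}(P)_{-i}$, so $\text{supp}(P)\subseteq B^p(\text{supp}(P))$ modulo a $P$-null set, and iterating yields $P\bigl(C^{p}(\text{supp}(P))\bigr)=1$; since $\text{supp}(P)\subseteq\text{supp}_\varepsilon(P)=\hat T_\varepsilon(P,P)$ and $C^p$ is monotone, $P\bigl(C^{1-\varepsilon}(\hat T_\varepsilon(P,P))\bigr)=1$ for all $\varepsilon>0$, whence $d^{ACK}(P,P)=0$. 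Property (c) already uses the common-belief structure: if $d^{ACK}(P,Q)=0$ then for every $\varepsilon>0$ there is $\varepsilon'\le\varepsilon$ and an event $E\subseteq\text{supp}_{\varepsilon'}(P)\cap\text{supp}_{\varepsilon'}(Q)$ with $P\bigl(C^{1-\varepsilon'}(E)\bigr)$ and $Q\bigl(C^{1-\varepsilon'}(E)\bigr)$ both $\ge 1-\varepsilon'$; the containment $E\subseteq(\text{supp}\,Q)^{2\varepsilon'}$ and its mirror only force $\text{supp}(P)=\text{supp}(Q)$ in the limit, which is not enough, so one must additionally show that common $(1-\varepsilon')$-belief of an event lying within $\varepsilon'$ of both supports, carrying probability $\ge 1-\varepsilon'$ under both priors, forces the two laws over hierarchies to be within $\psi(\varepsilon')$ in the weak (Prokhorov) metric with $\psi(\varepsilon')\to 0$; letting $\varepsilon\downarrow 0$ then gives $P=Q$. (This same estimate also shows the ACK topology refines the weak topology, hence is Hausdorff.)

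The main obstacle is property (b). Suppose $d^{ACK}(P,Q)<\varepsilon$ and $d^{ACK}(Q,R)<\delta$, witnessed by events $E\subseteq\hat T_\varepsilon(P,Q)$ and $F\subseteq\hat T_\delta(Q,R)$ that are common $(1-\varepsilon)$- resp.\ $(1-\delta)$-belief with the required probabilities. One must manufacture a single event $G\subseteq\hat T_\gamma(P,R)$, with $\gamma$ of order $\varepsilon+\delta$, that is common $(1-\gamma)$-belief and has $P$- and $R$-probability at least $1-\gamma$. The set-theoretic containment is the easy part: $\text{supp}_\varepsilon(Q)$ serves as a bridge, since $C^{1-\varepsilon}(E)\subseteq\text{supp}_\varepsilon(P)\cap\text{supp}_\varepsilon(Q)$ and $C^{1-\delta}(F)\subseteq\text{supp}_\delta(Q)\cap\text{supp}_\delta(R)$, so $C^{1-\varepsilon}(E)\cap C^{1-\delta}(F)\subseteq\text{supp}_{\varepsilon+\delta}(P)\cap\text{supp}_{\varepsilon+\delta}(R)=\hat T_{\varepsilon+\delta}(P,R)$. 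Two things make the rest delicate. First, the belief operator here is defined through coordinate projections, $\tau_i^*(E_{-i})$ with $E_{-i}=\text{proj}_{\Theta\times\mathcal T_{-i}}(E)$, so even the basic-looking inclusion $B^p(E)\cap B^q(F)\subseteq B^{p+q-1}(E\cap F)$ is subtle — $(E\cap F)_{-i}$ can be strictly smaller than $E_{-i}\cap F_{-i}$ — and one must verify that for events of the special form arising here (the $\varepsilon$-supports are built from full-dimensional neighbourhoods $\mathcal N_\varepsilon(\omega)$) the inclusion does hold, and then track the cumulative loss in the belief level across the $m$-fold iterations defining $C^{1-\gamma}$. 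Second, and more seriously, the ``high ex ante probability'' guarantee for $F$ is a priori a statement about $Q$ and $R$ and must be transported to $P$: this is exactly where the self-referential appearance of the priors inside $\hat T_\varepsilon(\cdot,\cdot)$ bites. The mechanism is to chain the mass estimates $P\bigl(C^{1-\varepsilon}(E)\bigr)\ge 1-\varepsilon$, $C^{1-\varepsilon}(E)\subseteq\text{supp}_\varepsilon(Q)$, $Q\bigl(\text{supp}_\varepsilon(P)\bigr)\ge 1-\varepsilon$, together with their symmetric counterparts, across the three priors $P\to Q\to R$ — effectively a quantitative ``merging'' lemma for common $p$-belief along a chain. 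I expect this lemma to be the real work, and to be the same ingredient that underlies property (c). A fallback route, should the direct estimate prove unwieldy, is to identify the ACK topology with the ex ante version of the interim strategic topology referred to in the introduction, which is metrizable by a Prokhorov-type metric built from the (metrizable) strategic distance on hierarchies; but that presupposes the characterization developed later in the paper, whereas the near-metric route is self-contained.
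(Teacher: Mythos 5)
Your reduction of metrizability to properties (a)--(c) of $d^{ACK}$ followed by the Frink chain construction is a legitimate strategy in principle, but as written the proposal does not prove the two statements that carry all the weight, and those are exactly the content of the paper's proof. For the relaxed triangle inequality (b) you establish only the easy set containment $C^{1-\varepsilon}(E)\cap C^{1-\delta}(F)\subseteq \hat{T}_{\varepsilon+\delta}(P,R)$ and then explicitly defer the transfer of the ex ante mass bounds from the middle prior $Q$ to $P$ and $R$ (``I expect this lemma to be the real work''). That transfer is the crux: product-topology closeness of hierarchies on a common-belief event does not by itself control the probability that $P$- or $R$-types assign to an arbitrary Borel event such as $C^{1-\delta}\left(\hat{T}_{\delta}(Q,R)\right)$, so some quantitative mechanism is required. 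The paper supplies it, not for $d^{ACK}$ itself, but for a reparametrized distance $d^{f,g}$ in which the belief level is $1-g(\varepsilon)$ with $g(\varepsilon)=\varepsilon^{2}$ and the mass threshold is $f(\varepsilon)=1-\tfrac{1}{2}\varepsilon(\varepsilon+1)$: the intersection of the two common-belief events lies in $C^{1-g(\varepsilon_{1,2}+\varepsilon_{3,2})}\left(\hat{T}_{g(\varepsilon_{1,2}+\varepsilon_{3,2})}(P^{1},P^{3})\right)$ because $g(\varepsilon_{1,2})+g(\varepsilon_{3,2})\leq g(\varepsilon_{1,2}+\varepsilon_{3,2})$, its $P^{2}$-mass is at least $f(\varepsilon_{1,2})+f(\varepsilon_{3,2})-1$, and that bound is shifted to $P^{1}$ and $P^{3}$ with an extra loss $\underline{\varepsilon}=\min\{g(\varepsilon_{1,2}),g(\varepsilon_{3,2})\}$ via a measurable near-identification of supports on the event; the particular $f$ and $g$ are chosen precisely so the accumulated losses fit, giving an \emph{exact} triangle inequality for $d^{f,g}$ while generating the same neighborhoods as $d^{ACK}$. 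Without either this reparametrization or a proof that $d^{ACK}$ itself satisfies a uniform $K$-relaxed triangle inequality, your step (b) is a conjecture and the Frink construction has nothing to act on.

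Property (c) is likewise left open, and your diagnosis of it diverges from the paper: you assert that support equality ``is not enough'' and call for an unproved quantitative merging estimate forcing weak closeness of the laws, whereas the paper's proof concludes from $d^{f,g}(P,P')=0$ that $\mathrm{supp}(P)=\mathrm{supp}(P')$ and then invokes the non-redundancy/labelling structure of $\mathcal{P}$ (the conditionals are pinned down by $\tau_{i}^{*}=P_{i}(\tau_{i})$) to get $P=P'$. Whichever version one prefers, you prove neither. In sum, the proposal correctly locates the difficulty (chaining common-belief mass estimates through a middle prior) and the easy set-theoretic part, but supplies neither of the two estimates that constitute the actual proof, so it has a genuine gap rather than being an alternative route.
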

The proof (in the Appendix) constructs a metric that generates the
same topology. Because the topology is metrizable, it is characterized
by the convergent sequences.
\begin{defn}
A sequence $(P^{k})_{k}$ in $\mathcal{P}$ ACK-converges to $P$
if and only if $d^{ACK}\left(P^{k},P\right)\rightarrow0$.
\end{defn}
It is useful to consider what the topology looks like in some special
cases. If there is only one payoff relevant state (i.e., $|\Theta|=1$),
then there is a unique universal state, where there is common knowledge
of the state. If there is only one player, then an information structure
is given by a distribution over common first order beliefs about $\Theta$,
and the topology reduces to the weak topology on $\Delta\left(\Delta\left(\Theta\right)\right)$.
Similarly, if there are many players but types are perfectly correlated.
If types are independent, then a canonical information structure is
given by a distribution in $\Delta(\Delta(\Theta)^{I})$, and the
topology again reduces to the weak topology. If we restrict to information
structures with a fixed, finite supports, then an information structure
is a probability distribution on a finite set and the topology reduces
to the weak topology. So in order for the ACK topology to be interesting,
there must be at least two states, at least two players, an unbounded
number of types and neither independence or perfect correlation. 

\subsection{Continuity of Equilibrium Outcomes}

We now define continuity of equilibrium outcomes (for BIBCE). For
every $\varepsilon>0$ and $(\mathcal{G},P)$, let $\mathcal{B}^{\varepsilon}(\mathcal{G},P)$
denote the collection of all $\varepsilon$-BIBCE under $(\mathcal{G},P)$.
Define the set of outcomes (i.e., elements of $\Delta\left(A\times\Theta\right)$)
that are induced by a $\varepsilon$-BIBCE of $\left(\text{\ensuremath{\mathcal{G}}},P\right)$
\[
\mathcal{O}^{\varepsilon}(\mathcal{G},P):=\left\{ \nu_{\sigma}:\sigma\in\mathcal{B}^{\varepsilon}(\mathcal{G},P)\right\} .
\]
We write $\mathcal{O}_{\varepsilon}\left(\text{\ensuremath{\mathcal{G}}},P\right)$
for the set of outcomes that are within $\varepsilon$ of outcomes
induced by an $\varepsilon$-BIBCE of $\left(\text{\ensuremath{\mathcal{G}}},P\right)$,
so 
\[
\mathcal{O_{\varepsilon}}\left(\text{\ensuremath{\mathcal{G}}},P\right):=\left\{ v\in\Delta\left(A\times\Theta\right):\exists\ \sigma\in\mathcal{B^{\varepsilon}}(\mathcal{G},P)\text{ s.t. }||v-v_{\sigma}||_{2}<\varepsilon\right\} ,
\]
where $||\cdot||_{2}$ is the Euclidean norm on outcomes. Now we will
say that two information structures are strategically close for base
game $\mathcal{G}$ if the sets of BIBCE outcomes are close. 
\begin{defn}
(Strategic Distance) For every $P,P'\in\mathcal{P}$ and base game
$\mathcal{G}$, let

\[
d^{*}\left(P,P'|\mathcal{G}\right):=\inf\left\{ \varepsilon\geq0:\begin{array}{c}
\mathcal{O}^{0}(\mathcal{G},P)\subseteq\mathcal{O}_{\varepsilon}(\mathcal{G},P')\\
\mathcal{O}^{0}(\mathcal{G},P')\subseteq\mathcal{O}_{\varepsilon}(\mathcal{G},P)
\end{array}\right\} .
\]
\end{defn}
A topology on $\mathcal{P}$ generates continuity of equilibrium outcomes
if for every $\varepsilon>0$, every base game $\mathcal{G}$ and
every $P\in\mathcal{P}$, the set $\left\{ P'\in\mathcal{P}:d^{*}\left(P,P'|\mathcal{G}\right)<\varepsilon\right\} $
is open. 

\subsection{Main Result}

The following is our main result. 

\begin{thm}
\label{thm: main}The approximate common knowledge topology is the
coarsest topology on $\mathcal{P}$ that gives rise to continuity
of equilibrium outcomes.
\end{thm}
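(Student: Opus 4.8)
# Proof Proposal

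The plan is to prove Theorem \ref{thm: main} by establishing two halves: (a) the ACK topology \emph{generates} continuity of equilibrium outcomes (i.e., $d^{ACK}$-convergence implies strategic convergence for every base game $\mathcal{G}$), and (b) the ACK topology is \emph{no finer than necessary}, meaning that if a topology $\mathcal{O}$ generates continuity of equilibrium outcomes, then ACK-convergence is implied by $\mathcal{O}$-convergence — equivalently, for each $P$ and $\varepsilon>0$ there is a base game $\mathcal{G}$ and $\delta>0$ such that $d^*(P,P'\mid\mathcal{G})<\delta$ forces $d^{ACK}(P,P')<\varepsilon$. Since the ACK topology is metrizable (Proposition \ref{prop:The-ACK-topology}), both directions can be phrased in terms of sequences.

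For direction (a), suppose $P^k \to P$ in the ACK topology. Fix a base game $\mathcal{G}$ and an exact BIBCE $\sigma$ of $(\mathcal{G},P)$ with induced outcome $\nu_\sigma$; I must produce, for large $k$, an $\varepsilon$-BIBCE $\sigma^k$ of $(\mathcal{G},P^k)$ whose outcome is within $\varepsilon$ of $\nu_\sigma$. The construction transplants $\sigma$ across the common approximate-common-knowledge region: on the set $\hat T_{\varepsilon}(P^k,P)$, each state of $P^k$ is $\varepsilon$-close to a state in $\mathrm{supp}(P)$, so one defines $\sigma^k$ at $\omega'$ by copying the recommendation $\sigma$ uses at a nearby $\omega\in\mathrm{supp}(P)$ (measurably — via a selection), and defines $\sigma^k$ arbitrarily (say, a fixed belief-invariant profile) off a common-$p$-belief event. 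The key estimates are: (i) approximate common knowledge that beliefs are $\varepsilon$-close means the interim expected payoff computations change by $O(\varepsilon)$ — here I would invoke the standard fact (Monderer–Samet) that on a common $p$-belief event, a player's beliefs about that event, and hence about the ``transplanted'' continuation, are within $O(1-p)$ of what obedience requires; (ii) belief-invariance is preserved because it is a pointwise marginal condition and the first-order beliefs are $\varepsilon$-close; (iii) the induced outcome changes by $O(\varepsilon)$ because $P^k$ and $P$ agree on the prior $\mu$ over $\Theta$ and both put mass $\geq 1-\varepsilon$ on the transplant region. Continuity of $u_i$ (it is bounded by $M$ and the action sets are finite) turns the $\varepsilon$-closeness of hierarchies into $\varepsilon'$-closeness of obedience slacks.

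For direction (b), I argue the contrapositive via a separating base game. If $d^{ACK}(P^k,P)\not\to 0$, then along a subsequence there is a fixed $\varepsilon_0>0$ such that (WLOG) $P^k\big(C^{1-\varepsilon_0}(\hat T_{\varepsilon_0}(P^k,P))\big)<1-\varepsilon_0$ — that is, a non-negligible $P^k$-probability event on which it is \emph{not} approximate common knowledge that hierarchies resemble something in $\mathrm{supp}(P)$. The task is to design a single base game $\mathcal{G}$ and exhibit an equilibrium outcome of one structure that no near-equilibrium of the other can approximate. The natural device is a coordination/``reporting'' game in the spirit of Rubinstein's email game and Weinstein–Yildiz: players are asked to report (a discretization of) their own hierarchy, with payoffs rewarding matching a commonly-$p$-believed profile, so that BIBCE outcomes are pinned down by where common $p$-belief of the relevant hierarchy-region holds. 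Because common $p$-belief propagates along the belief operator, a discrepancy in $C^{1-\varepsilon_0}(\hat T_{\varepsilon_0})$ under $P^k$ versus $P$ translates into a discrepancy in the realized report distribution, hence in the outcome in $\Delta(A\times\Theta)$, that is bounded below independently of $k$. Making this quantitative — choosing the game's action set (a fine enough discretization of hierarchies) and the reward structure so that the outcome gap is uniformly $\gtrsim\varepsilon_0$ while remaining a valid finite base game with payoffs in $[-M,M]$ — is the crux.

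I expect the main obstacle to be direction (b), and specifically the passage from ``common $p$-belief of the closeness event fails with non-trivial probability'' to ``there is a uniformly large strategic gap in $\Delta(A\times\Theta)$.'' The difficulties are: the closeness event $\hat T_{\varepsilon}(P,P')$ is itself $\varepsilon$-dependent, so the failure of common $p$-belief must be leveraged at a \emph{fixed} scale; one must handle the fact that BIBCE allows payoff-irrelevant correlation, which could in principle let the ``wrong'' structure mimic outcomes — this is why the reporting game must reward matching in a way that correlation cannot fake, exploiting that redundancies are excluded from $\mathcal{P}$ (types \emph{are} hierarchies); and one must ensure the separating game is a single fixed game that works along the whole subsequence, which likely requires a compactness/diagonalization argument over the (product-topology-compact) space of hierarchies, or a careful a priori choice of discretization scale tied to $\varepsilon_0$. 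Direction (a), by contrast, is the ``softer'' inclusion and should follow from the Monderer–Samet common-$p$-belief machinery plus the measurable-selection transplant, though verifying measurability of $\sigma^k$ and the outcome estimate under possibly uncountable supports will require care.
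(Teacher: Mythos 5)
Your two-part decomposition matches the paper's (sufficiency via Proposition \ref{prop: sufficiency}, necessity via Proposition \ref{prop: necessity-1}), but both halves as written have genuine gaps. In direction (a), the step ``defines $\sigma^k$ arbitrarily (say, a fixed belief-invariant profile) off a common-$p$-belief event'' fails because $\varepsilon$-obedience in this paper is an \emph{interim, almost-sure} condition: it must hold for the types of $P^k$ that lie off the event $C^{1-\varepsilon}(\hat{T}_{\varepsilon}(P^k,P))$ as well, and an arbitrary fixed profile will generically violate it for them. The paper handles this by first extending $\sigma$ to the overlap region through a finite grid and conditional averaging (Lemma \ref{claim:GridPart}, the map $\zeta_{\delta}$ and the $\delta$-extension $\sigma_{\delta}$), and then invoking existence of BIBCE (Proposition \ref{prop:(Existence-of-Equilibria)}) for an \emph{auxiliary} game whose payoffs are the true payoffs for low-confidence types and reward conformity with $\sigma_{\delta}$ for high-confidence types; this yields a decision rule that agrees with $\sigma_{\delta}$ on the ACK event and is exactly obedient off it, so the combined rule is $6M\delta$-obedient everywhere. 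Your transplant-by-nearby-state idea is in the right spirit for the on-event part, but without something like the auxiliary-game step the off-event types are unaccounted for.

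In direction (b), the proposal to ``reward matching a commonly-$p$-believed profile'' runs into the constraint that payoffs may depend only on $(a,\theta)$, not on types: the infection region is not payoff-measurable, so you cannot directly pay players for being (or believing they are) in it, and this is exactly why the Monderer--Samet/Kajii--Morris route of putting a dominant action on the infecting event is unavailable here. The paper's construction has two components that your sketch conflates: first an iterated scoring-rule game (Lemma \ref{claim_GridGame}) making it uniquely $\varepsilon$-rationalizable to report the nearest grid point of one's $m$-order hierarchy, which converts ``my type is in $D_{\varepsilon,P}$'' into an observable action; and second a binary action $a^{*}/a^{**}$ whose payoff is conditioned on one's \emph{own report}, giving a strict incentive for $a^{*}$ to reporters in $\hat{D}_{P,i}$ and a threshold best-reply otherwise, so that $a^{*}$ infects the whole complement of $C^{1-\varepsilon}(\hat{T}_{\varepsilon}(P,P'))$ under $P$ while all of $P'$ can play $a^{**}$. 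The second component is not an optional refinement: as the paper notes, the reporting game alone cannot deliver a uniform outcome gap, since the infecting event may have vanishing ex-ante probability along a sequence with $d^{ACK}$ bounded away from zero; only the infection step amplifies a small-probability discrepancy into an outcome difference of order $\varepsilon$. Your sketch correctly identifies this quantitative passage as the crux but does not supply the mechanism that resolves it.
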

We will first sketch the main ideas in the proof, distinguishing between
steps that are inherited from earlier work, and how, and which steps
are and must be novel. To establish that approximate common knowledge
convergence implies strategic convergence (i.e. that the ACK topology
generates continuity of strategic outcomes), one must show that if
two information structures are close enough in the approximate common
knowledge topology, then in any game, the set of $\varepsilon$-BIBCE
outcomes will be close. This builds on the arguments in \cite{monderer1996proximity}
and \cite{kajii1998payoff} (which in turn build on an argument in
\cite{monderer1989approximating}). These two papers fix an equilibrium
strategy profile under one information structure and show that there
is an approximate equilibrium in any nearby information structure
where the strategy profile is unchanged on the event where there is
approximate common knowledge that interim beliefs are close, but can
vary arbitrarily elsewhere. This proof strategy relies on a well-defined
notion of holding the strategy profile fixed on the approximate common
knowledge event. In our context, there is no well-defined notion of
holding the strategy profile fixed on the approximate common knowledge
event. In particular, if the two information structures are minimal
(they do not have common knowledge subsets), the supports of distinct
information structures will be disjoint, as illustrated in Figure
\ref{fig:Sets-,-and}. So instead we will continuously extend the
equilibrium decision rule $\sigma$ under one information structure
$P$ to its $\varepsilon$-support $\text{supp}_{\varepsilon}(P)$
and thus to the event:
\[
\hat{T}_{\varepsilon}(P,P'):=\text{supp}_{\varepsilon}(P)\cap\text{supp}_{\varepsilon}(P'),
\]
and thus the approximate common knowledge event:
\[
C^{1-\varepsilon}\left(\hat{T}_{\varepsilon}(P,P')\right)\subseteq\hat{T}_{\varepsilon}(P,P').
\]

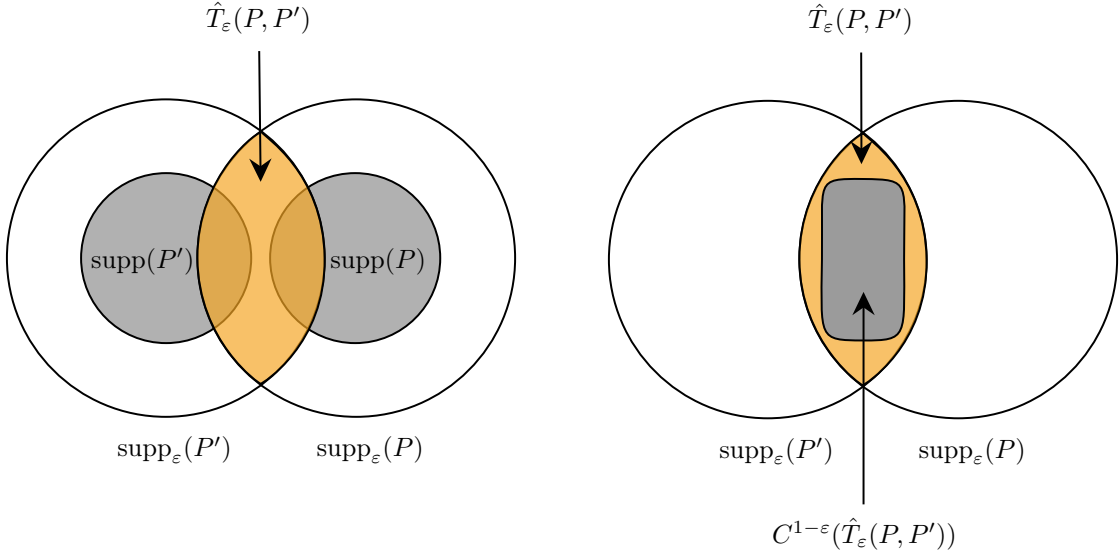
\begin{figure}
\centering{}\begin{center}
\begin{center}
\tikzset{every picture/.style={line width=0.75pt}} 
\begin{tikzpicture}[x=0.75pt,y=0.75pt,yscale=-1.1,xscale=1.1]
\draw  [fill={rgb, 255:red, 155; green, 155; blue, 155 }  ,fill opacity=0.78 ] (113,149.67) .. controls (113,128.13) and (130.46,110.67) .. (152,110.67) .. controls (173.54,110.67) and (191,128.13) .. (191,149.67) .. controls (191,171.21) and (173.54,188.67) .. (152,188.67) .. controls (130.46,188.67) and (113,171.21) .. (113,149.67) -- cycle ;
\draw  [fill={rgb, 255:red, 155; green, 155; blue, 155 }  ,fill opacity=0.78 ] (199.67,149.67) .. controls (199.67,128.13) and (217.13,110.67) .. (238.67,110.67) .. controls (260.21,110.67) and (277.67,128.13) .. (277.67,149.67) .. controls (277.67,171.21) and (260.21,188.67) .. (238.67,188.67) .. controls (217.13,188.67) and (199.67,171.21) .. (199.67,149.67) -- cycle ;
\draw   (79,149.67) .. controls (79,109.44) and (111.61,76.83) .. (151.83,76.83) .. controls (192.06,76.83) and (224.67,109.44) .. (224.67,149.67) .. controls (224.67,189.89) and (192.06,222.5) .. (151.83,222.5) .. controls (111.61,222.5) and (79,189.89) .. (79,149.67) -- cycle ;
\draw   (166.33,149.67) .. controls (166.33,109.44) and (198.94,76.83) .. (239.17,76.83) .. controls (279.39,76.83) and (312,109.44) .. (312,149.67) .. controls (312,189.89) and (279.39,222.5) .. (239.17,222.5) .. controls (198.94,222.5) and (166.33,189.89) .. (166.33,149.67) -- cycle ;
\draw  [fill={rgb, 255:red, 245; green, 166; blue, 35 }  ,fill opacity=0.69 ] (196,92) .. controls (196.67,94) and (224,110) .. (224.67,149.67) .. controls (225.33,189.33) and (196,207.33) .. (195.33,207.33) .. controls (194.67,207.33) and (166.67,188) .. (166.33,149.67) .. controls (166,111.33) and (195.33,90) .. (196,92) -- cycle ;
\draw    (194.67,54.67) -- (195.3,111.67) ; \draw [shift={(195.33,114.67)}, rotate = 269.36] [fill={rgb, 255:red, 0; green, 0; blue, 0 }  ][line width=0.08]  [draw opacity=0] (10.72,-5.15) -- (0,0) -- (10.72,5.15) -- (7.12,0) -- cycle    ;
\draw   (355,150.33) .. controls (355,110.11) and (387.61,77.5) .. (427.83,77.5) .. controls (468.06,77.5) and (500.67,110.11) .. (500.67,150.33) .. controls (500.67,190.56) and (468.06,223.17) .. (427.83,223.17) .. controls (387.61,223.17) and (355,190.56) .. (355,150.33) -- cycle ;
\draw   (442.33,150.33) .. controls (442.33,110.11) and (474.94,77.5) .. (515.17,77.5) .. controls (555.39,77.5) and (588,110.11) .. (588,150.33) .. controls (588,190.56) and (555.39,223.17) .. (515.17,223.17) .. controls (474.94,223.17) and (442.33,190.56) .. (442.33,150.33) -- cycle ;
\draw  [fill={rgb, 255:red, 245; green, 166; blue, 35 }  ,fill opacity=0.69 ] (472,92.67) .. controls (472.67,94.67) and (500,110.67) .. (500.67,150.33) .. controls (501.33,190) and (472,208) .. (471.33,208) .. controls (470.67,208) and (442.67,188.67) .. (442.33,150.33) .. controls (442,112) and (471.33,90.67) .. (472,92.67) -- cycle ;
\draw    (470.67,55.33) -- (470.79,103.57) ; \draw [shift={(470.8,106.57)}, rotate = 269.85] [fill={rgb, 255:red, 0; green, 0; blue, 0 }  ][line width=0.08]  [draw opacity=0] (10.72,-5.15) -- (0,0) -- (10.72,5.15) -- (7.12,0) -- cycle    ;
\draw  [fill={rgb, 255:red, 155; green, 155; blue, 155 }  ,fill opacity=1 ] (471.6,113.37) .. controls (494.23,113.6) and (490.24,114.6) .. (490.34,150.36) .. controls (490.44,186.12) and (492.31,187.54) .. (471.56,187.46) .. controls (450.8,187.37) and (452.22,187.01) .. (452.68,150.36) .. controls (453.14,113.71) and (448.97,113.13) .. (471.6,113.37) -- cycle ;
\draw    (471.75,262.67) -- (471.75,168.17) ; \draw [shift={(471.75,165.17)}, rotate = 90] [fill={rgb, 255:red, 0; green, 0; blue, 0 }  ][line width=0.08]  [draw opacity=0] (10.72,-5.15) -- (0,0) -- (10.72,5.15) -- (7.12,0) -- cycle    ;
\draw (116.33,143.57) node [anchor=north west][inner sep=0.75pt]  [font=\footnotesize]  {$\text{supp}( P')$};
\draw (128.33,230.07) node [anchor=north west][inner sep=0.75pt]  [font=\footnotesize]  {$\text{supp}_{\varepsilon }( P')$};
\draw (219.67,230.07) node [anchor=north west][inner sep=0.75pt]  [font=\footnotesize]  {$\text{supp}_{\varepsilon }( P)$};
\draw (226,143.57) node [anchor=north west][inner sep=0.75pt]  [font=\footnotesize]  {$\text{supp}( P)$};
\draw (169,30.73) node [anchor=north west][inner sep=0.75pt]  [font=\footnotesize]  {$\hat{T}_{\varepsilon }( P,P')$};
\draw (404.33,230.73) node [anchor=north west][inner sep=0.75pt]  [font=\footnotesize]  {$\text{supp}_{\varepsilon }( P')$};  
\draw (495.67,230.73) node [anchor=north west][inner sep=0.75pt]  [font=\footnotesize]  {$\text{supp}_{\varepsilon }( P)$};
\draw (445,31.4) node [anchor=north west][inner sep=0.75pt]  [font=\footnotesize]  {$\hat{T}_{\varepsilon }( P,P')$};
\draw (428.5,267.4) node [anchor=north west][inner sep=0.75pt]  [font=\footnotesize]  {$C^{1-\varepsilon }(\hat{T}_{\varepsilon }( P,P'))$};
\end{tikzpicture}
\par\end{center}
\par\end{center}\caption{\label{fig:Sets-,-and}Sets $\hat{T}_{\varepsilon}(P,P')$,$C^{1-\varepsilon}(\hat{T}_{\varepsilon}(P,P'))$
and $\varepsilon$-supports of $P$ and $P'$.}
\end{figure}

This is one place where we are exploiting properties of the product
topology on types to show that the extended decision rule allows us
to find an approximate equilibrium under $P'$ where the continuous
extension of $\sigma$ on $C^{1-\varepsilon}\left(\hat{T}_{\varepsilon}(P,P')\right)$
is held fixed. This argument goes through with any refinement of the
product topology as we will discuss later.

To establish that strategic convergence implies approximate common
knowledge convergence, it is enough to show that, for any two information
structures that are not close in the approximate common knowledge
topology, one can construct a game where an equilibrium outcome under
one information structure is not close to any approximate equilibrium
under the other information structure. \cite{monderer1996proximity}
and \cite{kajii1998payoff} do this by showing that if two information
structures $P$ and $P'$ are not close, there is an ``infecting''
event $D^{0}$ under one of the information structures, say $P\text{, }$such
that there is no approximate common knowledge event on the complement
of $C^{1-\varepsilon}\left(\hat{T}_{\varepsilon}(P,P')\right)$. One
can then construct a binary coordination game in the spirit of the
email game of \cite{rubinstein89email} where there is a unique equilibrium
under $P$ because there is a dominant strategy on the ``infecting''
event, but there are multiple equilibria under $P'$. This strategy
is not available to us because we cannot assume that there is a dominant
strategy on the ``infecting'' event because payoffs must be measurable
with respect to the payoff states $\Theta$. 

Instead, we show that if two information structures are not close,
there exists an ``infecting'' event $D^{m}$ that is measurable
with respect to $m$-order beliefs. For $m$ large enough, this event
has the property that some types in the support of $P$ that are not
in $\hat{T}_{\varepsilon}(P,P')$ are closer to any element in $D^{m}$
than to any element in $\hat{T}_{\varepsilon}(P,P')$. Hence, for
$m$ large enough, some types can be excluded from $C^{1-\varepsilon}\left(\hat{T}_{\varepsilon}(P,P')\right)$
based on their beliefs on a $m$-order-measurable event. From event
$D_{\varepsilon}^{0}=D^{m}$ we obtain a cover $(D_{\varepsilon}^{n})_{n\in\mathbb{N}}$
of the complement of $C^{1-\varepsilon}\left(\hat{T}_{\varepsilon}(P,P')\right)$
recursively, where 
\[
D_{\varepsilon}^{n}=\text{supp}(P)\setminus B^{1-\varepsilon}(D_{\varepsilon}^{n-1}),
\]
for all $n\geq1$. Now we can construct a first game where players
are incentivized to announce their approximate $m$-th order beliefs
on a finite grid, (for $m$ large enough chosen as a function of $\varepsilon$)
and build an email game style binary action coordination game on top
of that. For the first component of the game, we can construct a game
with an iterated scoring rule with the property that it is $\varepsilon$-rationalizable
for players to truthfully report finite-order beliefs, from a finite
grid which are closest in $d_{\Pi}$. This scoring rule is also used
in \cite{dekel2006topologies} and \cite{gossner20value}. This game
alone cannot be used to induce different outcomes in $P$ and $P'$
since on the set $\hat{T}_{\varepsilon}(P,P')$, reporting the same
approximate finite-order beliefs is an $\varepsilon$-BIBCE for the
types in both priors. As we cannot a priori rule out the possibility
of $\hat{T}_{\varepsilon}(P,P')$ containing the support of $P'$
the scoring rule is not suitable for separating outcomes. Players
therefore also choose an additional action: Either action zero or
action one. No matter what additional action is chosen by the opponent,
action one is the unique, strict best-reply for players who reported
themselves to be in $D_{\varepsilon}^{0}$. All other types will match
action one if they believe with probability at least $\varepsilon$
that their opponent also chose action one. Hence iterative deletion
of dominated strategies implies that action one is played on $D_{\varepsilon}^{n}$.
Figure \ref{fig:Infection-argument.} illustrates the region $D_{\varepsilon}^{0}$
and the role of the scoring rule. Type profiles in the red region
will play action one and an action represented by a circle in the
right panel of the Figure. No type in the support of $P'$ will play
an action corresponding to a type in $D_{\varepsilon}^{0}$. Action
one will infect all type profiles in the orange shaded region who
are also in the support of $P$ but will not infect types in the support
of $P'$. 
\begin{center}
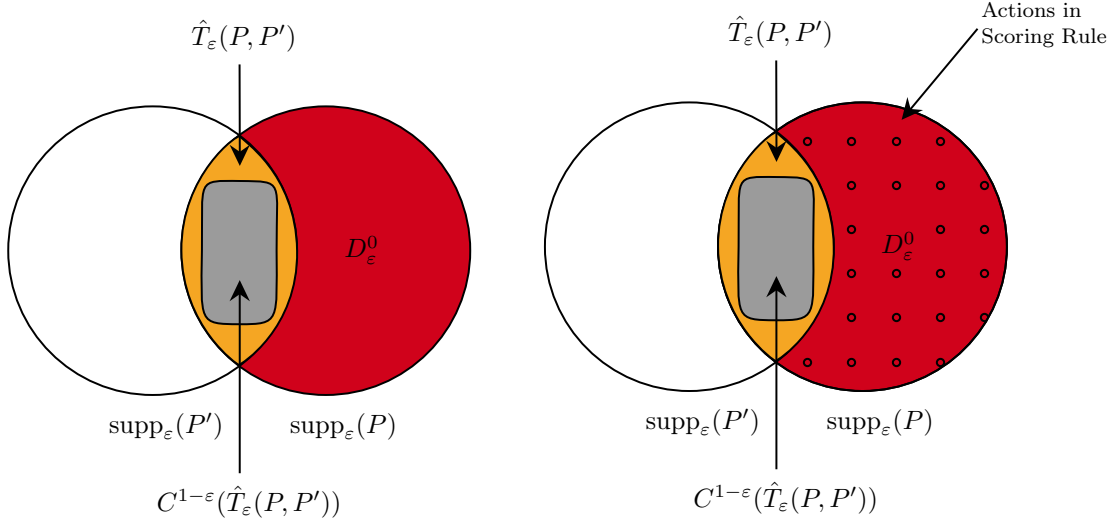
\begin{figure}
\begin{centering}
\begin{center}
  \tikzset{ pattern size/.store in=\mcSize,  pattern size = 5pt, pattern thickness/.store in=\mcThickness,  pattern thickness = 0.3pt, pattern radius/.store in=\mcRadius,  pattern radius = 1pt} \makeatletter \pgfutil@ifundefined{pgf@pattern@given by name@_5r8oj7tka}{ \makeatletter \pgfdeclarepatternformonly[\mcRadius,\mcThickness,\mcSize]{_5r8oj7tka} {\pgfpoint{-0.5*\mcSize}{-0.5*\mcSize}} {\pgfpoint{0.5*\mcSize}{0.5*\mcSize}} {\pgfpoint{\mcSize}{\mcSize}} { \pgfsetcolor{\tikz@pattern@color} \pgfsetlinewidth{\mcThickness} \pgfpathcircle\pgfpointorigin{\mcRadius} \pgfusepath{stroke} }} \makeatother \tikzset{every picture/.style={line width=0.75pt}} 
\begin{tikzpicture}[x=0.75pt,y=0.75pt,yscale=-1,xscale=1]
\draw  [fill={rgb, 255:red, 208; green, 2; blue, 27 }  ,fill opacity=1 ] (200.33,146.33) .. controls (200.33,106.11) and (232.94,73.5) .. (273.17,73.5) .. controls (313.39,73.5) and (346,106.11) .. (346,146.33) .. controls (346,186.56) and (313.39,219.17) .. (273.17,219.17) .. controls (232.94,219.17) and (200.33,186.56) .. (200.33,146.33) -- cycle ; \draw   (113,146.33) .. controls (113,106.11) and (145.61,73.5) .. (185.83,73.5) .. controls (226.06,73.5) and (258.67,106.11) .. (258.67,146.33) .. controls (258.67,186.56) and (226.06,219.17) .. (185.83,219.17) .. controls (145.61,219.17) and (113,186.56) .. (113,146.33) -- cycle ; \draw  [fill={rgb, 255:red, 245; green, 166; blue, 35 }  ,fill opacity=1 ] (230,88.67) .. controls (230.67,90.67) and (258,106.67) .. (258.67,146.33) .. controls (259.33,186) and (230,204) .. (229.33,204) .. controls (228.67,204) and (200.67,184.67) .. (200.33,146.33) .. controls (200,108) and (229.33,86.67) .. (230,88.67) -- cycle ; \draw    (229.67,52.33) -- (229.79,100.57) ; \draw [shift={(229.8,103.57)}, rotate = 269.85] [fill={rgb, 255:red, 0; green, 0; blue, 0 }  ][line width=0.08]  [draw opacity=0] (10.72,-5.15) -- (0,0) -- (10.72,5.15) -- (7.12,0) -- cycle    ; \draw  [fill={rgb, 255:red, 155; green, 155; blue, 155 }  ,fill opacity=1 ] (230,111.17) .. controls (252.63,111.41) and (248.24,110.6) .. (248.34,146.36) .. controls (248.44,182.12) and (250.31,183.54) .. (229.56,183.46) .. controls (208.8,183.37) and (210.22,183.01) .. (210.68,146.36) .. controls (211.14,109.71) and (207.37,110.94) .. (230,111.17) -- cycle ; \draw    (229.75,258.67) -- (229.75,164.17) ; \draw [shift={(229.75,161.17)}, rotate = 90] [fill={rgb, 255:red, 0; green, 0; blue, 0 }  ][line width=0.08]  [draw opacity=0] (10.72,-5.15) -- (0,0) -- (10.72,5.15) -- (7.12,0) -- cycle    ; \draw  [fill={rgb, 255:red, 208; green, 2; blue, 27 }  ,fill opacity=1 ] (471,144.33) .. controls (471,104.11) and (503.61,71.5) .. (543.83,71.5) .. controls (584.06,71.5) and (616.67,104.11) .. (616.67,144.33) .. controls (616.67,184.56) and (584.06,217.17) .. (543.83,217.17) .. controls (503.61,217.17) and (471,184.56) .. (471,144.33) -- cycle ; \draw  [pattern=_5r8oj7tka,pattern size=16.725pt,pattern thickness=0.75pt,pattern radius=1.35pt, pattern color={rgb, 255:red, 0; green, 0; blue, 0}] (471,144.33) .. controls (471,104.11) and (503.61,71.5) .. (543.83,71.5) .. controls (584.06,71.5) and (616.67,104.11) .. (616.67,144.33) .. controls (616.67,184.56) and (584.06,217.17) .. (543.83,217.17) .. controls (503.61,217.17) and (471,184.56) .. (471,144.33) -- cycle ; \draw   (383.67,144.33) .. controls (383.67,104.11) and (416.28,71.5) .. (456.5,71.5) .. controls (496.72,71.5) and (529.33,104.11) .. (529.33,144.33) .. controls (529.33,184.56) and (496.72,217.17) .. (456.5,217.17) .. controls (416.28,217.17) and (383.67,184.56) .. (383.67,144.33) -- cycle ; \draw  [fill={rgb, 255:red, 245; green, 166; blue, 35 }  ,fill opacity=1 ] (500.67,86.67) .. controls (501.33,88.67) and (528.67,104.67) .. (529.33,144.33) .. controls (530,184) and (500.67,202) .. (500,202) .. controls (499.33,202) and (471.33,182.67) .. (471,144.33) .. controls (470.67,106) and (500,84.67) .. (500.67,86.67) -- cycle ; \draw    (500.33,50.33) -- (500.46,98.57) ; \draw [shift={(500.47,101.57)}, rotate = 269.85] [fill={rgb, 255:red, 0; green, 0; blue, 0 }  ][line width=0.08]  [draw opacity=0] (10.72,-5.15) -- (0,0) -- (10.72,5.15) -- (7.12,0) -- cycle    ; \draw  [fill={rgb, 255:red, 155; green, 155; blue, 155 }  ,fill opacity=1 ] (500.67,109.17) .. controls (523.3,109.41) and (518.9,108.6) .. (519.01,144.36) .. controls (519.11,180.12) and (520.98,181.54) .. (500.22,181.46) .. controls (479.47,181.37) and (480.89,181.01) .. (481.35,144.36) .. controls (481.81,107.71) and (478.03,108.94) .. (500.67,109.17) -- cycle ; \draw    (500.42,256.67) -- (500.42,162.17) ; \draw [shift={(500.42,159.17)}, rotate = 90] [fill={rgb, 255:red, 0; green, 0; blue, 0 }  ][line width=0.08]  [draw opacity=0] (10.72,-5.15) -- (0,0) -- (10.72,5.15) -- (7.12,0) -- cycle    ; \draw    (601.86,34.33) -- (565.49,77.98) ; \draw [shift={(563.57,80.29)}, rotate = 309.8] [fill={rgb, 255:red, 0; green, 0; blue, 0 }  ][line width=0.08]  [draw opacity=0] (10.72,-5.15) -- (0,0) -- (10.72,5.15) -- (7.12,0) -- cycle    ;
\draw (162.33,226.73) node [anchor=north west][inner sep=0.75pt]  [font=\footnotesize]  {$\text{supp}_{\varepsilon }( P')$}; \draw (253.67,226.73) node [anchor=north west][inner sep=0.75pt]  [font=\footnotesize]  {$\text{supp}_{\varepsilon }( P)$}; \draw (204,28.4) node [anchor=north west][inner sep=0.75pt]  [font=\footnotesize]  {$\hat{T}_{\varepsilon }( P,P')$}; \draw (186.5,263.4) node [anchor=north west][inner sep=0.75pt]  [font=\footnotesize]  {$C^{1-\varepsilon }(\hat{T}_{\varepsilon }( P,P'))$}; \draw (281.17,137.4) node [anchor=north west][inner sep=0.75pt]  [font=\footnotesize,color={rgb, 255:red, 0; green, 0; blue, 0 }  ,opacity=1 ]  {$D_{\varepsilon }^0$}; \draw (433,224.73) node [anchor=north west][inner sep=0.75pt]  [font=\footnotesize]  {$\text{supp}_{\varepsilon }( P')$}; \draw (524.33,224.73) node [anchor=north west][inner sep=0.75pt]  [font=\footnotesize]  {$\text{supp}_{\varepsilon }( P)$}; \draw (474.67,26.4) node [anchor=north west][inner sep=0.75pt]  [font=\footnotesize]  {$\hat{T}_{\varepsilon }( P,P')$}; \draw (457.17,261.4) node [anchor=north west][inner sep=0.75pt]  [font=\footnotesize]  {$C^{1-\varepsilon }(\hat{T}_{\varepsilon }( P,P'))$}; \draw  [draw opacity=0][fill={rgb, 255:red, 208; green, 2; blue, 27 }  ,fill opacity=1 ]  (548.83,131) -- (572.83,131) -- (572.83,152) -- (548.83,152) -- cycle  ; \draw (551.83,135.4) node [anchor=north west][inner sep=0.75pt]  [font=\footnotesize,color={rgb, 255:red, 0; green, 0; blue, 0 }  ,opacity=1 ]  {$D_{\varepsilon }^0$}; \draw (602.43,19.76) node [anchor=north west][inner sep=0.75pt]  [font=\scriptsize] [align=left] {Actions in\\ Scoring Rule};
\end{tikzpicture} 
\par\end{center}
\par\end{centering}
\caption{\label{fig:Infection-argument.}Infection argument.}
\end{figure}
\par\end{center}

Hence, in every $\varepsilon$-BIBCE the types in orange and red shaded
regions who are in the support of $P$ will play action $1$. There
is a $\varepsilon$-BIBCE where all types play action $0$ under $P'$.
This establishes that the sets of outcomes of the priors are at least
$\varepsilon$ apart. In this second part of the proof, we exploit
properties of the product topology to ensure that we can cover the
red region $D_{\varepsilon}^{0}$ with a finite grid and there is
a finite game where players find it $\varepsilon$-optimal to report
the closest element in the grid. 

We now report two Lemmas before proceeding with the formal proof.
Let $\mathcal{N}_{\delta}(E):=\bigcup_{\hat{\omega}\in E}\left\{ \omega\in\Theta\times\mathcal{T}:d_{\Pi}(\hat{\omega},\omega)<\delta\right\} $
denote the $\delta$-ball around an event $E\in\mathscr{B}$. 
\begin{lem}
\label{claim:GridPart}For every $\varepsilon>0$ and any event $E\in\mathscr{B}$,
there is a finite event $G_{\varepsilon}\subseteq\Omega\setminus\mathcal{N}_{\varepsilon}(E)$
so that for every $\omega\in\Omega,$ 
\[
\min_{\omega_{g}\in G_{\varepsilon}}d_{\Pi}(\omega,\omega_{g})<\varepsilon\implies\omega\in\Omega\setminus E
\]
 and 
\[
\omega\in\Omega\setminus\mathcal{N}_{\varepsilon}(E)\implies\min_{\omega_{g}\in G_{\varepsilon}}d_{\Pi}(\omega,\omega_{g})<\varepsilon.
\]
\end{lem}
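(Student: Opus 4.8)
The plan is to reduce the statement to the total boundedness of the universal state space, so that the only structural input needed is that $\Omega=\Theta\times\mathcal{T}$ is a compact metric space. Here $\Theta$ is finite and, by \cite{mertens1985formulation}, the universal type space $\mathcal{T}$ with the product topology is a closed subset of a countable product of the compact finite-order spaces $\mathcal{T}^m$, hence compact; since $d_\Pi$ metrizes this topology, $(\Omega,d_\Pi)$ is compact. Given $\varepsilon>0$ and $E\in\mathscr{B}$, I would first observe that $\mathcal{N}_\varepsilon(E)=\bigcup_{\hat\omega\in E}\{\omega:d_\Pi(\hat\omega,\omega)<\varepsilon\}$ is a union of open balls, hence open, so that $K:=\Omega\setminus\mathcal{N}_\varepsilon(E)$ is closed and therefore compact. (If $K=\emptyset$, take $G_\varepsilon=\emptyset$: both displayed implications are then vacuous, so assume $K\neq\emptyset$.)

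Next I would extract $G_\varepsilon$ from a finite subcover. The family $\{\mathcal{N}_\varepsilon(\omega):\omega\in K\}$ of $\varepsilon$-balls centered at points of $K$ is an open cover of the compact set $K$, so it admits a finite subcover; let $\omega_1,\dots,\omega_n\in K$ be the corresponding centers and set $G_\varepsilon:=\{\omega_1,\dots,\omega_n\}$. By construction $G_\varepsilon$ is a finite (hence Borel) subset of $K=\Omega\setminus\mathcal{N}_\varepsilon(E)$, which is exactly the required containment.

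I would then verify the two implications directly. The second one is immediate: if $\omega\in\Omega\setminus\mathcal{N}_\varepsilon(E)=K$, then $\omega$ lies in one of the covering balls, say $\omega\in\mathcal{N}_\varepsilon(\omega_j)$, so $\min_{\omega_g\in G_\varepsilon}d_\Pi(\omega,\omega_g)\le d_\Pi(\omega,\omega_j)<\varepsilon$. For the first, suppose $d_\Pi(\omega,\omega_g)<\varepsilon$ for some $\omega_g\in G_\varepsilon$ while $\omega\in E$; by symmetry of $d_\Pi$ this puts $\omega_g$ in the $\varepsilon$-ball around $\omega$, so $\omega_g\in\mathcal{N}_\varepsilon(E)$, contradicting $\omega_g\in G_\varepsilon\subseteq\Omega\setminus\mathcal{N}_\varepsilon(E)$; hence $\omega\in\Omega\setminus E$.

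I do not expect a genuine obstacle here: the entire content of the lemma is the compactness (equivalently, total boundedness) of $(\Omega,d_\Pi)$, and this is the single place where the product topology on hierarchies is used. The only points that need mild care are choosing the centers of the finite subcover \emph{inside} $K$ (so that $G_\varepsilon$ automatically avoids $\mathcal{N}_\varepsilon(E)$ and the first implication comes for free) and noting that $\mathcal{N}_\varepsilon(E)$ is open (so $K$ is compact); all the inequalities can then be kept strict without loss.
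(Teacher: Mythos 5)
Your proof is correct, but it takes a genuinely different route from the paper's. You reduce the lemma to the compactness of $(\Omega,d_{\Pi})$ (finite $\Theta$ plus the Mertens--Zamir construction makes $\mathcal{T}$ compact in the product topology) and extract $G_{\varepsilon}$ as the centers of a finite subcover of $K=\Omega\setminus\mathcal{N}_{\varepsilon}(E)$ by $\varepsilon$-balls centered \emph{in} $K$; the two implications then follow immediately, and placing the centers inside $K$ makes the second implication clean. The paper instead proves total boundedness constructively: it discretizes each finite-order belief simplex on a uniform $1/z$-grid, builds grids $A_{i}^{m,z}$ of $m$-th order belief vectors recursively, chooses $m,z$ so that truncation error plus grid error is below $\varepsilon$ in $d_{\Pi}$, and sets $G_{\varepsilon}$ to be the grid points lying outside $\mathcal{N}_{\varepsilon}(E)$. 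Your argument is more elementary and arguably tidier as a proof of the lemma as stated (the paper's choice of grid points, which need not lie in $K$ near the boundary of $\mathcal{N}_{\varepsilon}(E)$, makes its verification of the second implication more delicate than yours). What the paper's explicit construction buys, however, is extra structure that the abstract subcover does not provide: the grid elements are finite-order belief vectors on a known finite lattice, and exactly this set $A^{m,z}$ is reused as the action set of the iterated scoring-rule game in Lemma \ref{claim_GridGame} and in the separation argument of Proposition \ref{prop: necessity-1}, where measurability of the grid with respect to $m$-th order beliefs is essential. So your proof suffices for the statement, but the paper's version is tailored to its later use.
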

Lemma \ref{claim:GridPart} implies that no type in the support of
$P'$ will be $\varepsilon$-close (in the product topology) to a
grid element in the red shaded region $D_{\varepsilon}^{0}$ and every
type in the support of $P$ that is also in the red shaded region
is $\varepsilon$-close (in the product topology) to a grid element
in the red shaded region $D_{\varepsilon}^{0}$. Lemma \ref{claim_GridGame}
below then establishes that there exists a finite game, where it is
uniquely $\varepsilon$-rationalizable for every type to report the
closest grid element (which is also less than $\varepsilon$-close).
\begin{lem}
\label{claim_GridGame}For every $\varepsilon>0$, there is a finite
set of action profiles $A\subseteq\mathcal{T}$ and payoffs $u_{i}^{\varepsilon}\colon A\times\Theta\to\mathbb{R}$
for every player $i$, so that for every information structure $P$,
every $\varepsilon$-obedient decision rule $\sigma$ satisfies 
\[
\sigma(a|\theta,\tau)>0\implies a\in\arg\min_{a\in A}d_{\Pi}(\left(\theta,\tau\right),\left(\theta,a\right))\text{ and }d_{\Pi}(\left(\theta,\tau\right),\left(\theta,a\right))<\varepsilon.
\]
\end{lem}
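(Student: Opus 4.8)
The plan is to take $(A_i,u_i^{\varepsilon})$ to be a finite truncation of the iterated proper scoring rule of \cite{dekel2006topologies} and \cite{gossner20value}, tuned so that its conclusion is uniform over all information structures and survives the interim $\varepsilon$-slack in obedience. Since each $d_w^{n}$ may be taken bounded by $1$, first choose a depth $N=N(\varepsilon)$ with $\sum_{n>N}\eta^{n}<\varepsilon/4$, so that only the first $N$ orders of beliefs need to be elicited. Each truncation space $\mathcal{T}_i^{N}$ is compact metrizable; for a small $\delta>0$ to be fixed later, let $A_i$ be a finite $\delta$-net of $\mathcal{T}_i^{N}$ with each point completed to a hierarchy in $\mathcal{T}_i$ (every coherent truncation extends), and set $A:=\prod_i A_i\subseteq\mathcal{T}$. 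For each order $n$ let $S_n$ be a bounded, strictly proper scoring rule on $\Delta(\Theta\times\mathcal{T}_{-i}^{n-1})$ whose associated divergence is a monotone function of $d_w^{n}$ --- for instance a kernel score, with $d_w^n$ taken to be the induced metric, so that the divergence equals $(d_w^{n})^{2}$ --- and put
\[
u_i^{\varepsilon}(a_i,a_{-i},\theta):=\sum_{n=1}^{N}\lambda_n\,S_n\!\bigl(a_i^{\,n},(\theta,a_{-i}^{\,n-1})\bigr),
\]
where $a_i^{\,n}$ is the order-$n$ belief of the reported hierarchy $a_i\in A_i$, $a_{-i}^{\,n-1}=(a_j^{\,n-1})_{j\ne i}$, and the weights decay fast: $\lambda_n$ is large relative to $\varepsilon+\sum_{k>n}\lambda_k\|S_k\|_{\infty}$, so that order-$n$ incentives lexicographically dominate all higher orders.

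The verification is an induction on $n=1,\dots,N$ showing that for every $P$ and every $\varepsilon$-obedient $\sigma$, $P$-almost surely the recommended profile $a$ has each component $a_i^{\,n}$ equal to the grid value of $A_i$ that is $d_w^{n}$-closest (up to a slack that enlarging $\lambda_n$ drives below any prescribed bound) to the order-$n$ belief player $i$ optimally reports, which forces $d_w^{n}(a_i^{\,n},\tau_i^{\,n})\le\rho_n$ for a controlled constant $\rho_n=\rho_{n-1}+O(\delta)$ with $\rho_0=0$. The base case is clean because $S_1$ does not depend on $a_{-i}$: interim $\varepsilon$-obedience at the recommendation $a_i$, the weight gap, and strict properness of $S_1$ force $a_i^{\,1}$ to be the grid value closest to $\tau_i^{\,1}$. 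For the inductive step, by the hypothesis the opponents' reports $a_{-i}^{\,n-1}$ are almost surely their respective closest grid values, so conditional on the recommendation player $i$'s belief over $(\theta,a_{-i}^{\,n-1})$ is the image of $\tau_i^{\,n}$ under a map moving each opponent coordinate by at most $O(\delta)$ in $d_w$; this image is within $\rho_{n-1}+O(\delta)$ of $\tau_i^{\,n}$, and $\varepsilon$-obedience plus the weight gap force $a_i^{\,n}$ to be the closest grid value to it, hence within $\rho_n$ of $\tau_i^{\,n}$. Summing over orders and the tail, $d_\Pi\bigl((\theta,\tau),(\theta,a)\bigr)\le\sum_{n\le N}\eta^{n}\rho_n+\sum_{n>N}\eta^{n}<\varepsilon$ once $\delta$ is small and the $\lambda_n$ are well-separated; and since $a$ is, order by order, a closest grid value, $a$ minimizes $d_\Pi\bigl((\theta,\tau),(\theta,\cdot)\bigr)$ over $A$.

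I expect two points to demand the real work. The first is the error bookkeeping in the induction: because opponents report grid points rather than their true beliefs, at order $n$ player $i$ optimally reports a closest approximation to a \emph{coarsened} $\tau_i^{\,n}$, and one must check that these coarsening errors accumulate only additively --- using boundedness of the scores, a Prokhorov-type estimate for the pushforward perturbation, and finiteness of the grid --- while the weight schedule $(\lambda_n)$ is simultaneously large enough, relative to $\varepsilon$ and to all higher-order terms, to activate the $\varepsilon$-obedience constraint at each order, yet $\sum_n\eta^n\rho_n$ stays below $\varepsilon$. The second, subtler point is upgrading ``closest grid value at each order, up to slack'' to the literal statement $a\in\arg\min_{a'\in A}d_\Pi((\theta,\tau),(\theta,a'))$: for a type near a Voronoi boundary the scoring loss from the second-closest grid value vanishes, so no finite game --- whose payoffs cannot depend on $\tau$ --- can exclude it by interim $\varepsilon$-obedience alone. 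This is handled by taking $d_\Pi$ to be the discounted sum of the $d_w^{n}$ induced by the $S_n$ (legitimate, since, as the paper notes, the ACK topology does not depend on the choice of $d_\Pi$), so that ``scoring-optimal'' and ``$d_\Pi$-closest'' coincide order by order, and by calibrating the scoring rule to $\varepsilon/2$ so that the residual slack fits inside the ``$<\varepsilon$'' conclusion; in any case the two consequences actually used downstream --- that the recommendation is $d_\Pi$-close to $(\theta,\tau)$, and that it is (essentially) the grid value nearest to it --- survive intact.
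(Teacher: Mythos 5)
Your proposal is correct and takes essentially the same route as the paper: the paper's proof simply plugs the finite grid $A^{m,z}$ from Lemma \ref{claim:GridPart} into the iterated quadratic scoring rule of \cite{dekel2006topologies} and cites their result that the grid profile closest in $d_{Weak}^{m}$ is the unique interim (correlated) rationalizable report, which is precisely the order-by-order induction you reconstruct (your lexicographic weights are an alternative to the additive separability of the quadratic rule, and your generic proper scoring rule specializes to it). Your explicit bookkeeping of the $\varepsilon$-slack and the Voronoi-boundary caveat about the literal $\arg\min$ statement is, if anything, more careful than the paper's two-line citation, and correctly notes that only the approximate versions are used downstream.
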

We now state two known results which we also use in our proof. The
existence of BIBCE was established by \cite{stinchcombe2011correlated},
who also established it for general information structures, with redundancies
as we later report in Proposition \ref{prop: existence redundant}. 
\begin{prop}
\label{prop:(Existence-of-Equilibria)}(Existence of BIBCE) \textup{There
exists a BIBCE for every} \textup{$\left(\text{\ensuremath{\mathcal{G}}},P\right)$.}
\end{prop}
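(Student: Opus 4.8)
The plan is to reduce to the existence theorem of \cite{stinchcombe2011correlated}. First I would verify that $(\text{\ensuremath{\mathcal{G}}},P)$ is a game of the class that theorem covers: $\Theta$ is finite, each $A_i$ is finite, $P$ is a Borel probability measure on $\Omega$, and, with the product topology, $\mathcal{T}=\prod_{i}\mathcal{T}_i$ is compact metrizable --- each $\mathcal{T}_i$ is the inverse limit of the compact metrizable spaces $\mathcal{T}_i^m$ (with the weak topology), hence a closed subspace of a compact metrizable product --- so that $\Omega=\Theta\times\mathcal{T}$ is compact metrizable. Applying that theorem then yields a belief-invariant decision rule that is $0$-obedient, i.e.\ a BIBCE. (As a warm-up and to cover the finite-support case directly: when $P$ has finite support, $(\text{\ensuremath{\mathcal{G}}},P)$ is a finite Bayesian game, which has a Bayes Nash equilibrium by Nash's theorem applied to its finite agent-normal form; and any Bayes Nash equilibrium is belief-invariant, since with independent randomization across players the marginal on $A_i$ conditional on $((\tau_i,\tau_{-i}),\theta)$ is just $\sigma_i(a_i\mid\tau_i)$ --- and it is $0$-obedient by definition --- so BNE $\subseteq$ BIBCE.)

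For the general case I would recall the structure of the argument so the reader can see why compactness of $\Omega$ is what is needed. Identify a decision rule $\sigma$ with the outcome measure $\sigma\circ P\in\Delta(A\times\Omega)$; the set $\mathcal{M}$ of such measures with $\Omega$-marginal $P$ is convex and weak$^*$-compact, and inside it both belief invariance and $0$-obedience are families of linear constraints, obedience being, for each player $i$, actions $a_i,a_i'$, and nonnegative bounded $\mathcal{T}_i$-measurable weight $\phi$,
\[
\int \boldsymbol{1}_{[a_i\text{ recommended to }i]}\,\phi(\tau_i)\,\bigl(u_i(a_i,a_{-i},\theta)-u_i(a_i',a_{-i},\theta)\bigr)\, d(\sigma\circ P)\ \geq\ 0,
\]
and belief invariance being conditional independence of $i$'s recommendation from $(\theta,\tau_{-i})$ given $\tau_i$. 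The set $\mathcal{M}^{\mathrm{BI}}$ of belief-invariant rules is nonempty (e.g.\ a fixed action profile) and, appropriately topologized, compact and convex. If no rule in $\mathcal{M}^{\mathrm{BI}}$ were $0$-obedient, then in the zero-sum game where a mediator picks $m\in\mathcal{M}^{\mathrm{BI}}$ and an adversary picks $(i,a_i,a_i',\phi)$ with payoff the corresponding expected deviation gain, the adversary could guarantee a strictly positive value; but the payoff is bilinear and jointly continuous and $\mathcal{M}^{\mathrm{BI}}$ is compact convex, so the minimax theorem would let the mediator hold the adversary to $0$ by interim best-responding player by player against any fixed adversary mixture --- a contradiction, yielding a $0$-obedient belief-invariant rule.

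The hard part, and the reason I would ultimately just cite \cite{stinchcombe2011correlated} rather than reprove this, is the topology on decision rules: weak$^*$ convergence of outcome measures on $A\times\Theta\times\mathcal{T}$ does not preserve belief invariance, because conditional independence is not weak$^*$-closed, so one cannot simply work in $\Delta(A\times\Omega)$. The fix is to view decision rules as transition kernels $\Omega\to\Delta(A)$ and use a topology (stable / Young-measure convergence, refining weak$^*$ while pinning the $\Omega$-marginal to $P$) in which finiteness of $A$ and continuity of the $u_i$ make $\mathcal{M}^{\mathrm{BI}}$ compact and make the obedience constraints closed under limits; establishing exactly this, together with the measurable-selection step needed for the mediator's best response, is the substantive content supplied by Stinchcombe, and it is what makes the compactness of $\Omega$ (which we verified above) the crucial input.
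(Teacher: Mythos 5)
Your overall strategy---reduce to the existence theorem of \cite{stinchcombe2011correlated}---is exactly the route the paper takes, but your account of what that citation delivers leaves a real gap in the reduction. Theorem A of \cite{stinchcombe2011correlated} gives existence of a correlated equilibrium of the agent normal form: its equilibrium object is a probability measure $\nu$ over \emph{profiles of behavior strategies} $b_i\colon\Omega_i\to\Delta(A_i)$, not a belief-invariant decision rule on $\Omega$. The substantive content of the paper's proof is precisely the translation you skip: set $\Omega_i=\Theta\times\mathcal{T}_i$ with $\mathcal{F}_i$ generated by the projection onto $\mathcal{T}_i$ and $U_i(\theta,\tau)(a)=u_i(a,\theta)$, and then show that the induced decision rule $\sigma_\nu(a\mid\theta,\tau)=\int_{\mathbb{B}}\prod_i b_i(\tau_i)(a_i)\,d\nu(b)$ is belief invariant (its $A_i$-marginal $\int_{\mathbb{B}_i} b_i(\tau_i)(a_i)\,d\nu(b_i)$ depends on $\tau_i$ only, because $\nu$ mixes over products of individually $\mathcal{F}_i$-measurable strategies) and inherits obedience from the correlated-equilibrium condition, hence is an outcome-equivalent BIBCE. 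Asserting that the theorem ``yields a belief-invariant decision rule that is $0$-obedient'' elides this step; your minimax/Young-measure sketch gestures at why such a statement should be true, but you explicitly decline to carry it out and it is not the form in which the cited result is stated.

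Relatedly, the hypothesis you spend your verification on---compactness and metrizability of $\Omega$---is not what the cited theorem requires. Stinchcombe's framework is purely measure-theoretic: a countably additive prior on a product of measurable spaces together with payoffs satisfying $\int_{\Omega}\|U_i(\omega)\|_\infty\,dP(\omega)<\infty$; the check the paper performs is this integrability condition (immediate here because $\Theta$ and $A$ are finite and payoffs are bounded and do not depend on types), together with the measurability structure, not compactness. Your compactness claim for $\mathcal{T}_i$ is true (it follows from \cite{mertens1985formulation}), so nothing false is asserted, but presenting it as ``the crucial input'' misdescribes the reduction---indeed the same theorem is invoked for general redundant information structures whose signal spaces are arbitrary measurable spaces (Proposition \ref{prop: existence redundant}). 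Your finite-support warm-up (Nash equilibrium of the agent normal form, plus the observation that conditionally independent randomization is belief invariant) is correct, but it covers only the easy case and does not substitute for the missing translation step in the general argument.
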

Theorem A in \cite{stinchcombe2011correlated} has established the
existence of a more demanding notion of incomplete information correlated
equilibrium, when one looks at correlated equilibrium of the agent
normal form of the game of incomplete information. \cite{forges93five}
called these ``agent normal form correlated equilibria.'' Since
every agent normal form correlated equilibrium induces an outcome
equivalent BIBCE, existence of equilibria is guaranteed. We first
establish that ACK-convergence is sufficient for continuity of equilibrium
outcomes. 
\begin{prop}
The set $\Omega^{0}\subseteq\Omega$ is dense in the product topology
on $\Omega$. 
\end{prop}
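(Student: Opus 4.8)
The plan is to reduce the statement to a one-player, finitely-many-levels density claim and then appeal to the classical density of finite-type hierarchies. Recall that $\Omega=\Theta\times\mathcal{T}$ with $\mathcal{T}=\prod_{i\in I}\mathcal{T}_i$, that $\Theta$ carries the discrete topology, and that (identifying a hierarchy $\tau_i$ with its belief $\tau_i^{*}$) $\Omega^{0}=\big(\prod_{i\in I}\mathcal{T}_i^{0}\big)\times\Theta$, where $\mathcal{T}_i^{0}$ denotes the set of player-$i$ hierarchies lying in the support of some finite information structure. Since a finite product of dense subsets is dense in the product topology (and $\Theta$ is trivially dense in itself), it suffices to show that $\mathcal{T}_i^{0}$ is dense in $\mathcal{T}_i$ for every $i$. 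Fixing $\tau_i\in\mathcal{T}_i$ and $\delta>0$, the metric $d_\Pi$ lets me pick $N$ with $\sum_{n>N}\eta^{n}c_{n}<\delta/2$, where $c_{n}$ is the $d_w^{n}$-diameter of the compact space $\mathcal{T}_i^{n}$, and $\varepsilon>0$ with $\varepsilon\sum_{n\le N}\eta^{n}<\delta/2$; then it is enough to produce a finite information structure whose support contains a player-$i$ hierarchy that agrees with $\tau_i$ to within $d_w^{n}$-distance $\varepsilon$ on each of the first $N$ levels.

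I would build such a structure by induction on $N$. At the first level, realize $\tau_i^{1}\in\Delta(\Theta)$ exactly by a distinguished type of player $i$ that occurs with small probability $\alpha>0$ and has conditional law $\tau_i^{1}$ on $\Theta$, while a complementary type of $i$ carries the residual law $(\mu-\alpha\tau_i^{1})/(1-\alpha)$, which is a probability measure because $\mu$ has full support; this keeps the marginal on $\Theta$ equal to $\mu$. To pass from level $n-1$ to level $n$, refine the (still finite) common prior by adjoining conditionally independent auxiliary signals for all players, chosen so that the distinguished type's belief about the others' level-$(n-1)$ hierarchies matches the relevant marginal of $\tau_i^{n}$ to within $\varepsilon$, again absorbing the residual into low-probability complementary types and holding the marginal on $\Theta$ fixed at $\mu$. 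After $N$ steps the distinguished type is within $d_\Pi$-distance $\delta$ of $\tau_i$, which completes the argument.

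The cleanest write-up probably cites the classical fact that hierarchies generated by finite type spaces are dense in the universal type space (see \cite{mertens1985formulation}) and then notes that these approximants can be taken to arise from a common prior with marginal $\mu$---the content of finite-order approximation arguments---so that, combined with the reduction above, the proposition follows. In any case, the only substantive difficulty is the finite common-prior construction: one must extend the approximation from one belief level to the next while simultaneously keeping the type space finite, preserving the fixed full-support marginal $\mu$, and controlling how errors at the lower levels constrain and propagate into the higher levels, so that all of the first $N$ levels are matched at once. The topological reductions above are routine.
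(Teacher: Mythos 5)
Your topological reduction is fine: $\Omega^{0}=\Theta\times\prod_{i}\mathcal{T}_{i}^{0}$ is a product set, so it is indeed enough to show each $\mathcal{T}_{i}^{0}$ is dense in $\mathcal{T}_{i}$, and the truncation-to-level-$N$ step with the metric $d_{\Pi}$ is routine. The genuine gap is that the remaining step---approximating an arbitrary coherent hierarchy, to arbitrary finite order, by a type in the support of a \emph{finite common prior} information structure with $\Theta$-marginal $\mu$---is the entire content of the proposition, and your argument does not establish it. Your inductive step (``adjoin conditionally independent auxiliary signals\dots chosen so that the distinguished type's belief about the others' level-$(n-1)$ hierarchies matches the relevant marginal of $\tau_{i}^{n}$'') asserts exactly what has to be proved: under a common prior you cannot freely prescribe one type's conditional beliefs about the other players' hierarchies and the state, because every other type's beliefs must simultaneously be the conditionals of the same finite prior with marginal $\mu$; moreover, enlarging the type space at level $n$ changes the other players' hierarchies and hence feeds back into the distinguished type's already-constructed lower-order beliefs, so the induction as stated does not preserve what was matched earlier. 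You flag this difficulty yourself, but flagging it is not resolving it, and conditional independence of the auxiliary signals is in general not enough to hit a prescribed joint law over $(\theta,\tau_{-i}^{n-1})$.

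Your fallback citation also does not close the gap: the denseness of finite types in \cite{mertens1985formulation} carries no common prior restriction, and the upgrade ``these approximants can be taken to arise from a common prior'' is precisely the nontrivial theorem needed---it is the main result of \cite{lipman2003finite} (common priors have no finite-order implications), which is what the paper's proof invokes. So either cite \cite{lipman2003finite} directly (as the paper does, with only the minor extra remark that the full-support marginal can be fixed at $\mu$), or carry out a genuine version of the finite common-prior construction; as written, the proposal proves only the easy reduction.
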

\begin{proof}
\cite{lipman2003finite} shows that finite common prior types are
dense in the product topology on $\Omega$. 
\end{proof}
\begin{prop}
\label{prop: sufficiency}For every base game $\mathcal{G}$ and every
$\varepsilon>0$, there is $\delta>0$ so that if \linebreak $d^{ACK}(P,P')<\delta$
then $d^{*}(P,P'|\mathcal{G})<\varepsilon$. 
\end{prop}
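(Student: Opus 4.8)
The plan is to prove this directly. By the symmetry of $d^{ACK}$ and of the two set inclusions defining $d^{*}(P,P'\mid\mathcal G)$, it suffices to find, for a suitable $\delta$, and given any $0$-BIBCE $\sigma$ of $(\mathcal G,P)$, an $\varepsilon$-BIBCE $\sigma'$ of $(\mathcal G,P')$ with $\|\nu_{\sigma'}-\nu_{\sigma}\|_{2}<\varepsilon$. Fix a small auxiliary parameter $\eta=\eta(\varepsilon,M,|A|)$ to be calibrated at the end and set $\delta\le\eta$. If $d^{ACK}(P,P')<\delta$, pick a witness $\lambda<\eta$ so that $F:=C^{1-\lambda}\bigl(\hat T_{\lambda}(P,P')\bigr)$ satisfies $P(F)\ge 1-\lambda$ and $P'(F)\ge 1-\lambda$. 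Two features of $F$ will be used repeatedly: (i) $F\subseteq\hat T_{\lambda}(P,P')\subseteq\mathrm{supp}_{\lambda}(P)\cap\mathrm{supp}_{\lambda}(P')$, so every $\omega\in F$ lies within $2\lambda$ in $d_{\Pi}$ of both $\mathrm{supp}(P)$ and $\mathrm{supp}(P')$; and (ii) $F\subseteq B^{1-\lambda}(F)$, i.e.\ at every $\omega\in F$ each player assigns probability at least $1-\lambda$ to $F_{-i}$.

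First I would replace the measurable decision rule $\sigma$ by a continuous, belief-invariant one. Since $\Theta$ is finite and each $\mathcal T^{m}$ is compact metrizable in the weak topology, $\Omega$ is compact metrizable; so by Lusin's theorem and a Tietze-type extension (or by approximating $\sigma$ through finite types, using that $\Omega^{0}$ is dense in $\Omega$) there is a continuous belief-invariant $\hat\sigma\colon\Omega\to\Delta(A)$ that agrees with $\sigma$ off a set of $P$-probability less than $\lambda$ and is $\lambda$-obedient $P$-a.s.; compactness makes $\hat\sigma$ uniformly continuous, with a modulus $\rho(\cdot)\downarrow 0$. This $\hat\sigma$ is the ``continuous extension'': it is defined on all of $\Omega$, in particular on $\hat T_{\lambda}(P,P')\supseteq F$.

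Next I would define $\sigma'$ on $\mathrm{supp}(P')$ to coincide with $\hat\sigma$ on $F$ and, off $F$, to be a belief-invariant measurable selection of $\lambda$-best responses against $\sigma'_{-i}$; existence follows from the fixed-point argument underlying Proposition \ref{prop:(Existence-of-Equilibria)} applied to the game in which behavior on $F$ is exogenously pinned to $\hat\sigma$ (the bound $P'(F)\ge 1-\lambda$ makes that pinned behavior almost-surely consistent with belief-invariance, a point I would handle by passing to the conditional structure $P'(\cdot\mid F)$). To verify $\varepsilon$-obedience under $P'$, fix a player $i$ and a $P'$-typical $\tau_{i}\in F$ with belief $\beta=P'_{i}(\tau_{i})=\tau_{i}^{*}$. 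Since $\beta(F_{-i})\ge 1-\lambda$, the obedience integral splits into a part over $F_{-i}$, where opponents follow $\hat\sigma_{-i}$, plus a remainder of size at most $2M|A_{-i}|\lambda$. For the main part I would use the crucial product-topology fact: the interim gain from $a_{i}$ versus $a_{i}'$ against opponents playing $\hat\sigma_{-i}$, namely $\int\sum_{a_{-i}}\bigl(u_{i}(a_{i},a_{-i},\theta)-u_{i}(a_{i}',a_{-i},\theta)\bigr)\hat\sigma_{-i}(a_{-i}\mid\tau_{-i})\,d\tau_{i}^{*}(\theta,\tau_{-i})$, is a continuous function of $\tau_{i}$ in the product topology, because the integrand is bounded continuous and $\tau_{i}\mapsto\tau_{i}^{*}$ is a homeomorphism. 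As $\tau_{i}$ is within $2\lambda$ of a point of $\mathrm{supp}(P)$ at which $\hat\sigma$ is $\lambda$-obedient against $\hat\sigma_{-i}$, continuity transfers $\bigl(\lambda+\rho(2\lambda)\bigr)$-obedience to $\tau_{i}$; collecting the error terms and shrinking $\eta$ yields $\varepsilon$-obedience. Types outside $F$ are $\lambda$-obedient by construction, and $\sigma'$ is belief-invariant by construction. Finally, since $P(F),P'(F)\ge 1-\lambda$, $\sigma'$ and $\sigma$ both coincide with $\hat\sigma$ on $F$ up to the $\lambda$-measure discrepancy, and $\hat\sigma$ is uniformly continuous while $F$ sits within $2\lambda$ of both supports, a weak-convergence estimate forces $\|\nu_{\sigma'}-\nu_{\sigma}\|_{2}\to 0$ as $\eta\to 0$, giving the claim after calibration.

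The hard part will be the continuous extension together with the obedience transfer. Because $\mathrm{supp}(P)$ and $\mathrm{supp}(P')$ are generically disjoint (in particular when $P,P'$ are minimal), there is no canonical way to ``hold $\sigma$ fixed'' on the approximate common knowledge event as in \cite{monderer1996proximity} and \cite{kajii1998payoff}; one must extend $\sigma$ to a genuinely new function on $\hat T_{\lambda}(P,P')$ and show it remains almost obedient for the $P'$-types. This is precisely where the product topology is indispensable: the continuous opponents' rule $\hat\sigma_{-i}$ depends on the hierarchy continuously, hence is insensitive to the (possibly large) discrepancies between the true $P'$-belief at a type and the $P$-belief at a nearby type on deep/tail events, which is what lets the obedience inequalities valid under $P$ survive the passage to $P'$. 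The secondary technical nuisance is preserving belief-invariance at the interface between $F$ and its complement, which I would dispatch via the conditional information structure $P'(\cdot\mid F)$ and the existence result.
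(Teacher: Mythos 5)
There is a genuine gap at the heart of your construction: the step that replaces the BIBCE $\sigma$ by a continuous, belief-invariant $\hat\sigma$ that ``agrees with $\sigma$ off a set of $P$-probability less than $\lambda$ and is $\lambda$-obedient $P$-a.s.'' Lusin/Tietze only controls the \emph{ex ante} $P$-measure of the set where $\hat\sigma$ differs from $\sigma$; obedience in this paper is an \emph{interim}, almost-sure requirement, and modifying $\sigma$ on a set of small ex ante probability can leave a positive-measure set of types whose conditional beliefs concentrate on the modified region and who are therefore arbitrarily far from optimal there (up to $2M$), not $\lambda$-obedient. Nothing in Lusin/Tietze preserves belief-invariance either, and you do not supply an argument. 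A second, related soft spot: you invoke $\lambda$-obedience of $\hat\sigma$ at a specific point of $\mathrm{supp}(P)$ that is $2\lambda$-close to the given $P'$-type, but points of the topological support can lie in the exceptional null set of an a.s. statement; this could be repaired by continuity of the interim gain \emph{if} the first step were sound, but as written both legs of the transfer rest on the unproven approximation. (Also, your interim-gain integrand uses $\hat\sigma_{-i}(a_{-i}\mid\tau_{-i})$, i.e.\ treats opponents' play as conditionally independent; for a correlated rule the obedience integral must use the joint $\hat\sigma(a\mid\theta,\tau)$, though that is fixable.) Finally, the concluding claim that $\|\nu_{\sigma'}-\nu_{\sigma}\|_{2}$ is small ``by a weak-convergence estimate'' is asserted rather than derived: $P(F),P'(F)\ge 1-\lambda$ together with $F$ lying near both supports does not by itself make $\int\hat\sigma\,dP$ and $\int\hat\sigma\,dP'$ close; one needs the common-$p$-belief structure to compare the two priors cell by cell.

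The paper's proof resolves exactly the difficulty your Lusin step papers over, and does so differently: it extends $\sigma$ not by a topological continuous extension but by conditional averaging over the cells of a finite grid $G_{\delta}$ (Lemma \ref{claim:GridPart}), setting $\sigma_{\delta}(\omega)=\int\sigma(\omega')\,P(\mathrm{d}\omega'\mid\zeta_{\delta}(\omega))$ on $\hat\Omega_{\delta}=C^{1-\delta}\bigl(\hat T_{\delta}(P,P')\bigr)$. Because this is a conditional expectation over positive-$P$-probability cells, integrating the a.s. obedience inequalities over each cell shows it stays $2M\delta$-obedient \emph{interim}, and because $\sigma_{\delta}$ is $\zeta_{\delta}$-measurable and finite-valued, product-topology closeness of a $P'$-type to a $P$-type translates into closeness of their beliefs over the finitely many cells, which is what transfers approximate obedience to $P'$. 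The completion off $\hat\Omega_{\delta}$ is then handled with an auxiliary game and the existence result (as you also propose, in rougher form), and outcome closeness is again argued cell by cell on the grid. If you want to salvage your route, you would need to prove the missing approximation lemma — that any BIBCE admits a belief-invariant modification, continuous (or finite-order measurable), that is a.s. interim $\lambda$-obedient — and that lemma is essentially as hard as the grid construction the paper uses.
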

\begin{proof}
Let $\Omega_{P}:=\cap_{\epsilon>0}\text{supp}_{\varepsilon}(P)$,
$\Omega_{P'}:=\cap_{\epsilon>0}\text{supp}_{\varepsilon}(P')$ and
$\hat{\Omega}_{\delta}:=C^{1-\delta}(\hat{T}_{\delta}(P,P'))$. It
is without loss of generality to assume that $\Omega_{P}\cap\Omega_{P'}=\varnothing$.
Fix a base game with payoffs given by $u_{i}\colon A\times\Theta\to[-M,M]$
for each player $i$. Suppose $d^{ACK}(P,P')<\delta=\frac{\varepsilon}{6M(u)}$,
where $M(u):=\max\left\{ M,|A\times\Theta|\right\} <\infty$ and recall
that game $$2M\geq\max_{i}\sup_{a_{i},a_{i}',a_{-i},\theta}|u_{i}(a_{i},a_{-i},\theta)-u_{i}(a_{i}',a_{-i},\theta)|.$$
By Lemma \ref{claim:GridPart}, there is a finite set $G_{\delta}$
so that $\hat{\Omega}_{\delta}\subseteq\mathcal{N}_{\varepsilon}(G_{\delta})$
and so for every $\omega\in\hat{\Omega}_{\delta}$, $\text{\ensuremath{\min_{g\in G_{\delta}}d_{\Pi}(\omega,g)\leq\delta}}.$
Let $\zeta_{\delta}\colon\hat{\Omega}_{\delta}\to G_{\delta}$ be
any map satisfying $d_{\Pi}(\omega,\zeta_{\delta}(\omega))\leq\delta,\ \forall\ \omega\in\hat{\Omega}_{\delta}$.
We show that every decision rule which is obedient under $P$, admits
a decision rule arbitrarily close to it which is $6M\delta$-obedient
under $P'$. Consider a decision rule $\sigma\colon\Omega_{P}\to\Delta(A)$
satisfying for every $\tau_{i}$ and for every $a_{i}'\in A_{i}$,
\[
\int_{\Omega_{P}}\sum_{a\in A}\Delta u_{i}(a,a_{i}',\omega_{\theta})\sigma(a|\omega)P(\text{d}\omega|\tau_{i})\geq0,
\]
where $\Delta u_{i}(a,a_{i}',\theta)=u_{i}(a,\theta)-u_{i}(a_{i}',a_{-i},\theta)$
and $\omega_{\theta}=\text{proj}_{\Theta}(\omega)$. We have existence
of such a decision rule from Proposition \ref{prop:(Existence-of-Equilibria)}.
We now extend $\sigma$ to $\hat{\Omega}_{P}$ as follows: For every
$\omega\in\hat{\Omega}_{\delta}\cup\Omega_{P}$, define the $\delta$-extension
\[
\sigma_{\delta}(\omega):=\begin{cases}
\int\sigma(\omega')\ P(\text{d}\omega'|\zeta_{\delta}(\omega)) & \text{ if }\omega\in\hat{\Omega}_{\delta},\\
\sigma(\omega) & \text{ if }\omega\notin\hat{\Omega}_{\delta}.
\end{cases}
\]
For all $i$ and $\tau\in\text{proj}_{\mathcal{T}}(\Omega_{P})$,
\begin{multline*}
\bigg|\int_{\hat{\Omega}_{\delta}\cup\Omega_{P}}\sum_{a\in A}\Delta u_{i}(a,a_{i}',\omega_{\theta})\left(\sigma_{\delta}(a|\omega)-\sigma(a|\omega)\right)\ \tau_{i}(\text{d}\omega)\bigg|\\
\leq\sum_{g\in G_{\delta},a\in A}\Delta u_{i}(a,a_{i}',\omega_{\theta}))\\
\bigg|\left(\int\left(\int\sigma(a|\omega')\ P(\text{d}\omega'|\zeta_{\delta}^{-1}(g),\tau_{i}')\right)\ P(\text{d}\tau_{i}'|\zeta_{\delta}^{-1}(g))\right)-\int\sigma(a|\omega')\ \tau_{i}(\text{d}\omega|\zeta_{\delta}^{-1}(g))\bigg|\tau_{i}(\zeta_{\delta}^{-1}(g)\\
=\sum_{g\in G_{\delta},a\in A}\Delta u_{i}(a,a_{i}',\omega_{\theta})\bigg|\int\left(\int\sigma(a|\omega')\left(\tau_{i}'(\text{d}\omega'|\zeta_{\delta}^{-1}(g))-\tau_{i}(\text{d}\omega'|\zeta_{\delta}^{-1}(g))\right)\right)P(\text{d}\tau_{i}'|\zeta_{\delta}^{-1}(g))\bigg|P(\zeta_{\delta}^{-1}(g)|\tau_{i})\\
\leq2M\delta.
\end{multline*}
Since $\sigma$ is obedient, we conclude that $\sigma_{\delta}$ is
$2M\delta$-obedient. By construction, for every $(\theta',\tau')\in\hat{\Omega}_{\delta}\cap\Omega_{P'}$
there exists $(\theta,\tau)\in\Omega_{P}$ so that $\zeta_{\delta}(\theta',\tau')=\zeta_{\delta}(\theta,\tau)$
and $d_{\Pi}((\theta',\tau'),(\theta,\tau))<\delta$. Since $\sigma_{\delta}$
is $\zeta_{\delta}$-measurable (hence finite valued) we have that
\[
\int_{\hat{\Omega}_{\delta}\cup\Omega_{P}}\sum_{a\in A}\Delta u_{i}(a,a_{i}',\omega_{\theta})\sigma_{\delta}(a|\omega)\left(\text{ d}P(\omega|\tau_{i})-\text{ d}P'(\omega|\tau_{i}')\right)<2M\delta.
\]
$2M\delta$-obedience of $\sigma_{\delta}$ thus implies that 
\[
\begin{array}{cc}
\int_{\hat{\Omega}_{\delta}}\sum_{a\in A}\Delta u_{i}(a,a_{i}',\omega_{\theta})\sigma_{\delta}(a|\omega)\text{ d}P'(\omega|\tau_{i}') & \geq-4M\delta\end{array}.
\]
Moreover, $(\theta',\tau')\in\hat{\Omega}_{\delta}$ implies that
for any measurable $\hat{\sigma}\colon\Omega_{P'}\setminus\hat{\Omega}_{\delta}\to\Delta(A)$,
we have $\int_{\Omega_{P'}\setminus\hat{\Omega}_{\delta}}\sum_{a\in A}\Delta u_{i}(a,a_{i}',\omega_{\theta})\hat{\sigma}(a|\omega)\ P'(\text{d}\omega|\tau_{i}')\geq-2M\delta$
and so 
\begin{equation}
\int_{\Omega_{P'}}\sum_{a\in A}\Delta u_{i}(a,a_{i}',\omega_{\theta})\sigma_{\delta}(a|\omega)\ P'(\text{d}\omega|\tau_{i}')\geq-6M\delta.\label{eq:mainproof2}
\end{equation}
So $\sigma_{\delta}$ satisfies $6M\delta$-obedience under $P'$
if restricted to type profiles in $\hat{\Omega}_{\delta}$. We now
argue that there exists a $6M\delta$-obedient decision rule under
$P'$ that agrees with the extension $\sigma_{\delta}$ on $\hat{\Omega}_{\delta}$.
For every player $i$, let $D_{i}:=\left\{ \tau\in\mathcal{T}:\tau_{i}(\hat{\Omega}_{\delta})\leq1-\delta\right\} $
and note that if $P'(\hat{\Omega}_{\delta})<1$ then $P'(\bigcup_{i\in I}D_{i})>0$. 

Consider the auxiliary payoffs $\tilde{u}\colon\Omega\times A\to\mathbb{R}^{I}$,
defined for every $(\theta,\tau)\in\Omega$ and $a\in A$ as follows
\[
\tilde{u}_{i}((\theta,\tau),a):=\begin{cases}
u_{i}(\theta,a_{-i},a_{i}), & \text{ if }\tau_{i}\left(\hat{\Omega}_{\delta}\right)\leq1-\delta\text{ }\\
\boldsymbol{1}_{\sigma_{\delta}(a|\theta,\tau)>0}, & \text{otherwise.}
\end{cases}
\]
By Proposition \ref{prop:(Existence-of-Equilibria)} we deduce that
the incomplete information game $(\tilde{u},P')$ admits an obedient
decision rule $\bar{\sigma}$ that coincides with $\sigma_{\delta}$
on $\hat{\Omega}_{\delta}$. This induces a $6M\delta$-obedient decision
of $(u,P')$ defined on all of $\Omega_{P'}$: Indeed, by obedience
of $\bar{\sigma}$ for every player $i$ and $(\theta,\tau')\in\Omega_{P'}$
so that $\tau_{i}'(\hat{\Omega}_{\delta})\leq1-\delta$, 
\begin{multline}
\int_{\Omega_{P'}}\sum_{a\in A}\Delta\tilde{u}_{i}(a,a_{i}',\omega)\bar{\sigma}(a|\omega)\ P'(\text{d}\omega|\tau_{i}')=\int_{\Omega_{P'}}\sum_{a\in A}\Delta u_{i}(a,a_{i}',\omega_{\theta})\bar{\sigma}(a|\omega)\ P'(\text{d}\omega|\tau_{i}')\geq0.\label{eq:mainproof3}
\end{multline}
Moreover, for every $(\theta,\tau')\in\Omega_{P'}$ so that $\tau_{i}'(\hat{\Omega}_{\delta})>1-\delta$,
there is $\tilde{\tau}'\in\hat{\Omega}_{\delta}$ so that $\tilde{\tau}'_{i}=\tau{}_{i}'$.
So consider the combined decision rule,
\[
\sigma'(\omega):=\begin{cases}
\sigma_{\delta}(\omega) & ,\text{ if }\omega\in\hat{\Omega}_{\delta}\\
\bar{\sigma}(\omega) & ,\text{ if }\omega\in\Omega_{P'}\setminus\hat{\Omega}_{\delta}.
\end{cases}
\]
Combining (\ref{eq:mainproof2}) and (\ref{eq:mainproof3}) we deduce
that for every $\tau$
\[
\int_{\Omega_{P'}}\sum_{a\in A}\Delta u_{i}(a,a_{i}',\omega_{\theta})\sigma'(a|\omega)\ P'(\text{d}\omega|\tau_{i}')\geq-6M\delta
\]

We will now show that $\nu_{P',\sigma'}$ is close to $\nu_{P,\sigma}$.
For any $(a,\theta)\in A\times\Theta$,

\begin{multline*}
|\nu_{P,\sigma}(a,\theta)-\nu_{P',\sigma'}(a,\theta)|\\
\leq\bigg|\sum_{g\in G_{\delta}}\left(\int\sigma(a|\tau',\theta)\ P(\theta,\text{d}\tau'|g)\right)(P(\theta,\text{d}\tau)-P'(\theta,\text{d}\tau))\bigg|\\
+|P'(\hat{\Omega}_{\delta}|\theta)-P(\hat{\Omega}_{\delta}|\theta)|\leq2\delta.
\end{multline*}

Hence $\sum_{a,\theta}|\nu_{P,\sigma}(a,\theta)-\nu_{P',\hat{\sigma}}(a,\theta)|^{2}<|\Theta\times A|4\delta^{2}$
and so $d^{u}(P,P')<6M(u)\delta$.
\end{proof}
We now establish that failure of ACK-convergence implies a failure
of convergence of equilibrium outcomes.
\begin{prop}
\label{prop: necessity-1}For every $\varepsilon>0$, if $d^{ACK}(P,P')\geq\varepsilon$
then there is a game $\mathcal{G}$ so that $d^{*}(P,P'|\mathcal{G})\geq\varepsilon$. 
\end{prop}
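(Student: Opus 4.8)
The plan is to construct, for any pair $P,P'$ with $d^{ACK}(P,P')\geq\varepsilon$, a single base game $\mathcal{G}$ in which every $0$-BIBCE outcome under one information structure is at least $\varepsilon'$ (for some fixed $\varepsilon'$ depending only on $\varepsilon$, $|\Theta|$, $|I|$) away from \emph{every} $\varepsilon'$-BIBCE outcome under the other. Since $d^{ACK}(P,P')\geq\varepsilon$, WLOG $P\bigl(C^{1-\varepsilon}(\hat T_{\varepsilon}(P,P'))\bigr)<1-\varepsilon$, so the "bad" set $\Omega_P\setminus C^{1-\varepsilon}(\hat T_\varepsilon(P,P'))$ carries at least $\varepsilon$ mass under $P$. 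First I would invoke the claim sketched in the overview: for $m$ large enough (chosen as a function of $\varepsilon$ via the discount-factor metric $d_\Pi$), there is an $m$-th-order-measurable "infecting" event $D^0_\varepsilon=D^m$ inside $\operatorname{supp}(P)\setminus\hat T_\varepsilon(P,P')$ whose elements are strictly $d_\Pi$-closer to each other than to anything in $\hat T_\varepsilon(P,P')$; the point of finite-order measurability is that, unlike the dominance trick of \cite{monderer1996proximity}/\cite{kajii1998payoff}, it can be detected by a game whose payoffs are $\Theta$-measurable once beliefs have been (approximately) reported. Then define the cover $D^n_\varepsilon=\operatorname{supp}(P)\setminus B^{1-\varepsilon}(D^{n-1}_\varepsilon)$ recursively; by definition of $C^{1-\varepsilon}$ the sets $\{D^n_\varepsilon\}_{n}$ cover $\operatorname{supp}(P)\setminus C^{1-\varepsilon}(\hat T_\varepsilon(P,P'))$.

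Next I would build the game in two stacked components. Component one is the iterated scoring rule of \cite{dekel2006topologies}: by Lemma \ref{claim:GridPart} the set $D^0_\varepsilon$ (being bounded away from $\hat T_\varepsilon$) can be covered by a finite grid $G_\varepsilon\subseteq\Omega\setminus\mathcal{N}_\varepsilon(\hat T_\varepsilon(P,P'))$, and by Lemma \ref{claim_GridGame} there is a finite action set $A^{\mathrm{rep}}\subseteq\mathcal{T}$ and payoffs making it uniquely $\varepsilon$-rationalizable for each type to report a grid element within $d_\Pi$-distance $\varepsilon$; in particular no type in $\operatorname{supp}(P')$ reports a grid element lying in $D^0_\varepsilon$, while every type of $P$ actually in $D^0_\varepsilon$ does. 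Component two is an email-game-style binary coordination add-on: each player also picks $a_i\in\{0,1\}$; action $1$ is strictly dominant for any player whose reported grid element lies in $D^0_\varepsilon$ (regardless of the opponent's binary choice), and otherwise action $1$ is a strict best reply whenever a player assigns probability at least $\varepsilon$ to the opponent playing $1$. An iterated-deletion / obedience argument — using that $D^{n}_\varepsilon$ is exactly the set of types not $(1-\varepsilon)$-believing $D^{n-1}_\varepsilon$ — then forces action $1$ on every $D^n_\varepsilon$, hence on all of $\operatorname{supp}(P)\setminus C^{1-\varepsilon}(\hat T_\varepsilon(P,P'))$, in every $\varepsilon'$-BIBCE under $P$ (for suitably small $\varepsilon'$). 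Conversely, under $P'$ there is a $0$-BIBCE in which all types play $0$: no $P'$-type reports into $D^0_\varepsilon$, so no $P'$-type has the dominant incentive, and "everyone plays $0$" is self-consistent.

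Finally I would convert this behavioral difference into a distance between outcome distributions. Under $P$ every $\varepsilon'$-BIBCE puts the marginal probability of "$a=1$" (summed over the relevant action coordinates) at least at $P(\Omega_P\setminus C^{1-\varepsilon}(\hat T_\varepsilon))\geq\varepsilon$ minus an $O(\varepsilon')$ slack from approximate obedience, while the all-$0$ equilibrium under $P'$ puts it at $0$; projecting onto $\Delta(A\times\Theta)$ this yields an $\ell_2$ gap bounded below by a constant multiple of $\varepsilon$, so choosing $\varepsilon'$ small relative to $\varepsilon$ gives $\mathcal{O}^0(\mathcal{G},P')\not\subseteq\mathcal{O}_{\varepsilon'}(\mathcal{G},P)$ and hence $d^*(P,P'|\mathcal{G})\geq\varepsilon'$; a harmless rescaling of $\varepsilon$ at the outset makes $\varepsilon'=\varepsilon$. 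The main obstacle I expect is making the choice of $m$ (equivalently, the grid fineness and the finite-order truncation) uniform and correct: one must show the "infecting" event $D^m$ is genuinely detectable by finitely many reported beliefs \emph{and} that the recursion $D^n_\varepsilon$ covers the complement of the common-$(1-\varepsilon)$-belief event without gaps, all while keeping the scoring-rule slack and the coordination-game slack below the $\varepsilon'$ budget — this is precisely where our argument must depart from the earlier dominance-based infection proofs, since payoffs are forced to be $\Theta$-measurable.
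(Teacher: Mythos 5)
Your proposal is correct in outline and takes essentially the same route as the paper's own proof: the same finite-order "infecting" event $D^{0}_{\varepsilon}$ with the recursion $D^{n}_{\varepsilon}=\operatorname{supp}(P)\setminus B^{1-\varepsilon}(D^{n-1}_{\varepsilon})$, the same two-layer game built from Lemmas \ref{claim:GridPart} and \ref{claim_GridGame} (grid scoring rule) plus a binary email-game component in which reporting into the infecting grid makes the risky action dominant, the same all-safe-action BIBCE under $P'$, and the same translation of the resulting behavioral gap into an outcome gap. The only differences are presentational: you compress the paper's case analysis into a WLOG and leave the constants to a final rescaling, which matches the level of detail of the paper's own argument.
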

\begin{proof}
We now establish that for all $\varepsilon\in(0,1/2)$ so that for
all $P,P'$ satisfy $d^{ACK}(P,P')\geq\varepsilon$, we also have
$d^{*}(P,P'|\hat{\mathcal{G}})\geq\varepsilon$ for some game $\hat{\mathcal{G}}$:
If convergence fails in our metric, then there must be some game on
which ex-ante strategic convergence fails. Then we must find such
a game $\mathcal{\hat{G}}$. The condition $d^{ACK}(P,P')\geq\varepsilon$
means that $P\left(\hat{\Omega}_{\varepsilon}\right)\leq1-\varepsilon$,
$P'\left(\hat{\Omega}_{\varepsilon}\right)\leq1-\varepsilon$ or both.
Suppose that $P(\hat{\Omega}_{\varepsilon})\leq1-\varepsilon$ and
$P'(\hat{\Omega}_{\varepsilon})>1-\varepsilon$. First, note that
$P(\hat{T}_{\varepsilon}(P,P'))<1$. Indeed, if $P(\hat{T}_{\varepsilon}(P,P'))=1$,
we also have that $P(\hat{\Omega}_{\varepsilon})=1$, which is a contradiction.
Let $D_{\varepsilon,P}:=\text{supp}_{\varepsilon}(P)\setminus\hat{T}_{\varepsilon}(P,P')$
and $D_{\varepsilon,P}^{\complement}:=\Omega\setminus\mathcal{N}_{\varepsilon}(D_{\varepsilon,P})$.
From Lemmas \ref{claim:GridPart} and \ref{claim_GridGame} we conclude
that there is $m$ and $z$, an associated game $\hat{\mathcal{G}}=\left(A^{m,z},\left(u_{i}^{m,z}\right)_{i}\right)$
where the finite collection of action profiles takes the form $A^{m,z}=\hat{D}_{\varepsilon,P}\cup\hat{D}_{\varepsilon,P}^{\complement}$,
with $\hat{D}_{\varepsilon,P}\subseteq D_{\varepsilon,P}$ a finite
$\varepsilon$-grid, $\hat{D}_{\varepsilon,P}^{\complement}\subseteq D_{\varepsilon,P}^{\complement}$
a finite $\varepsilon$-grid, and so that for every $\varepsilon$-BIBCE,
$\hat{\sigma}'\in\mathcal{B^{\varepsilon}}(\hat{\mathcal{G}},P')$
and every $\omega\in\Omega_{P'}$, $\hat{\sigma}'(\hat{D}_{\varepsilon,P}|\omega)=0$.
Moreover, for every $\omega\in D_{\varepsilon,P}$ and every $\varepsilon$-BIBCE,
$\hat{\sigma}\in\mathcal{B^{\varepsilon}}(\hat{\mathcal{G}},P)$,
$\hat{\sigma}(\hat{D}_{\varepsilon,P}|\omega)=1$. 

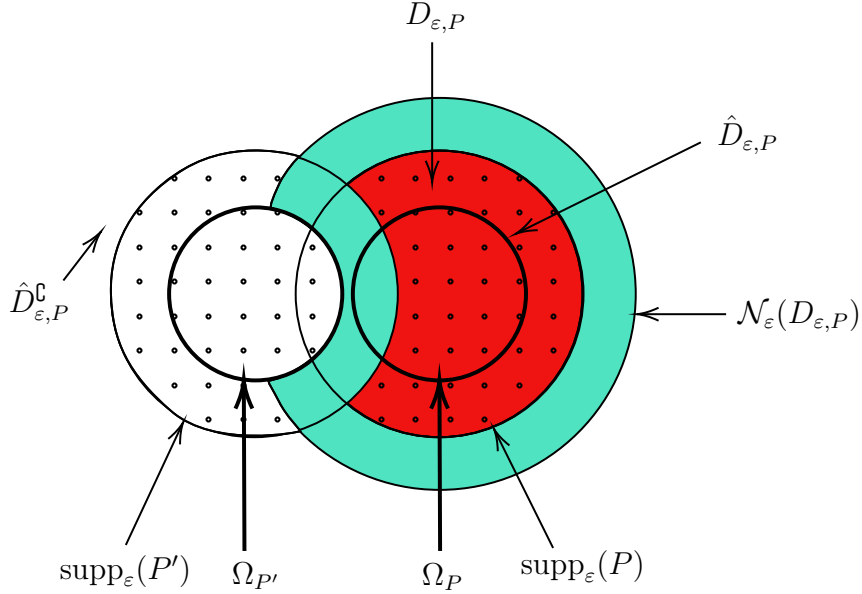
\begin{figure}[H]
\begin{centering}
\begin{center}
    \tikzset{ pattern size/.store in=\mcSize,  pattern size = 5pt, pattern thickness/.store in=\mcThickness,  pattern thickness = 0.3pt, pattern radius/.store in=\mcRadius,  pattern radius = 1pt} \makeatletter \pgfutil@ifundefined{pgf@pattern@name@_0aq1wrn9f}{ \makeatletter \pgfdeclarepatternformonly[\mcRadius,\mcThickness,\mcSize]{_0aq1wrn9f} {\pgfpoint{-0.5*\mcSize}{-0.5*\mcSize}} {\pgfpoint{0.5*\mcSize}{0.5*\mcSize}} {\pgfpoint{\mcSize}{\mcSize}} { \pgfsetcolor{\tikz@pattern@color} \pgfsetlinewidth{\mcThickness} \pgfpathcircle\pgfpointorigin{\mcRadius} \pgfusepath{stroke} }} \makeatother
  \tikzset{ pattern size/.store in=\mcSize,  pattern size = 5pt, pattern thickness/.store in=\mcThickness,  pattern thickness = 0.3pt, pattern radius/.store in=\mcRadius,  pattern radius = 1pt} \makeatletter \pgfutil@ifundefined{pgf@pattern@name@_b9nf8fzgk}{ \makeatletter \pgfdeclarepatternformonly[\mcRadius,\mcThickness,\mcSize]{_b9nf8fzgk} {\pgfpoint{-0.5*\mcSize}{-0.5*\mcSize}} {\pgfpoint{0.5*\mcSize}{0.5*\mcSize}} {\pgfpoint{\mcSize}{\mcSize}} { \pgfsetcolor{\tikz@pattern@color} \pgfsetlinewidth{\mcThickness} \pgfpathcircle\pgfpointorigin{\mcRadius} \pgfusepath{stroke} }} \makeatother
  \tikzset{ pattern size/.store in=\mcSize,  pattern size = 5pt, pattern thickness/.store in=\mcThickness,  pattern thickness = 0.3pt, pattern radius/.store in=\mcRadius,  pattern radius = 1pt} \makeatletter \pgfutil@ifundefined{pgf@pattern@name@_l8zlcy7k8}{ \makeatletter \pgfdeclarepatternformonly[\mcRadius,\mcThickness,\mcSize]{_l8zlcy7k8} {\pgfpoint{-0.5*\mcSize}{-0.5*\mcSize}} {\pgfpoint{0.5*\mcSize}{0.5*\mcSize}} {\pgfpoint{\mcSize}{\mcSize}} { \pgfsetcolor{\tikz@pattern@color} \pgfsetlinewidth{\mcThickness} \pgfpathcircle\pgfpointorigin{\mcRadius} \pgfusepath{stroke} }} \makeatother \tikzset{every picture/.style={line width=0.75pt}}  \begin{tikzpicture}[x=0.75pt,y=0.75pt,yscale=-1,xscale=1]
\draw  [draw opacity=0][pattern=_0aq1wrn9f,pattern size=13.049999999999999pt,pattern thickness=0.75pt,pattern radius=0.75pt, pattern color={rgb, 255:red, 0; green, 0; blue, 0}] (125.75,67.5) -- (423.25,67.5) -- (423.25,315.75) -- (125.75,315.75) -- cycle ; \draw  [fill={rgb, 255:red, 80; green, 227; blue, 194 }  ,fill opacity=1 ] (208.83,193) .. controls (208.83,138.37) and (253.12,94.08) .. (307.75,94.08) .. controls (362.38,94.08) and (406.67,138.37) .. (406.67,193) .. controls (406.67,247.63) and (362.38,291.92) .. (307.75,291.92) .. controls (253.12,291.92) and (208.83,247.63) .. (208.83,193) -- cycle ; \draw  [fill={rgb, 255:red, 255; green, 0; blue, 0 }  ,fill opacity=0.91 ] (235.17,193) .. controls (235.17,153.05) and (267.55,120.67) .. (307.5,120.67) .. controls (347.45,120.67) and (379.83,153.05) .. (379.83,193) .. controls (379.83,232.95) and (347.45,265.33) .. (307.5,265.33) .. controls (267.55,265.33) and (235.17,232.95) .. (235.17,193) -- cycle ; \draw  [pattern=_b9nf8fzgk,pattern size=13.049999999999999pt,pattern thickness=0.75pt,pattern radius=0.75pt, pattern color={rgb, 255:red, 0; green, 0; blue, 0}] (235.67,193) .. controls (235.67,153.05) and (268.05,120.67) .. (308,120.67) .. controls (347.95,120.67) and (380.33,153.05) .. (380.33,193) .. controls (380.33,232.95) and (347.95,265.33) .. (308,265.33) .. controls (268.05,265.33) and (235.67,232.95) .. (235.67,193) -- cycle ; \draw  [fill={rgb, 255:red, 80; green, 227; blue, 194 }  ,fill opacity=1 ] (142,193) .. controls (142,153.05) and (174.38,120.67) .. (214.33,120.67) .. controls (254.28,120.67) and (286.67,153.05) .. (286.67,193) .. controls (286.67,232.95) and (254.28,265.33) .. (214.33,265.33) .. controls (174.38,265.33) and (142,232.95) .. (142,193) -- cycle ; \draw  [line width=1.5]  (264,193) .. controls (264,168.88) and (283.55,149.33) .. (307.67,149.33) .. controls (331.78,149.33) and (351.33,168.88) .. (351.33,193) .. controls (351.33,217.12) and (331.78,236.67) .. (307.67,236.67) .. controls (283.55,236.67) and (264,217.12) .. (264,193) -- cycle ; \draw  [fill={rgb, 255:red, 255; green, 255; blue, 255 }  ,fill opacity=1 ] (237,123.75) .. controls (237.98,123.51) and (205.33,113.67) .. (178.5,130.25) .. controls (151.67,146.83) and (141.33,169.17) .. (142,193) .. controls (142.67,216.83) and (150.5,233.25) .. (172.5,252.25) .. controls (194.5,271.25) and (237.5,262.75) .. (237.5,262.25) .. controls (237.5,261.75) and (230.33,254.5) .. (227,247.83) .. controls (223.67,241.17) and (220.26,236.75) .. (222,236.75) .. controls (223.74,236.75) and (259,226.75) .. (258.67,193) .. controls (258.33,159.25) and (225.17,149.83) .. (223.17,149.83) .. controls (221.17,149.83) and (224.83,141.75) .. (227.17,136.83) .. controls (229.5,131.92) and (236.02,123.99) .. (237,123.75) -- cycle ; \draw  [line width=1.5]  (171.33,193) .. controls (171.33,168.88) and (190.88,149.33) .. (215,149.33) .. controls (239.12,149.33) and (258.67,168.88) .. (258.67,193) .. controls (258.67,217.12) and (239.12,236.67) .. (215,236.67) .. controls (190.88,236.67) and (171.33,217.12) .. (171.33,193) -- cycle ; \draw  [pattern=_l8zlcy7k8,pattern size=13.049999999999999pt,pattern thickness=0.75pt,pattern radius=0.75pt, pattern color={rgb, 255:red, 0; green, 0; blue, 0}] (237,123.75) .. controls (237.98,123.51) and (205.33,113.67) .. (178.5,130.25) .. controls (151.67,146.83) and (141.33,169.17) .. (142,193) .. controls (142.67,216.83) and (150.5,233.25) .. (172.5,252.25) .. controls (194.5,271.25) and (237.5,262.75) .. (237.5,262.25) .. controls (237.5,261.75) and (230.33,254.5) .. (227,247.83) .. controls (223.67,241.17) and (220.26,236.75) .. (222,236.75) .. controls (223.74,236.75) and (259,226.75) .. (258.67,193) .. controls (258.33,159.25) and (225.17,149.83) .. (223.17,149.83) .. controls (221.17,149.83) and (224.83,141.75) .. (227.17,136.83) .. controls (229.5,131.92) and (236.02,123.99) .. (237,123.75) -- cycle ; \draw   (235.17,193) .. controls (235.17,153.05) and (267.55,120.67) .. (307.5,120.67) .. controls (347.45,120.67) and (379.83,153.05) .. (379.83,193) .. controls (379.83,232.95) and (347.45,265.33) .. (307.5,265.33) .. controls (267.55,265.33) and (235.17,232.95) .. (235.17,193) -- cycle ; \draw    (439.5,116.5) -- (347.79,161.86) ; \draw [shift={(346,162.75)}, rotate = 333.68] [color={rgb, 255:red, 0; green, 0; blue, 0 }  ][line width=0.75]    (10.93,-3.29) .. controls (6.95,-1.4) and (3.31,-0.3) .. (0,0) .. controls (3.31,0.3) and (6.95,1.4) .. (10.93,3.29)   ; \draw    (304,66) -- (304,134.75) ; \draw [shift={(304,136.75)}, rotate = 270] [color={rgb, 255:red, 0; green, 0; blue, 0 }  ][line width=0.75]    (10.93,-3.29) .. controls (6.95,-1.4) and (3.31,-0.3) .. (0,0) .. controls (3.31,0.3) and (6.95,1.4) .. (10.93,3.29)   ; \draw [line width=1.5]    (308,322.75) -- (307.68,239.67) ; \draw [shift={(307.67,236.67)}, rotate = 89.78] [color={rgb, 255:red, 0; green, 0; blue, 0 }  ][line width=1.5]    (14.21,-4.28) .. controls (9.04,-1.82) and (4.3,-0.39) .. (0,0) .. controls (4.3,0.39) and (9.04,1.82) .. (14.21,4.28)   ; \draw    (365,321.25) -- (336.06,262.96) ; \draw [shift={(335.17,261.17)}, rotate = 63.59] [color={rgb, 255:red, 0; green, 0; blue, 0 }  ][line width=0.75]    (10.93,-3.29) .. controls (6.95,-1.4) and (3.31,-0.3) .. (0,0) .. controls (3.31,0.3) and (6.95,1.4) .. (10.93,3.29)   ; \draw [line width=1.5]    (209.33,322.75) -- (209.01,239.67) ; \draw [shift={(209,236.67)}, rotate = 89.78] [color={rgb, 255:red, 0; green, 0; blue, 0 }  ][line width=1.5]    (14.21,-4.28) .. controls (9.04,-1.82) and (4.3,-0.39) .. (0,0) .. controls (4.3,0.39) and (9.04,1.82) .. (14.21,4.28)   ; \draw    (149.5,319.75) -- (177.65,259.48) ; \draw [shift={(178.5,257.67)}, rotate = 115.04] [color={rgb, 255:red, 0; green, 0; blue, 0 }  ][line width=0.75]    (10.93,-3.29) .. controls (6.95,-1.4) and (3.31,-0.3) .. (0,0) .. controls (3.31,0.3) and (6.95,1.4) .. (10.93,3.29)   ; \draw    (118,186.25) -- (135.78,163.09) ; \draw [shift={(137,161.5)}, rotate = 127.51] [color={rgb, 255:red, 0; green, 0; blue, 0 }  ][line width=0.75]    (10.93,-3.29) .. controls (6.95,-1.4) and (3.31,-0.3) .. (0,0) .. controls (3.31,0.3) and (6.95,1.4) .. (10.93,3.29)   ; \draw    (452.25,203.25) -- (409.25,203.25) ; \draw [shift={(407.25,203.25)}, rotate = 360] [color={rgb, 255:red, 0; green, 0; blue, 0 }  ][line width=0.75]    (10.93,-3.29) .. controls (6.95,-1.4) and (3.31,-0.3) .. (0,0) .. controls (3.31,0.3) and (6.95,1.4) .. (10.93,3.29)   ;
\draw (297.33,327.4) node [anchor=north west][inner sep=0.75pt]    {$\Omega _{P}$}; \draw (446.17,101.23) node [anchor=north west][inner sep=0.75pt]    {$\hat{D}_{\varepsilon ,P}$}; \draw (89,185.73) node [anchor=north west][inner sep=0.75pt]    {$\hat{D}_{\varepsilon ,P}^{\complement }$}; \draw (288.83,44.4) node [anchor=north west][inner sep=0.75pt]    {$D_{\varepsilon ,P}$}; \draw (201.5,327.07) node [anchor=north west][inner sep=0.75pt]    {$\Omega _{P'}$}; \draw (114.67,323.4) node [anchor=north west][inner sep=0.75pt]    {$\text{supp}_{\varepsilon }( P')$}; \draw (347.17,320.9) node [anchor=north west][inner sep=0.75pt]    {$\text{supp}_{\varepsilon }( P)$}; \draw (456.83,194.4) node [anchor=north west][inner sep=0.75pt]    {$\mathcal{N}_{\varepsilon }( D_{\varepsilon ,P})$};
\end{tikzpicture}  
\par\end{center}
\par\end{centering}
\caption{\label{fig:Infection-argument.-2}Domain of Grid Game.}
\end{figure}
Figure \ref{fig:Infection-argument.-2} above illustrates the sets
we just defined. The area enclosed by the two bold circles represent
the supports of $P$ and $P'$ respectively. The red shaded area represents
the set $D_{\varepsilon,P}$ where the dots on top of the red shaded
area represents $\hat{D}_{\varepsilon,P}$, the actions chosen by
type profiles in $D_{\varepsilon,P}$. The green shaded area represents
$\mathcal{N}_{\varepsilon}(D_{\varepsilon,P})\setminus D_{\varepsilon,P}$,
which contains no action in $A^{m,z}$. Finally, the dots with white
background represent the remaining actions, i.e. the set $\hat{D}_{\varepsilon,P}^{\complement}$.
Every type in $\Omega_{P'}$ will pick an action from that set.Let 

\[
D_{P}^{m}:=\left\{ \omega\in\Omega_{P}:\exists\ \hat{\sigma}\in\mathcal{B^{\varepsilon}}(\hat{\mathcal{G}},P)\text{ s.t. }\hat{\sigma}(\hat{D}_{\varepsilon,P}|\omega)=1\right\} .
\]
For every player $i$, let $D_{P,i}^{m}:=\left\{ \tau_{i}:\exists\ (\theta,\hat{\tau})\in\text{supp}(P),\ \left(\theta,\left(\tau_{i},\hat{\tau}_{-i}\right)\right)\in D_{P}^{m}\right\} $.
Define the sets $D_{P'}^{m}$ and $D_{P',i}^{m}$ for prior $P'$
analogously. Define the sequence $(D_{P}^{m+n})_{n\in\mathbb{N}}$,
recursively for every $n\geq1$,
\[
D_{P,i}^{m+n}:=\Omega_{P}\setminus B_{i}^{1-\varepsilon}(D_{P}^{m+n-1}),\ D_{P}^{m+n}:=\prod_{i\in I}D_{P,i}^{m+n}.
\]
Let $\hat{D}_{P,i}:=\left\{ \tau_{i}:\exists\ (\theta,\tau_{-i})\in\Theta\times\mathcal{T}_{-i}\text{ s.t. }(\theta,\tau)\in\hat{D}_{\varepsilon,P}\right\} $.
Consider the action set $A_{i}^{*}:=\left\{ a^{*},a^{**}\right\} ^{I}\times A^{m,z}$
and let payoffs of $\hat{\mathcal{G}}^{*}$ be given by 
\[
\hat{u}_{i}(a^{0},d,\theta)=\begin{cases}
\varepsilon\boldsymbol{1}_{a_{i}^{0}=a^{*}}+u_{i}^{m,z}\left(d,\theta\right) & \text{ if }d_{i}\in\hat{D}_{P,i}\\
\boldsymbol{1}_{a_{i}^{0}=a^{*}}+u_{i}^{m,z}\left(d,\theta\right) & \text{ if }d_{i}\notin\hat{D}_{P,i}\text{ and }\exists\ j\neq i\text{ s.t. }a_{j}^{0}=a^{*}
\end{cases}
\]
In this game, playing $d_{i}\in\hat{D}_{P,i}$ and $a^{*}$ is a dominant
action for every type in $D_{P,i}^{m}$. Under payoffs $u^{m,z}$
and prior $P'$, reporting a grid element in $D_{P}^{m}$ has cost
at least $\varepsilon>0$. Hence every type under prior $P'$ rationalizes
both $a^{**}$ and $a^{*}$ while no type under prior $P$ in the
set $D_{P,i}^{m}$ rationalizes $a^{**}$. Now consider $\tau_{i}\in D_{P,i}^{m+1}$.
Again, playing $a_{i}^{0}=a^{*}$ is uniquely $\varepsilon$-rationalizable
since all types in $D_{P,i}^{m}$ play $a^{*}$: Indeed, for every
$\tau_{i}\in D_{P,i}^{m+1}$, there is a player $i$ so that $\tau_{i}(D_{P,-i}^{m})\geq\varepsilon$.
For every $n\in\mathbb{N}$, there is a player $i_{n}$ so that $D_{P,i_{n}}^{m+n}\neq\varnothing$.
Proceeding inductively we obtain that $a_{i}^{0}=a^{*}$ remains the
unique $\varepsilon$-best-reply for some player at type profiles
in $\Omega_{P}\setminus\hat{\Omega}_{\varepsilon}$. Deduce that there
is a $\varepsilon$-BIBCE where all types of all players play $a^{**}$
under $P'$ while for all $\varepsilon$-BIBCE under $P$, $P$ assigns
at least probability $\varepsilon$ to type profiles where some player's
type plays $a^{*}$. Hence $d^{*}(P,P'|\hat{\mathcal{G}}^{*})\geq\varepsilon$.
A symmetric argument shows that whenever 1) $P(\hat{\Omega}_{\varepsilon})>1-\varepsilon$
and $P'(\hat{\Omega}_{\varepsilon})\leq1-\varepsilon$ or 2) $P(\hat{\Omega}_{\varepsilon})>1-\varepsilon$
and $P'(\hat{\Omega}_{\varepsilon})>1-\varepsilon$ implies that outcomes
are $\varepsilon$ apart under a similarly constructed game. Note
that under $P(\hat{\Omega}_{\varepsilon})>1-\varepsilon$ and $P'(\hat{\Omega}_{\varepsilon})>1-\varepsilon$,
the game with payoffs $u^{m,z}$ is enough to separate the outcomes
of $P$ and $P'$. 

The grid game alone would not work for our proof: Indeed, we could
construct a sequence of priors $P^{k}$ for any prior $P$ so that
$d^{ACK}(P^{k},P)>\varepsilon$ for all $k$ but where $P^{k}(T_{\varepsilon}(P^{k},P))\uparrow1$,
i.e. the infecting event has diminishing ex-ante probability. If we
used the grid game only, we would have that $d^{*}(P,P'|\hat{\mathcal{G}})\downarrow0$.
Based on this game alone, the ACK topology would be too strong. 
\end{proof}
Now Theorem \ref{thm: main} follows immediately from Propositions
\ref{prop: sufficiency} and \ref{prop: necessity-1}. 

We will conclude this section by discussing a key property of ACK
topology (denseness) and an extension of our main result (so we have
continuity of exact BIBCE). Both will be important for our applications. 

\subsection{\label{subsec:Denseness-of-Simple}Denseness of Simple Types}

An information structure $P$ is \emph{finite} if the support of $P$
is finite. A finite information structure \emph{$P$} is a \emph{first-order
belief} information structure if each type has a distinct first-order
belief. An information structure $P$ is \emph{simple} if it is both
a finite and a first-order belief information structure. We denote
the collection of simple information structures 
\[
\mathcal{P}^{*}:=\left\{ P\in\mathcal{P}:|\text{supp}(P)|<\infty,\ \forall\ (\theta,\tau),(\hat{\theta},\hat{\tau})\in\text{supp}(P),\forall\ i,\ \tau_{i}\neq\hat{\tau}_{i}\implies\tau_{i}^{1}\neq\hat{\tau}_{i}^{1}\right\} 
\]

It is often convenient to work with simple information structures.
We have:
\begin{prop}
\label{prop: denseness }Simple information structures are dense in
$\mathcal{P}$ under the ACK topology.
\end{prop}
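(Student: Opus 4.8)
Since the ACK topology is metrizable (Proposition~\ref{prop:The-ACK-topology}) and $d^{ACK}$ is symmetric in its two arguments, it suffices to produce, for every $P\in\mathcal{P}$ and every $\varepsilon>0$, a simple $Q^{*}\in\mathcal{P}^{*}$ with $d^{ACK}(P,Q^{*})<\varepsilon$. The plan is to do this in two moves: first replace $P$ by a nearby \emph{finite} common prior $Q$, and then perturb $Q$ to a nearby finite common prior $Q^{*}$ all of whose types have distinct first-order beliefs. Both moves are controlled by one elementary lemma, which I would isolate first: \emph{if $P,P'\in\mathcal{P}$ and $\operatorname{supp}(P)$, $\operatorname{supp}(P')$ are within Hausdorff distance $\delta$ (in $d_{\Pi}$) of each other, then $d^{ACK}(P,P')\le\delta$.} Granting this, Steps A and B below, together with the routine bookkeeping of closing up the $\varepsilon$'s, finish the argument.

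I would prove the lemma from the fact that a common prior assigns probability one to the event that its own support is common $1$-belief. For each player $i$, $\int\tau_{i}^{*}\,dP$ equals the marginal of $P$ on $\Theta\times\mathcal{T}_{-i}$, which is carried by $\operatorname{proj}_{\Theta\times\mathcal{T}_{-i}}(\operatorname{supp} P)$; since $\tau_{i}^{*}=P_{i}(\tau_{i})$ by the labelling condition, this forces $\tau_{i}^{*}\big(\operatorname{proj}_{\Theta\times\mathcal{T}_{-i}}(\operatorname{supp} P)\big)=1$ for $P$-a.e.\ $(\tau,\theta)$, i.e.\ $\operatorname{supp}(P)\subseteq B^{1}(\operatorname{supp} P)$ off a $P$-null set. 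Iterating over the countably many applications of $B^{1}$ gives $P\big(C^{1}(\operatorname{supp} P)\big)=1$, and likewise $P'\big(C^{1}(\operatorname{supp} P')\big)=1$. If the supports are Hausdorff-$\delta$-close then each is contained in $\operatorname{supp}_{\delta}(P)\cap\operatorname{supp}_{\delta}(P')=\hat{T}_{\delta}(P,P')$, so by monotonicity of $C^{p}(E)$ in $p$ and in $E$ we get $C^{1-\delta}(\hat{T}_{\delta}(P,P'))\supseteq C^{1}(\operatorname{supp} P)\cup C^{1}(\operatorname{supp} P')$; hence $P$ and $P'$ both give this event probability one, so $d^{ACK}(P,P')\le\delta$.

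\emph{Step A (finite approximation).} For each player $i$ pick a finite Borel partition of the compact metric space $\mathcal{T}_{i}$ into cells of $d_{\Pi}$-diameter at most $\gamma$, and let $Q$ be obtained by coarsening $P$: replace each player's type by the cell it lies in, drop null cells, and identify each cell with the belief hierarchy it induces in the coarsened common prior (merging cells inducing the same hierarchy). Then $Q$ is a finite common prior in $\mathcal{P}$ (consistency is preserved because coarsening never touches $\Theta$, and labelling holds by construction of the hierarchy map from a common prior). Because the coarsening map $c\colon\Omega\to\Omega$ fixes the $\Theta$-coordinate, $d_{\Pi}(\omega,c(\omega))$ equals the product-metric distance between the true and coarsened hierarchy at $\omega$, and the standard estimate behind density of finite types (small cells control first-order beliefs directly and, inductively with the $\eta^{n}$-weights, every higher order) bounds this by $O(\gamma)$ uniformly. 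Hence $\operatorname{supp}(Q)$ is within $O(\gamma)$ in Hausdorff distance of $\operatorname{supp}(P)$, and for $\gamma$ small the lemma gives $d^{ACK}(P,Q)<\varepsilon/2$.

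\emph{Step B (perturbation to distinct first-order beliefs), and the main obstacle.} If $\lvert\Theta\rvert=1$ every information structure is ACK-equal to the point mass on the unique universal state, which is simple, so assume $\lvert\Theta\rvert\ge2$. Describe the finite $Q$ by its primitive distribution $\pi\in\Delta\big(\Theta\times\prod_{i}T_{i}\big)$, where $T_{i}$ is the finite set of player-$i$ hierarchies in $\operatorname{supp}(Q)$, on which the induced-hierarchy map is injective. Within the affine slice of distributions with the same ground support as $\pi$ and with marginal $\mu$ on $\Theta$, the locus of $\pi'$ for which two types $t_{i}\ne t_{i}'$ of some player share a first-order belief is a finite union of proper affine subspaces — mass can be moved between two states conditional on $t_{i}$ and rebalanced at another type so as to keep the $\Theta$-marginal at $\mu$ while changing the conditional $\Theta$-marginal at $t_{i}$ relative to that at $t_{i}'$. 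So a generic arbitrarily small perturbation $\pi^{*}$ makes all first-order beliefs pairwise distinct; distinct first-order beliefs force distinct hierarchies, so pushing $\pi^{*}$ forward to $\Omega$ yields a non-redundant finite first-order-belief structure $Q^{*}\in\mathcal{P}^{*}$, and continuity of the (finite) hierarchy map in $\pi$ makes $\operatorname{supp}(Q^{*})$ Hausdorff-close to $\operatorname{supp}(Q)$, so the lemma gives $d^{ACK}(Q,Q^{*})<\varepsilon/2$; closing up gives $d^{ACK}(P,Q^{*})<\varepsilon$. The step I expect to be the real obstacle is Step A: one must run the coarsening carefully enough that the coarsened object is simultaneously a bona fide element of $\mathcal{P}$ and has support genuinely Hausdorff-close to $\operatorname{supp}(P)$, and it is the uniform product-topology estimate on how far a type moves under coarsening — together with the mild measure-theoretic content of the lemma — that carries the proof; the genericity argument in Step B is then routine.
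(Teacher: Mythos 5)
Your key lemma is correct and is, in effect, the mechanism the paper's own proof relies on implicitly: since each $P\in\mathcal{P}$ satisfies $P\left(C^{1}\left(\text{supp}(P)\right)\right)=1$ (your iteration of the a.s.\ labelling condition is right, modulo the usual universal-measurability caveats for projections of the closed support), Hausdorff-closeness of supports in $d_{\Pi}$ bounds $d^{ACK}$. Your Step A is essentially the paper's construction (pushforward of $P$ under a measurable assignment to a finite grid, re-canonicalized into hierarchies); the only caveat is that the displacement of a hierarchy under coarsening is not $O(\gamma)$ uniformly, since the level-$n$ error compounds roughly like $\gamma\eta^{-n}$, so you need the standard two-parameter estimate (first truncate at level $N$ with $\eta^{N}$ small, then choose the cell diameter $\gamma$ given $N$), which is what the paper's choice of $m$ and $z$ in its grid lemma accomplishes. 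That is a fixable quantitative slip, not a conceptual one.

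The genuine gap is Step B, which you dismiss as routine. The claim that, within the slice of distributions with fixed finite ground support and $\Theta$-marginal $\mu$, the locus where two distinct types of some player share a first-order belief is a finite union of \emph{proper} (affine) subspaces is false: equality of conditionals is a quadratic (ratio) condition rather than affine, and, more importantly, the locus can be the whole slice. Take $\Theta=\{\theta_{1},\theta_{2}\}$ and the finite non-redundant structure with support $\{(\theta_{1},a,c),(\theta_{1},b,c),(\theta_{1},b,d),(\theta_{2},e,d)\}$ and any positive weights consistent with $\mu$: player 1's types $a$ and $b$ are distinct hierarchies (they disagree at second order about whether player 2 is $c$ or $d$, and $c\neq d$ since $d$ puts positive weight on $\theta_{2}$), yet both assign probability one to $\theta_{1}$, and they continue to do so under \emph{every} reweighting $\pi'$ of these four support points, because each of them appears in the support only together with $\theta_{1}$. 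Your proposed move---"mass can be moved between two states conditional on $t_{i}$"---is unavailable precisely in this case, so no generic small perturbation of the weights yields a simple structure, and the genericity argument collapses. Separating such types requires perturbing the support/conditional beliefs themselves (e.g.\ giving state-certain types a small amount of doubt about $\theta$, or, as the paper does, appending to each type that shares its first-order belief an $\varepsilon$-informative binary signal about $\theta$, a device that separates posteriors exactly when the shared first-order belief is non-degenerate). So the step you flag as the obstacle (Step A) is fine, while the step you call routine is where the argument breaks.
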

\begin{proof}
Fix any information structure $P$. For any $\delta>0$ we construct
a finite information structure $P_{\delta}$ so that $d^{*}(P,P_{\delta})<\delta$.
To do so, we construct a finite grid $G_{\delta}$ of state and type
pairs whose $\delta$-neighborhood covers $\Omega_{P}$, i.e. $\Omega_{P}\subseteq\mathcal{N}_{\delta}(G_{\delta})$
and for all $\omega\in\Omega_{P},$ $\min_{g\in G_{\varepsilon}}d_{\Pi}(\omega,g)\leq\delta$.
Consider any partition on $\Omega_{P}$ given by the pre-image of
any map $\zeta\colon\Omega_{P}\to G_{\delta}$, where for all $\omega\in\Omega_{P}$,
we have that $d_{\Pi}(\omega,\zeta(\omega))\leq\delta$. Consider
the information structure $P_{\delta}$ given by $P\circ\zeta^{-1}$.
This information structure is finite and has the property that for
any $\tau\in\text{supp}(P{}_{\delta})$ there exists $\tau'\in\text{supp}(P)$
so that $d_{\Pi}(\tau,\tau')<\delta$ and vice versa; for all $\tau'\in\text{supp}(P)$
there is $\tau\in\text{supp}(P_{\delta})$ so that $d_{\Pi}(\tau,\tau')<\delta$.
Indeed, for every player $i$ and $\omega\in\Omega_{P}$ and $(\theta,\tau)=\zeta(\omega)$,
beliefs of any measurable event $E\in\mathscr{B}$ at $\tau_{i}$
are given by, 
\[
P\circ\zeta^{-1}(E|\tau_{i})=\int_{\zeta_{i}^{-1}(\tau_{i})}P(E|\hat{\tau}_{i})P(\text{d}\hat{\tau}_{i}|\zeta_{i}^{-1}(\tau_{i})),
\]
where $\zeta_{i}^{-1}(\tau_{i}):=\bigcup_{\left(\hat{\theta},\hat{\tau}\right)\in G_{\delta}:\hat{\tau}_{i}=\tau_{i}}\left\{ \tilde{\tau}_{i}:\exists\ (\tilde{\theta},\tilde{\tau}_{-i})\text{ s.t. }(\tilde{\theta},\tilde{\tau})\in\zeta^{-1}(\hat{\theta},\hat{\tau})\right\} $.
First order beliefs of $P\circ\zeta^{-1}$ take the form $P\circ\zeta^{-1}(\theta|\tau_{i})=\int_{\zeta_{i}^{-1}(\tau_{i})}P(\theta|\hat{\tau}_{i})P(\text{d}\hat{\tau}_{i}|\zeta_{i}^{-1}(\tau_{i}))$
and so for every $\tilde{\tau}_{i}\in\zeta_{i}^{-1}(\tau_{i})$, first-order
beliefs of $P\circ\zeta^{-1}(\cdot|\tau_{i})$ and $P(\cdot|\tilde{\tau}_{i})$
are no more than $\delta$ apart in the Euclidean topology on $\Delta(\Theta)$.
Deduce that $P\circ\zeta^{-1}(\cdot|\tau_{i})$ and $P(\cdot|\tilde{\tau}_{i})$
have $\delta$-close hierarchies of beliefs in the product topology
on $\mathcal{T}_{i}$. Hence $\hat{T}_{\delta}(P,P_{\delta})=\text{supp}_{\delta}(P)$
and so $d^{ACK}\left(P,P_{\delta}\right)<\delta$. For $\delta$ small
enough, there is $\hat{\zeta}\colon\mathcal{T}\to\mathcal{T},$so
that our choice of $\zeta$ above satisfies $\zeta(\theta,\tau)=(\theta,\hat{\zeta}(\tau))$
and so the marginal of $P\circ\zeta^{-1}$ on $\Theta$ coincides
with that of $P$. Hence finite, canonical information structures
are dense in $\mathcal{P}$ under the strategic topology. It remains
to show that $P_{\delta}$ is close to a simple canonical information
structure. Let $\Omega_{i,P_{\delta}}(\tau_{i}):=\left\{ \hat{\tau}_{i}:P_{\delta}(\hat{\tau}_{i})>0,\hat{\tau}_{i}^{1}=\tau_{i}^{1}\right\} $.
For any $\epsilon>0$ let $\rho_{i,\varepsilon}\colon\Theta\times\mathcal{T}_{i}\to\Delta(\left\{ 0,1\right\} )$
have the property that $\rho_{i,\varepsilon}(\theta,\hat{\tau}_{i})=\rho_{i,\varepsilon}(\theta',\hat{\tau}_{i})$
for all $\theta,\theta'\in\Theta$ and all $\hat{\tau}_{i}$ in the
support of $P_{\delta}$ satisfying $|\Omega_{i,P_{\delta}}(\hat{\tau}_{i})|=1$.
For any $\hat{\tau}_{i}$ in the support of $P_{\delta}$ satisfying
$|\Omega_{i,P_{\delta}}(\hat{\tau}_{i})|>1$ let $||\rho_{i,\varepsilon}(\theta,\hat{\tau}_{i})-\rho_{i,\varepsilon}(\theta',\hat{\tau}_{i})||_{2}<\epsilon$
so that for all distinct $\tilde{\tau}_{i},\bar{\tau}_{i}\in\Omega_{i,P_{\delta}}(\tau_{i})$
and $s_{i},s_{i}'\in\left\{ 0,1\right\} $,
\[
\frac{\rho_{i,\varepsilon}(s_{i}|\theta,\tilde{\tau_{i}})P_{\delta}(\theta|\tilde{\tau}_{i})}{\sum_{\hat{\theta}}\rho_{i,\varepsilon}(s_{i}|\hat{\theta},\tilde{\tau_{i}})P_{\delta}(\hat{\theta}|\tilde{\tau}_{i})}\neq\frac{\rho_{i,\varepsilon}(s_{i}'|\theta,\bar{\tau}_{i})P_{\delta}(\theta|\bar{\tau}_{i})}{\sum_{\hat{\theta}}\rho_{i,\varepsilon}(s_{i}'|\hat{\theta},\bar{\tau}_{i})P_{\delta}(\hat{\theta}|\bar{\tau}_{i})}.
\]
The prior $\hat{P}_{\delta}(\theta,\tau,s)=P_{\delta}(\theta,\tau)\prod_{i}\rho_{i,\varepsilon}(s_{i}|\theta,\tau)$
induces information structure $\tilde{P}_{\delta}$ so that \linebreak $d^{ACK}(P,\tilde{P}_{\delta})<\delta+\varepsilon$.
Deduce that simple information structures are dense in $\mathcal{P}$. 
\end{proof}

\section{General Information Structures\protect \\
and Bayes Nash Equilibrium\label{sec:Bayes-Nash-Equilibrium} }

Our main approach in this paper is to remove correlating devices from
the information structure (and thus work with non-redundant information
structures) and put correlating devices in the solution concept (BIBCE).
However, to relate our work to the literature and discuss applications
it is useful to discuss how our results apply when we allow correlating
devices in the information structure (and thus allow for general \emph{redundant
information structures}) but remove correlating devices from the solution
concept (and work with Bayes Nash equilibrium).

\subsection{General Information Structures}

A (common prior) \emph{general information structure} describes a
set of signals for each player and a joint distribution over states
and profiles of signals: thus a general information structure is a
tuple $\text{\ensuremath{\mathcal{S}}}=\left((S_{i})_{i\in I},Q\right)$,
where each $S_{i}$ is a measurable space of signals\footnote{We adopt the convention of referring to ``signals'' rather than
``types'' when describing general information structures in this
section. We reserve the terminology ``type'' for hierarchies of
beliefs, introduced in the next section.} for player $i$ and $Q$ is a probability measure\footnote{The product of measurable spaces is always endowed with the product
sigma algebra.} in $\Delta(\Theta\times S)$ whose marginal on $\Theta$ is given
by $\mu$.

We dub this object a general information structure. \cite{liu15}
describes how one can always decompose a general information structure
into a (non-redundant) information structure and a correlation device.
While a general information structure is described by a signal space
and a probability measure, we will adopt the convention of describing
non-redundant information structures as just a measure, leaving it
understood that the player's signal space is the universal space of
hierarchies. 

Every redundant information structure and can be naturally mapped
to its non-redundant information structure by essentially integrating
out redundant types. In particular, we map information structure $\mathcal{S}=\left((S_{i})_{i\in I},Q\right)$
to a (non-redundant) information structure $P$ as follows. For every
$i$, and version of the conditional probability $Q_{i}$, first-order
beliefs can be obtained for any player $i$, $\overline{\tau}_{i}^{1}(s_{i})=\text{marg}_{\Theta}(Q_{i}(s_{i}))$
for every $s_{i}\in S_{i}$. For every $m>1$ and $m-1$-order beliefs
representation $\overline{\tau}_{j}^{m-1}\colon S_{j}\to\mathcal{T}_{j}^{m-1}$,
for every $j$, obtain the $m$-order belief representation of player
$i$, $\overline{\tau}_{i}^{m}(s_{i})=Q_{i}(s_{i})\circ(\text{id}\times\overline{\tau}_{-i}^{m-1})^{-1}$,
where $\text{id}$ is the identity on $\Theta$ and $\bar{\tau}_{-i}^{m-1}\colon s_{-i}\mapsto(\bar{\tau}_{j}^{m-1}(s_{j}))_{j\neq i}$.
The representation of $s_{i}$ in $\mathcal{T}_{i}$ is then given
by $\overline{\tau}_{i}(s_{i})=(\overline{\tau}_{i}^{m}(s_{i}))_{m}$.
For every $s\in S$ let $\overline{\tau}(s):=(\overline{\tau}_{i}(s_{i}))_{i\in I}$
and we write $P_{\mathcal{S}}$ for the information structure thus
induced by redundant information structure $\mathcal{S}$. 

\subsection{Solution Concepts}

Now we will say that a base game and a redundant information structure
$\left(\text{\ensuremath{\mathcal{G}}},\text{\ensuremath{\mathcal{S}}}\right)$
together define a \emph{Bayesian game}.

Belief-invariant Bayes correlated equilibrium will be defined as before
for general information structures and Bayesian games. For completeness,
we will spell out the definition allowing for general information
structures, and also define Bayes Nash Equilibrium. 

For any Bayesian game $\left(\text{\ensuremath{\mathcal{G}}},\text{\ensuremath{\mathcal{S}}}\right)$,
a decision rule is a measurable map $\sigma:\Theta\times S\to\Delta(A)$,
where $A:=\prod_{i\in I}A_{i}$ and $\Delta(A)$ is endowed with the
Euclidean topology. A general information structure $\text{\ensuremath{\mathcal{S}}}$
and decision rule $\sigma$ jointly induce a measure $\sigma\circ Q\in\Delta(A\times\Theta\times S)$
in the natural way. We will be interested in \emph{outcomes} specifying
a joint distribution over actions and states $\nu\in\Delta(A\times\Theta)$.
Decision rule $\sigma$ \emph{induces} outcome $\nu_{\sigma}$ if
$\nu_{\sigma}$ is the marginal of $\sigma\circ Q$ on $A\times\Theta$.
For every player $i$, a decision rule $\sigma$ and a version of
the conditional probability $Q_{i}\colon S_{i}\to\Delta(\Theta\times S_{-i})$
of $Q$ induce a belief for every signal $s_{i}\in S_{i}$, $\sigma\circ Q_{i}\colon S_{i}\to\Delta(A\times\Theta\times S_{-i})$,
which for every measurable set $E\subseteq A\times\Theta\times S_{-i}$
and every signal $s_{i}\in S_{i}$ satisfies $\sigma\circ Q_{i}(E|s_{i})=\int_{E}\sigma(a|\theta,s_{-i},s_{i})\text{ d}Q_{i}(\theta,s_{-i}|s_{i})$.

Now we have: 
\begin{defn}
A decision rule $\sigma$ is $\varepsilon$-obedient if, for every
player $i$ , there is a version of the conditional probability $Q_{i}\colon S_{i}\to\Delta(\Theta\times S_{-i})$
so that every action $a_{i}\in A_{i}$ and deviation $a_{i}'$, 
\[
\int_{S\times\Theta}\sum_{a_{-i}\in A_{-i}}(u_{i}(a_{i},a_{-i},\theta)-u_{i}(a_{i}',a_{-i},\theta))\text{ d}\sigma\circ Q_{i}(a_{i},a_{-i},s_{-i},\theta|s_{i})>-\varepsilon,\ a.s.
\]
\end{defn}
\begin{defn}
A decision rule $\sigma$ is belief-invariant if, for every $a_{i}\in A_{i}$,
the marginal probability $\sigma({a_{i}}\times A_{-i}|(s_{i},s_{-i}),\theta)=\sigma_{i}(a_{i}|s_{i})$
does not depend on $(s_{-i},\theta)$.
\end{defn}
\begin{defn}
A decision rule $\sigma$ is an $\varepsilon$-belief-invariant Bayes
correlated equilibrium ($\varepsilon${- BIBCE}) of $\left(\text{\ensuremath{\mathcal{G}}},\mathcal{S}\right)$
if it satisfies $\varepsilon$-obedience and belief invariance.
\end{defn}
If we have a (non-redundant) information structure, these definitions
reduce to those introduced earlier. 

Now a decision rule $\sigma$ is \emph{conditionally independent}
if $\sigma(a|(s_{i},s_{-i}),\theta)=\prod_{i\in I}\sigma_{i}(a_{i}|s_{i})$,
for every $\left(a,s,\theta\right)\in A\times S\times\Theta$. Conditional
independence requires that any randomization in a player's actions
depends on their type only and is conditionally independent of others'
types and the state. If a decision rule is conditionally independent,
it gives a behavioral strategy for each player in the incomplete information
game. 
\begin{defn}
A decision rule $\sigma$ is an $\varepsilon$-Bayes Nash equilibrium
($\varepsilon$-BNE) of $(\mathcal{G},\mathcal{S})$ if it satisfies
obedience ($\varepsilon$-obedience), belief-invariance and conditional
independence.
\end{defn}
We will say that a decision rule is a BIBCE or BNE if it is a $0$-BIBCE
or $0$-BNE respectively. We will write $\mathcal{B}(\mathcal{G},\mathcal{S})$
and $\mathcal{B}^{BNE}(\mathcal{G},\mathcal{S})$ for the set of BIBCE
and BNE decision rules, and $\nu_{\sigma}$ for the outcome in $\Delta\left(A\times\Theta\right)$induced
by decision rule $\sigma$; we will write $\mathcal{O}\left(\text{\ensuremath{\mathcal{G}}},\mathcal{S}\right)$
for the set of BIBCE outcomes
\[
\mathcal{O}\left(\text{\ensuremath{\mathcal{G}}},\mathcal{S}\right):=\left\{ \nu_{\sigma}\in\Delta\left(A\times\Theta\right):\sigma\in\mathcal{B}(\mathcal{G},\mathcal{S})\right\} 
\]
 and $\mathcal{O}^{BNE}\left(\text{\ensuremath{\mathcal{G}}},\mathcal{S}\right)$
for the set BNE outcomes:
\[
\mathcal{O}^{BNE}\left(\text{\ensuremath{\mathcal{G}}},\mathcal{S}\right):=\left\{ \nu_{\sigma}\in\Delta\left(A\times\Theta\right):\sigma\in\mathcal{B}^{BNE}(\mathcal{G},\mathcal{S})\right\} 
\]

\subsection{Existence of Equilibria}

We have existence of BIBCE. 
\begin{prop}
\label{prop: existence redundant}(Existence of BIBCE) \textup{There
exists a BIBCE for every} \textup{$\left(\text{\ensuremath{\mathcal{G}}},\mathcal{S}\right)$.}
\end{prop}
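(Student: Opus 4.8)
The plan is to reduce to the non-redundant case already handled by Proposition~\ref{prop:(Existence-of-Equilibria)}, using the hierarchy map $\overline{\tau}\colon S\to\mathcal{T}$ constructed above; an equivalent route is to invoke Theorem~A of \cite{stinchcombe2011correlated} directly on the agent normal form of the Bayesian game $(\mathcal{G},\mathcal{S})$, which is stated for arbitrary type spaces and so applies to $\mathcal{S}$ without change. I will describe the reduction, since it uses only results available in this paper.

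First, given $(\mathcal{G},\mathcal{S})$ with $\mathcal{S}=((S_i)_{i\in I},Q)$, form the non-redundant information structure $P_{\mathcal{S}}=Q\circ(\mathrm{id}\times\overline{\tau})^{-1}$; this is the canonical (Mertens--Zamir) image of $\mathcal{S}$, it has marginal $\mu$ on $\Theta$, and it satisfies the labelling condition because the belief over $\Theta\times\mathcal{T}_{-i}$ attached to a hierarchy $\tau_i$ is pinned down by $\tau_i$ itself, hence agrees with the pushforward of $Q_i(\cdot\,|\,s_i)$ for every $s_i$ with $\overline{\tau}_i(s_i)=\tau_i$. By Proposition~\ref{prop:(Existence-of-Equilibria)} there is a BIBCE $\sigma\colon\Theta\times\mathcal{T}\to\Delta(A)$ of $(\mathcal{G},P_{\mathcal{S}})$. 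Define the pullback $\sigma'\colon\Theta\times S\to\Delta(A)$ by $\sigma'(a\,|\,\theta,s):=\sigma(a\,|\,\theta,\overline{\tau}(s))$; it is a decision rule because $\overline{\tau}$ is measurable (a pointwise limit of the measurable maps $s\mapsto\overline{\tau}^m(s)$) and $\sigma$ is measurable. Belief invariance of $\sigma'$ is immediate: $\sigma'(\{a_i\}\times A_{-i}\,|\,(s_i,s_{-i}),\theta)=\sigma(\{a_i\}\times A_{-i}\,|\,\overline{\tau}(s),\theta)=\sigma_i(a_i\,|\,\overline{\tau}_i(s_i))$, which depends only on $s_i$. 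For obedience, use the labelling condition: for $P_{\mathcal{S}}$-a.e.\ $\tau_i$ the conditional belief at $\tau_i$ is $\tau_i^*$, so at $\tau_i=\overline{\tau}_i(s_i)$ it equals the pushforward of $Q_i(\cdot\,|\,s_i)$ under $\mathrm{id}\times\overline{\tau}_{-i}$. Changing variables along this pushforward shows that the interim expected gain from $a_i$ over $a_i'$ at signal $s_i$ under $\sigma'$ equals the corresponding gain at type $\overline{\tau}_i(s_i)$ under $\sigma$; since $\overline{\tau}_i$ carries the $Q$-marginal on $S_i$ onto the $P_{\mathcal{S}}$-marginal on $\mathcal{T}_i$, the almost-sure obedience of $\sigma$ transfers to almost-sure obedience of $\sigma'$. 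Hence $\sigma'$ is a BIBCE of $(\mathcal{G},\mathcal{S})$.

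The only delicate point is the measure-theoretic bookkeeping in the last step: matching a chosen version of the conditional $Q_i(\cdot\,|\,s_i)$ with the version $P_{\mathcal{S},i}$ used in the statement of obedience for $\sigma$, and checking that the relevant null sets are compatible under the pushforward. This is routine but is where care is required. It can be avoided altogether by the alternative argument: \cite{stinchcombe2011correlated} establishes existence of a correlated equilibrium of the agent normal form of $(\mathcal{G},\mathcal{S})$; such an equilibrium recommends to each agent $(i,s_i)$ an action as a measurable function of $s_i$ and a shared correlating device, hence is belief-invariant, and it induces an outcome-equivalent BIBCE, giving the claim directly.
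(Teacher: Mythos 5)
Your proposal is correct and follows essentially the same route as the paper: the paper's own (two-sentence) proof is precisely the reduction you carry out, namely pass to the non-redundant image $P_{\mathcal{S}}$, invoke Proposition \ref{prop:(Existence-of-Equilibria)}, and pull the BIBCE back through $\overline{\tau}$ so that players ignore redundancies, with the change-of-variables verification you spell out matching the computation the paper records in its proof of Proposition \ref{prop: measurable}. Your alternative route (applying \cite{stinchcombe2011correlated} directly to $(\mathcal{G},\mathcal{S})$) is also how the paper frames the result, so no gap to report.
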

This was already established for (non-redundant) information structures
in Section \ref{prop:(Existence-of-Equilibria)}. For a redundant
information structure, it is enough to find a BIBCE for its non-redundant
version and extend the BIBCE to have players ignore redundancies. 

However, strong conditions are required to ensure the existence of
BNE, and this has been one obstacle to constructing topologies on
information for BNE. \cite{milgrom2015distributional} and \cite{balder88generalized}
are apparently the best available (even if we restrict attention to
finite action games). Two sufficient conditions from \cite{milgrom2015distributional}
are important. First, existence is guaranteed if information structures
have countable support. Second, existence is guaranteed if the measure
on signals is absolutely continuous with respect to the product of
the marginal measures on individual player's signals. \cite{monderer1996proximity}
and \cite{kajii1998payoff} therefore restricted attention countable
information structures. However, existence of BNE or even $\varepsilon\text{-BNE }$fails
when these properties fail: see \cite{simon03ergodic}, \cite{hellman14no}
and \cite{simon17without} for examples. In particular, existence
is not guaranteed on (non-redundant) information structures.\footnote{\cite{vanzandt10supermodular} establishes existence of BNE on the
universal type space for supermodular games.} In the Appendix, we report an example from \cite{hellman14no} where
BNE fails to exist and report a BIBCE for that example.

\subsection{Varying Solution Concepts and Information Structures}

We first discuss the connection between BNE and BIBCE outcomes. The
next proposition states that the set of BIBCE outcomes depends only
on the non-redundant information structure: 
\begin{prop}
\label{prop: measurable}For any base game\textup{ $\mathcal{G}$}
and general information structure \textup{$\mathcal{S}$,} \textup{$\mathcal{O}\left(\text{\ensuremath{\mathcal{G}}},\mathcal{S}\right)=\mathcal{O}\left(\text{\ensuremath{\mathcal{G}}},P_{\mathcal{S}}\right)$.}
\end{prop}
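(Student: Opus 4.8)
The plan is to establish the two inclusions $\mathcal{O}(\mathcal{G},P_{\mathcal S})\subseteq\mathcal{O}(\mathcal{G},\mathcal S)$ and $\mathcal{O}(\mathcal{G},\mathcal S)\subseteq\mathcal{O}(\mathcal{G},P_{\mathcal S})$ separately. Throughout I write $\bar\tau\colon S\to\mathcal T$ for the measurable hierarchy map constructed above, so that $P_{\mathcal S}=Q\circ(\mathrm{id}_{\Theta}\times\bar\tau)^{-1}$, and I use the \emph{labelling} property of that construction: for $Q$-a.e.\ $s_i$, the pushforward of $Q_i(s_i)$ under $\mathrm{id}_{\Theta}\times\bar\tau_{-i}$ equals $\bar\tau_i(s_i)^{*}$, the Mertens--Zamir belief attached to the hierarchy $\bar\tau_i(s_i)$. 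In particular, any two signals of $i$ inducing the same hierarchy induce the same belief over $\Theta\times\mathcal T_{-i}$; this is the crux of both directions.

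\emph{From $P_{\mathcal S}$ to $\mathcal S$ (the easy inclusion).} Given a BIBCE $\sigma$ of $(\mathcal G,P_{\mathcal S})$, set $\tilde\sigma(a\mid\theta,s):=\sigma\bigl(a\mid\theta,\bar\tau(s)\bigr)$, which is measurable since $\bar\tau$ is. Belief invariance is immediate because the $a_i$-marginal of $\tilde\sigma(\cdot\mid\theta,s)$ is $\sigma_i\bigl(a_i\mid\bar\tau_i(s_i)\bigr)$, depending only on $s_i$. For obedience, fix $i$ and a signal $s_i$; since the deviation gain $\Delta u_i(a_i,a_i',a_{-i},\theta)$ depends only on $(a,\theta)$, pushing $Q_i(s_i)$ forward through $\mathrm{id}_\Theta\times\bar\tau_{-i}$ and invoking labelling shows that the interim deviation gain of $\tilde\sigma$ at $s_i$ equals that of $\sigma$ at the hierarchy $\bar\tau_i(s_i)$, which is $\ge 0$ for $P_{\mathcal S}$-a.e.\ hierarchy, hence for $Q$-a.e.\ $s_i$. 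Finally $\nu_{\tilde\sigma}$ is the $A\times\Theta$-marginal of $\int\sigma(a\mid\theta,\bar\tau(s))\,Q(\mathrm d\theta,\mathrm ds)=\int\sigma(a\mid\theta,\tau)\,P_{\mathcal S}(\mathrm d\theta,\mathrm d\tau)$, so $\nu_{\tilde\sigma}=\nu_\sigma$.

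\emph{From $\mathcal S$ to $P_{\mathcal S}$ (the substantive inclusion).} Given a BIBCE $\sigma$ of $(\mathcal G,\mathcal S)$, push $\sigma\circ Q\in\Delta(A\times\Theta\times S)$ forward through $\mathrm{id}_{A\times\Theta}\times\bar\tau$ to get $\mu^{*}\in\Delta(A\times\Theta\times\mathcal T)$; its $\Theta\times\mathcal T$-marginal is $P_{\mathcal S}$ and its $A\times\Theta$-marginal is $\nu_\sigma$. Disintegrating $\mu^{*}$ over the standard Borel space $\Theta\times\mathcal T$ yields a decision rule $\hat\sigma\colon\Theta\times\mathcal T\to\Delta(A)$ with $\mu^{*}=\hat\sigma\circ P_{\mathcal S}$, hence $\nu_{\hat\sigma}=\nu_\sigma$ automatically; it remains to check that $\hat\sigma$ is a BIBCE. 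The key claim is that, under $Q$, the conditional law of $s_i$ given the pair $(\theta,\bar\tau(s))$ depends on $(\theta,\bar\tau(s))$ only through $\tau_i:=\bar\tau_i(s_i)$; equivalently, for bounded measurable $h_i$ on $A_i$, $\mathbb E_Q[h_i(a_i)\mid\theta,\bar\tau(s)=\tau]$ is a function of $\tau_i$ alone. This follows from the labelling property: conditioning on $s_i$ first and integrating a test function against $Q_i(s_i)$, the pushforward of $Q_i(s_i)$ through $\mathrm{id}_\Theta\times\bar\tau_{-i}$ equals $\bar\tau_i(s_i)^{*}$, which makes the inner integral a function of $\bar\tau_i(s_i)$ only and thereby decouples the $s_i$-integration from $(\theta,\tau_{-i})$. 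Granting the claim: the $a_i$-marginal of $\hat\sigma(\cdot\mid\theta,\tau)$ is $\mathbb E_Q[\sigma_i(a_i\mid s_i)\mid\theta,\bar\tau(s)=\tau]$ (using belief invariance of $\sigma$), which the claim renders a function of $\tau_i$ alone, so $\hat\sigma$ is belief-invariant; and for $P_{\mathcal S}$-a.e.\ $\tau_i$ the induced belief $\hat\sigma\circ\tau_i$ is an average, over signals $s_i$ representing $\tau_i$, of the beliefs induced by $\sigma$ at those $s_i$ (pushed through $\bar\tau_{-i}$), each making the deviation gain nonnegative by obedience of $\sigma$; averaging preserves this, so $\hat\sigma$ is obedient.

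\emph{Main obstacle.} The first inclusion is routine; the real content is the second, and within it the only nontrivial point is that ``integrating out the redundancy'' via the disintegration of $\mu^{*}$ preserves \emph{belief invariance}: a priori the recommendation kernel extracted from $\mu^{*}$ could depend on $\theta$ and on $\tau_{-i}$ through the suppressed redundant signal, and ruling this out is exactly the key claim, whose proof rests on the labelling property (two signals with the same hierarchy share a belief over $\Theta\times\mathcal T_{-i}$). The remaining items --- measurability of $\bar\tau$ and of the disintegrations, and the bookkeeping identifying $\hat\sigma\circ\tau_i$ with an average of beliefs of $\sigma$ --- are standard, and the argument goes through verbatim for $\varepsilon$-BIBCE if one wants the analogous statement with $\varepsilon>0$.
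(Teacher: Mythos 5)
Your proof is correct and follows essentially the same route as the paper: the easy direction composes the BIBCE with the hierarchy map $\bar\tau$, and the substantive direction averages the decision rule over the fibers $\bar\tau^{-1}(\tau)$ (your disintegration of the pushforward of $\sigma\circ Q$ is exactly the paper's conditional probability $\hat{P}(\cdot\mid\theta,\bar\tau^{-1}(\tau))$). Your explicit check that this averaging preserves belief invariance --- via the observation that, by the labelling property, the conditional law of $s_i$ given $(\theta,\bar\tau(s))$ depends only on $\bar\tau_i(s_i)$ --- is a step the paper's proof leaves implicit, so it is a refinement within the same argument rather than a different approach.
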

This is true because any need for redundancies / correlating devices
is built into the solution concept. This observation parallels the
observation of \cite{dekel07icr} that interim correlated rationalizability
depends only on hierarchies of beliefs; \cite{liu15} showed that
(a subjective version of) BIBCE is equilibrium analogue of ICR and
thus provides a proof. \footnote{See also \cite{bergemann2017belief} for a discussion of these issues.}
For completeness, we give a proof in our notation in the Appendix.
It is immediate from the definitions that BNE outcomes are BIBCE outcomes
for any information structure.
\begin{prop}
\label{prop: BNEgen}For any game\textup{ $\mathcal{G}$} and general
information structure \textup{$\mathcal{S}$, $\mathcal{O}^{BNE}\left(\text{\ensuremath{\mathcal{G}}},\mathcal{S}\right)\subseteq\mathcal{O}\left(\text{\ensuremath{\mathcal{G}}},\mathcal{S}\right)$}.
\end{prop}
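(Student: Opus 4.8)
The plan is to read the inclusion straight off the definitions, in two steps: first establish the set-level containment of decision rules $\mathcal{B}^{BNE}(\mathcal{G},\mathcal{S})\subseteq\mathcal{B}(\mathcal{G},\mathcal{S})$, and then push it forward under the outcome map $\sigma\mapsto\nu_{\sigma}$. By definition a ($0$-)BNE decision rule of $(\mathcal{G},\mathcal{S})$ is one satisfying obedience, belief-invariance, and conditional independence, whereas a ($0$-)BIBCE decision rule of $(\mathcal{G},\mathcal{S})$ is one satisfying obedience and belief-invariance. Hence any $\sigma\in\mathcal{B}^{BNE}(\mathcal{G},\mathcal{S})$ satisfies both defining properties of BIBCE — conditional independence is just an extra restriction we drop — so $\sigma\in\mathcal{B}(\mathcal{G},\mathcal{S})$, giving $\mathcal{B}^{BNE}(\mathcal{G},\mathcal{S})\subseteq\mathcal{B}(\mathcal{G},\mathcal{S})$.

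Given this, the outcome inclusion is immediate. Fix $\nu\in\mathcal{O}^{BNE}(\mathcal{G},\mathcal{S})$; then $\nu=\nu_{\sigma}$ for some $\sigma\in\mathcal{B}^{BNE}(\mathcal{G},\mathcal{S})\subseteq\mathcal{B}(\mathcal{G},\mathcal{S})$. Since $\mathcal{O}(\mathcal{G},\mathcal{S})$ is by definition the image of $\mathcal{B}(\mathcal{G},\mathcal{S})$ under $\sigma\mapsto\nu_{\sigma}$, we conclude $\nu\in\mathcal{O}(\mathcal{G},\mathcal{S})$, as required.

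There is essentially no obstacle here; the only point worth a sentence of comment is why the $\varepsilon$-obedience condition in the definition of BNE is the same inequality as the ``a mediator cannot profitably recommend a deviation'' obedience condition of BIBCE, rather than the a priori different classical ``no player can profitably deviate to another strategy'' condition. This holds by construction, since the paper defines $\varepsilon$-BNE using exactly the $\varepsilon$-obedience condition of the BIBCE definition together with conditional independence. One may additionally remark that for a conditionally independent $\sigma$ the recommended action $a_{i}$ is independent of $(s_{-i},\theta)$ given $s_{i}$, so conditioning on $(a_{i},s_{i})$ in the obedience inequality coincides with conditioning on $s_{i}$ alone, which is precisely the statement that following $\sigma_{i}(\cdot\,|\,s_{i})$ is a best response — recovering the usual Bayes--Nash equilibrium condition. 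This remark is optional, as the formal inclusion needs nothing beyond matching the two definitions.
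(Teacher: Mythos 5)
Your proof is correct: with the paper's definitions, an $\varepsilon$-BNE of $(\mathcal{G},\mathcal{S})$ is literally an $\varepsilon$-BIBCE with the additional requirement of conditional independence, so $\mathcal{B}^{BNE}(\mathcal{G},\mathcal{S})\subseteq\mathcal{B}(\mathcal{G},\mathcal{S})$, and since both outcome sets are images of these sets under the same map $\sigma\mapsto\nu_{\sigma}$, the inclusion of outcomes follows. This matches the paper's in-text remark that the claim is ``immediate from the definitions.'' However, the paper's formal proof in the appendix takes a genuinely different route: it starts from a BNE written as a profile of behavioral strategies $(\sigma_{i}\colon S_{i}\to\Delta(A_{i}))_{i}$ on the signal space, pushes the prior forward to the graph of the hierarchy map $\overline{\tau}$, and uses a conditional kernel $\sigma^{R}$ to integrate out the redundant signals, producing a belief-invariant decision rule $\sigma^{*}(a|\theta,\tau)=\int_{S}\prod_{i}\sigma_{i}(a_{i}|s_{i})\,\mathrm{d}\sigma^{R}(s|\theta,\tau)$ that is a BIBCE of the non-redundant representation $P_{\mathcal{S}}$ with the same outcome. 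What that construction buys is the sharper conclusion $\mathcal{O}^{BNE}(\mathcal{G},\mathcal{S})\subseteq\mathcal{O}(\mathcal{G},P_{\mathcal{S}})$ directly, without routing through Proposition \ref{prop: measurable}; this is the form actually used when general information structures are compared via their non-redundant representations (the distance $d^{**}$). Your argument is more elementary and fully proves the stated inclusion on $\mathcal{S}$ itself, but it stays on the redundant signal space and, on its own, would need Proposition \ref{prop: measurable} to reach the non-redundant formulation the paper exploits later.
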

Now define the strategic distance between a pair of general information
structures $\mathcal{S}$ and $\mathcal{S}'$ to be the strategic
distance between their non-redundant representations, so $d^{**}\left(\mathcal{S},\mathcal{S}'\right):=d^{*}\left(P_{\mathcal{S}},P_{\mathcal{S}'}\right)$.

\subsection{Correlation and Non-Redundant Information Structures }

On general information structures, players' ability to correlate their
actions (using redundant correlating devices) matters for the set
of BNE. However, non-redundant information structures, as long as
there are at least two states, are very rich objects and, intuitively,
there will be plenty of opportunity to approximate arbitrary correlating
devices within them. 

Our use of BIBCE as a solution concept allowed us to focus attention
on (non-redundant) information structures and ensured existence of
equilibrium. Thus we obtained cleaner results with this solution concept.
But suppose one is interested in Bayes Nash equilibrium. In this case,
redundant types / correlating devices potentially matter for equilibrium.
And we potentially have problems with equilibrium existence. In this
section, we will argue that there is a natural approach to dealing
with these difficulties (maintaining BNE as the preferred solution
concept) and that the same almost common knowledge topology is relevant
for continuity of equilibrium outcomes. 

Our main observation observation is that, since the universal type
space is rich (as long as there at least two states), any correlating
device can be embedded in a (non-redundant) information structure
by perturbing types' first-order beliefs. The following proposition
establishes that any BIBCE on any finite information structure can
be approximated by an approximate BNE of some nearby simple information
structure. 
\begin{prop}
\label{encode} Let $|\Theta|\geq2$. For any finite, general information
structure\textup{ $\mathcal{S}$}, any \textup{$\sigma\in\mathcal{B}\left(\mathcal{G},\mathcal{S}\right)$}
and any $\varepsilon>0$, there exists (i) a simple information structure
\textup{$\mathcal{S}'$} such that \linebreak $d^{**}(\mathcal{S},\mathcal{S}')\leq\varepsilon$
; (ii) a decision rule $\sigma'$ such that (a) $\sigma'$ is a $\varepsilon$-BNE
of \textup{$\left(\mathcal{G},\mathcal{S}'\right)$ and (b) the outcome
induced by $\mathcal{S}'\circ\sigma'$ is }$\varepsilon$-close to
the outcome induced by $\mathcal{S}\circ\sigma$.
\end{prop}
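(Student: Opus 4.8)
The plan is to absorb the payoff-irrelevant correlating device that is hidden in the BIBCE $\sigma$ into the players' types themselves. First I would \textbf{augment} each player's signal by the action recommended to them, obtaining a redundant information structure on which ``follow your recommendation'' is a genuine Bayes Nash equilibrium; the belief-invariance of $\sigma$ is exactly what guarantees that this augmented structure is \emph{belief-equivalent} to $\mathcal S$, so nothing is lost strategically. Then I would \textbf{perturb} first-order beliefs by a tiny, generically chosen, state-dependent amount to make the augmented structure simple, losing only $\varepsilon$ in obedience and in proximity.

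\textbf{Step 1.} Given $\sigma\in\mathcal B(\mathcal G,\mathcal S)$, let $\mathcal S^{+}$ have signal spaces $S_i^{+}:=S_i\times A_i$ and measure $Q^{+}(\theta,(s,a)):=Q(\theta,s)\,\sigma(a\mid s,\theta)$. Belief-invariance of $\sigma$ (i.e. $\sum_{a_{-i}}\sigma(a_i,a_{-i}\mid s,\theta)=\sigma_i(a_i\mid s_i)$) makes the recommendation coordinate uninformative: $Q^{+}(\theta\mid s_i,a_i)=Q(\theta\mid s_i)$, and the same cancellation propagates to every finite order, so $(s_i,a_i)\mapsto s_i$ is belief-preserving and $P_{\mathcal S^{+}}=P_{\mathcal S}$ -- this is the decomposition of \cite{liu15}; in particular $d^{**}(\mathcal S,\mathcal S^{+})=0$. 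Let $\sigma^{+}$ be the rule $\sigma^{+}_i(a_i\mid(s_i,a_i')):=\boldsymbol{1}_{a_i=a_i'}$; it is conditionally independent and belief-invariant by inspection. A direct computation shows that, at signal $(s_i,a_i)$ under $(\mathcal S^{+},\sigma^{+})$, player $i$'s interim belief over $(\theta,a_{-i})$ is precisely the conditional appearing in the ($0$-)obedience constraint of the BIBCE $\sigma$; hence $\sigma^{+}$ is a BNE of $(\mathcal G,\mathcal S^{+})$ and the marginal of $Q^{+}\circ\sigma^{+}$ on $A\times\Theta$ equals $\nu_\sigma$.

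\textbf{Step 2.} Since $\mathcal S^{+}$ is finite I would break its redundancy exactly as in the proof of Proposition \ref{prop: denseness }. Using $|\Theta|\geq2$, for each player $i$ pick $\rho_{i,\eta}\colon\Theta\times S_i^{+}\to\Delta(\{0,1\})$ with $\|\rho_{i,\eta}(\cdot\mid\theta,s_i^{+})-\rho_{i,\eta}(\cdot\mid\theta',s_i^{+})\|_2<\eta$ and $\rho_{i,\eta}$ bounded away from $0$, chosen so that the finitely many normalized posteriors $\theta\mapsto\rho_{i,\eta}(\ell_i\mid\theta,s_i^{+})\,Q^{+}(\theta\mid s_i^{+})$ are pairwise distinct (a generic choice works because $\Delta(\Theta)$ is nondegenerate). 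Set $\hat Q(\theta,s^{+},\ell):=Q^{+}(\theta,s^{+})\prod_i\rho_{i,\eta}(\ell_i\mid\theta,s_i^{+})$, a common prior with marginal $\mu$ on $\Theta$, and let $\mathcal S'$ be the induced non-redundant information structure. By construction distinct positive-probability signals $(s_i^{+},\ell_i)$ have distinct first-order beliefs, so $\mathcal S'$ is finite with distinct first-order beliefs (\emph{simple}) and the hierarchy map $(s_i^{+},\ell_i)\mapsto\tau_i$ is injective on its support. Because first-order beliefs move by $O(\eta)$ and the label is conditionally independent across players given $(\theta,s^{+})$, the full hierarchies move by $o(1)$ in the product topology as $\eta\to0$ -- the estimate already carried out in the proof of Proposition \ref{prop: denseness } -- so $d^{ACK}(P_{\mathcal S},P_{\mathcal S'})\to0$; fix $\eta$ small enough that $d^{**}(\mathcal S,\mathcal S')\leq\varepsilon$ by Proposition \ref{prop: sufficiency} and metrizability (Proposition \ref{prop:The-ACK-topology}).

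\textbf{Step 3 and the main obstacle.} Since $\tau_i$ determines the recommendation coordinate, set $\sigma'_i(\tau_i):=a_i$; this is conditionally independent and belief-invariant. Marginalizing $\hat Q$ over the opponents' labels, player $i$'s interim belief over $(\theta,a_{-i})$ at $\tau_i$ under $(\mathcal S',\sigma')$ is the $\mathcal S^{+}$-belief of Step 1 reweighted by a factor in $1\pm O(\eta)$, hence within $O(\eta)$ in total variation; as payoff differences are bounded by $2M$, the obedience deficit of $\sigma'$ is $O(M\eta)<\varepsilon$ for $\eta$ small, giving (ii)(a). The labels integrate out of the state--action marginal, so $\nu_{\sigma'}=\nu_\sigma$ exactly, giving (ii)(b); choosing $\eta$ small for Steps 2 and 3 simultaneously finishes the proof. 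The conceptual heart is Step 1 -- belief-invariance is exactly what lets a BIBCE be read as a BNE once the correlating device is folded into the (redundant) type -- while the main technical work is Step 2: that a small, generic, state-dependent perturbation of first-order beliefs simultaneously separates the previously-coinciding hierarchies (this is where $|\Theta|\geq2$ is indispensable) and leaves the induced non-redundant structure ACK-close to the original; both facts are reused essentially verbatim from the proof of Proposition \ref{prop: denseness }.
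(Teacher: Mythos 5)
Your proposal is correct and follows essentially the same route as the paper: augment signals with the BIBCE's action recommendations (so that "obey the recommendation" is a BNE on a redundant structure whose non-redundant representation is unchanged, via belief invariance / the \cite{liu15} decomposition), then apply the same state-dependent $\{0,1\}$-label perturbation as in the proof of Proposition \ref{prop: denseness } to break the redundancy into a simple, ACK-close information structure on which obedience survives up to $\varepsilon$. Your write-up is, if anything, slightly more explicit than the paper's about why the hierarchy map is preserved in Step 1 and about the $O(M\eta)$ obedience bound and exact outcome equality in Step 3.
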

\begin{proof}
Let $Q\circ\sigma\in\Delta\left(A\times T\times\Theta\right)$ be
the extended outcome corresponding to $\sigma\in\mathcal{B}(\mathcal{G},\mathcal{S})$.
Consider the (non-canonical) information structure where each player's
signal space was $S_{i}=A_{i}\times T_{i}$ and the prior was $Q\circ\sigma$.
Note that since $\sigma$ is an arbitrary BIBCE, this information
structure will in general have redundancies. But under this information
structure, there is a pure strategy BNE $\sigma'$ where each player
sets his action equal to his ``recommendation'' (i.e., the action
component of his signal): $\sigma'(a_{i}|a_{i},\tau_{i})=\boldsymbol{1}_{a_{i}}$
and so 
\[
\sum_{\theta,\tau}\Delta u_{i}(a,a_{i}',\theta)\prod_{i}\sigma'_{i}(a_{i}|a_{i},\tau_{i})\sigma(a|\theta,\tau)Q(\theta,\tau_{-i}|\tau_{i})=\sum_{\theta,\tau}\Delta u_{i}(a,a_{i}',\theta)\sigma(a|\theta,\tau)Q(\theta,\tau_{-i}|\tau_{i})\geq0.
\]

We now apply the same pertubation as in the proof of Proposition \ref{prop: denseness }:
Let $\Omega_{i,Q\circ\sigma}(\tau_{i},a_{i}):=\left\{ \left(\hat{\tau}_{i},\hat{a}_{i}\right):P\circ\sigma(\hat{\tau}_{i},\hat{a}_{i})>0,\hat{\tau}_{i}^{1}=\tau_{i}^{1}\right\} $.
Let $\rho_{i,\varepsilon}\colon\Theta\times\mathcal{T}_{i}\to\Delta(\left\{ 0,1\right\} )$
have the property that $\rho_{i,\varepsilon}(\theta,\hat{\tau}_{i},\hat{a}_{i})=\rho_{i,\varepsilon}(\theta',\hat{\tau}_{i},\hat{a}_{i})$
for all $\theta,\theta'\in\Theta$ and all $\hat{\tau}_{i},\hat{a}_{i}$
in the support of $Q\circ\sigma$ satisfying $|\Omega_{i,P\circ\sigma}(\hat{\tau}_{i},\hat{a}_{i})|=1$.
For any $\hat{\tau}_{i},\hat{a}_{i}$ in the support of $Q\circ\sigma$
satisfying $|\Omega_{i,Q\circ\sigma}(\hat{\tau}_{i},\hat{a}_{i})|>1$
let $||\rho_{i,\varepsilon}(\theta,\hat{\tau}_{i},\hat{a}_{i})-\rho_{i,\varepsilon}(\theta',\hat{\tau}_{i},\hat{a}_{i})||_{2}<\varepsilon$
so that for all distinct $\left(\tilde{\tau}_{i},\tilde{a}_{i}\right),\left(\bar{\tau}_{i},\bar{a}_{i}\right)\in\Omega_{i,P_{\delta}}(\tau_{i},a_{i})$
and $s_{i},s_{i}'\in\left\{ 0,1\right\} $,
\[
\frac{\rho_{i,\varepsilon}(s_{i}|\theta,\tilde{\tau_{i}},\tilde{a}_{i})Q\circ\sigma(\theta|\tilde{\tau}_{i},\tilde{a}_{i})}{\sum_{\hat{\theta}}\rho_{i,\varepsilon}(s_{i}|\hat{\theta},\tilde{\tau_{i}},\tilde{a}_{i})Q\circ\sigma(\hat{\theta}|\tilde{\tau}_{i},\tilde{a}_{i})}\neq\frac{\rho_{i,\varepsilon}(s_{i}'|\theta,\bar{\tau}_{i},\bar{a}_{i})Q\circ\sigma(\theta|\bar{\tau}_{i},\bar{a}_{i})}{\sum_{\hat{\theta}}\rho_{i,\varepsilon}(s_{i}'|\hat{\theta},\bar{\tau}_{i},\bar{a}_{i})Q\circ\sigma(\hat{\theta}|\bar{\tau}_{i},\bar{a}_{i})}.
\]
The prior $\hat{Q}(\theta,\tau,a,s)=Q\circ\sigma(\theta,\tau,a)\prod_{i}\rho_{i,\varepsilon}(s_{i}|\theta,\tau,a)$
induces canonical prior $Q'\in\mathcal{P}$ so that $d^{ACK}(P,P')<\varepsilon$
and $\sigma'$ induces an outcome equivalent $\varepsilon$-BNE on
$P'$. 
\end{proof}
Proposition \ref{encode} shows that that any correlation device required
for a BIBCE can be embedded in the universal state space at the expense
of $\varepsilon$-slack in the obedience constraint. Notice that under
complete information (i.e., if there was a single state), it would
not be possible to do so. Now the denseness of simple information
structures implies the immediate corollary that this is true for all
information structures (not just canonical ones). 
\begin{cor}
\label{cor:BNE1}Let $|\Theta|\geq2$. For any information structure
$P\in\mathcal{P}$, any BIBCE \textup{$\sigma\in\mathcal{B}\left(\mathcal{G},P\right)$}
and any $\varepsilon>0$, there exists (i) a simple information structure
$P'\in\mathcal{P}^{*}$ such that $d^{*}\left(P,P'\right)\leq\varepsilon$
; and (ii) a decision rule $\sigma$' such that (a) $\sigma$' is
a $\varepsilon$-BNE of \textup{$\left(\mathcal{G},P'\right)$ and
(b) the outcome induced by $P'\circ\sigma'$ is }$\varepsilon$-close
to the outcome induced by $P\circ\sigma$. 
\end{cor}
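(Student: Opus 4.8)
The plan is to bootstrap from the two pieces already in place: Proposition \ref{encode}, which settles the claim when the information structure is \emph{finite}, and Proposition \ref{prop: denseness } together with Proposition \ref{prop: sufficiency}, which let one replace an arbitrary $P$ by a nearby finite structure while transporting the given BIBCE $\sigma$ across without disturbing its induced outcome. So the corollary should follow by composing a ``finite approximation'' step with an ``embed the correlating device'' step, dividing the error budget $\varepsilon$ into thirds.

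First I would fix $\varepsilon>0$ and, using Proposition \ref{prop: denseness }, choose a simple — hence finite — information structure $\hat P\in\mathcal{P}^{*}$ with $d^{ACK}(P,\hat P)$ small enough that Proposition \ref{prop: sufficiency} gives $d^{*}(P,\hat P\mid\mathcal{G})<\varepsilon/3$. Since $\sigma$ is an exact BIBCE of $(\mathcal{G},P)$, its induced outcome $\nu_{\sigma}$ lies in $\mathcal{O}^{0}(\mathcal{G},P)\subseteq\mathcal{O}_{\varepsilon/3}(\mathcal{G},\hat P)$, so there is an $\varepsilon/3$-BIBCE $\hat\sigma$ of $(\mathcal{G},\hat P)$ with $\|\nu_{\hat\sigma}-\nu_{\sigma}\|_{2}<\varepsilon/3$. (Concretely one can take $\hat\sigma$ to be the $\zeta$-pushforward of $\sigma$ along the grid map of Proposition \ref{prop: denseness } and read off its obedience slack and outcome displacement directly from the estimates in the proof of Proposition \ref{prop: sufficiency}; the black-box statement is all that is needed.) Then I would apply Proposition \ref{encode} to the finite structure $\hat P$ and the decision rule $\hat\sigma$: it produces a simple $P'\in\mathcal{P}^{*}$ with $d^{*}(\hat P,P')=d^{**}(\hat P,P')\le\varepsilon/3$ and a belief-invariant, conditionally independent decision rule $\sigma'$ on $(\mathcal{G},P')$ whose outcome is within $\varepsilon/3$ of $\nu_{\hat\sigma}$. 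Chaining the two outcome bounds gives $\|\nu_{\sigma'}-\nu_{\sigma}\|_{2}<2\varepsilon/3<\varepsilon$, which is part (ii)(b); and since closeness in $d^{*}$ agrees with ACK-closeness (Theorem \ref{thm: main}) on a metrizable topology (Proposition \ref{prop:The-ACK-topology}), shrinking the $\delta$ chosen in the first step keeps the composite $P\mapsto\hat P\mapsto P'$ inside any prescribed strategic neighborhood of $P$, giving $d^{*}(P,P')\le\varepsilon$, which is part (i).

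The only step that is not pure bookkeeping — and the one I would expect to be the real obstacle — is that Proposition \ref{encode} is stated for an \emph{exact} BIBCE, while $\hat\sigma$ is merely $\varepsilon/3$-obedient. I would verify that the construction in its proof degrades gracefully: building the auxiliary redundant structure with signals $S_{i}=A_{i}\times\mathcal{T}_{i}$ and prior $\hat P\circ\hat\sigma$, the ``obey your recommendation'' pure strategy inherits exactly the $\varepsilon/3$ slack of the obedience inequality of $\hat\sigma$, and the first-order-belief perturbation $\rho_{i,\varepsilon/3}$ that turns the structure into a simple one perturbs each interim conditional — hence each obedience inequality — by at most a further $\varepsilon/3$ (the same estimate that yields $d^{ACK}(\hat P,P')<\varepsilon/3$ there). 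Hence $\sigma'$ is a $2\varepsilon/3$-BNE of $(\mathcal{G},P')$ and, after tightening constants, an $\varepsilon$-BNE, which is (ii)(a). One should also record that $\sigma$, $\hat\sigma$ and the perturbation all preserve the prior $\mu$ on $\Theta$, so $P'$ is a genuine element of $\mathcal{P}^{*}$, and the three-way split of $\varepsilon$ then closes the argument.
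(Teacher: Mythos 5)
Your proposal is correct and follows essentially the same route as the paper: the paper's own proof is exactly ``use denseness (Proposition \ref{prop: denseness }) to get a simple $P''$ with $d^{*}(P,P'')\leq\varepsilon$, then apply Proposition \ref{encode} to $P''$.'' The extra care you take — transporting $\sigma$ to an $\varepsilon/3$-BIBCE $\hat\sigma$ of the simple approximation and checking that the construction in Proposition \ref{encode} degrades gracefully under approximate obedience — is not a different argument but a welcome filling-in of a step the paper's two-line proof leaves implicit (Proposition \ref{encode} is stated for an exact BIBCE of the finite structure, while strategic closeness only delivers an approximate one with a nearby outcome), so your version is, if anything, more complete.
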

\begin{proof}
The denseness of simple information structures implies that there
exists a simple information structure $P''$ with $d^{*}\left(P,P"\right)\leq\varepsilon$.
Now the corollary follows from applying Proposition \ref{encode}
to $P''$. 
\end{proof}
Corollary \ref{cor:BNE1} implies that if we extend the notion of
approximate BNE to allow not only only slack in the obedience constraints
but also to allow nearby (in the ACK topology) information structures,
we can first establish the existence of approximate BNE and then establish
continuity of (approximate) BNE outcomes with respect to the ACK topology. 

We first define an extended notion of $\varepsilon$-BNE outcomes: 

\[
\mathcal{O}_{\varepsilon}^{BNE^{*}}\left(\text{\ensuremath{\mathcal{G}}},P\right)=\left\{ \nu\in\Delta\left(A\times\Theta\right):\exists\ P'\text{ with }d^{*}\left(P,P'\right)\ensuremath{\leq\varepsilon},\text{\ensuremath{\sigma\in\mathcal{B}_{\varepsilon}^{BNE}(\text{\ensuremath{\mathcal{G}}},P)}}\text{ s.t. }||v-v_{\sigma}||<\varepsilon\right\} 
\]

Now we have existence: 
\begin{cor}
For every game $\mathcal{G}$, prior $P\in\mathcal{P}$ and $\varepsilon>0$,
$\mathcal{O}_{\varepsilon}^{BNE}(\mathcal{G},P)\neq\emptyset$. 
\end{cor}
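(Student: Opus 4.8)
The plan is to derive non-emptiness directly from existence of BIBCE together with the embedding result of Corollary~\ref{cor:BNE1}, after separating off the trivial case $|\Theta|=1$.

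First I would handle $|\Theta|=1$. In that case $\mathcal{T}$ is a singleton, so $\Omega$ is a single point, $\mathcal{P}$ consists only of the point mass $P$, and $(\mathcal{G},P)$ is an ordinary finite normal-form game. A mixed Nash equilibrium of this game exists; regarded as a product decision rule it is conditionally independent and (vacuously) belief-invariant, hence a BNE of $(\mathcal{G},P)$. Taking $P'=P$ and $\sigma$ equal to this BNE — which in particular is an $\varepsilon$-BNE and satisfies $d^{*}(P,P')=0\le\varepsilon$ and $\|\nu_{\sigma}-\nu_{\sigma}\|=0<\varepsilon$ — shows that $\nu_{\sigma}$ lies in the extended $\varepsilon$-BNE outcome set, which is therefore non-empty.

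For $|\Theta|\ge 2$ I would argue as follows. By Proposition~\ref{prop:(Existence-of-Equilibria)} there exists a BIBCE $\sigma\in\mathcal{B}(\mathcal{G},P)$, with induced outcome $\nu_{\sigma}\in\Delta(A\times\Theta)$. Applying Corollary~\ref{cor:BNE1} to $P$, $\sigma$ and $\varepsilon$ yields a simple information structure $P'\in\mathcal{P}^{*}$ with $d^{*}(P,P')\le\varepsilon$, together with a decision rule $\sigma'$ that is an $\varepsilon$-BNE of $(\mathcal{G},P')$ and whose induced outcome $\nu_{\sigma'}$ satisfies $\|\nu_{\sigma}-\nu_{\sigma'}\|<\varepsilon$. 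Then the triple consisting of $P'$, the $\varepsilon$-BNE $\sigma'$ of $(\mathcal{G},P')$, and the outcome $\nu_{\sigma}$ witnesses membership of $\nu_{\sigma}$ in the extended $\varepsilon$-BNE outcome set of $(\mathcal{G},P)$, so this set is non-empty.

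The argument is short because all the substantive work is already carried out in Corollary~\ref{cor:BNE1}; there is no real obstacle. The two points requiring attention are, first, that the embedding construction behind Corollary~\ref{cor:BNE1} is unavailable when $|\Theta|=1$ (as noted after Proposition~\ref{encode}), which is why that case must be dispatched separately via ordinary Nash existence; and second, that the outcome we exhibit is the \emph{BIBCE} outcome $\nu_{\sigma}$ under $P$ rather than an $\varepsilon$-BNE outcome under $P'$ itself — these are only $\varepsilon$-close, but that is precisely what the definition of the extended outcome set demands.
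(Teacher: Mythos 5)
Your proof is correct and follows essentially the same route as the paper: both arguments rest on Corollary \ref{cor:BNE1}, which supplies a simple information structure $P'$ within $\varepsilon$ of $P$ carrying an $\varepsilon$-BNE, and hence a point of the extended outcome set. The only cosmetic difference is that the paper's one-line proof additionally cites exact BNE existence on finite information structures, whereas you feed a BIBCE of $(\mathcal{G},P)$ (Proposition \ref{prop:(Existence-of-Equilibria)}) into Corollary \ref{cor:BNE1} and dispatch the $|\Theta|=1$ case separately via ordinary Nash existence; the substance is the same.
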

\begin{proof}
This result follows from Corollary \ref{cor:BNE1} and the fact that
$\mathcal{O}^{BNE}(\mathcal{G},P)\neq\emptyset$ for all finite $P\in\mathcal{P}$. 
\end{proof}
We now introduce a ``richness'' property of a base game. 
\begin{defn}
(richness) A base game $\mathcal{G}$ is \emph{rich} if for every
action profile $a\in A$, there exists $\theta_{a}\in\varTheta$ such
that for every player $i$, every $a_{i}\in A_{i}$ and every $a_{i}'\neq a_{i}$,
\[
u_{i}(a_{i},a_{-i},\theta_{a})-u_{i}(a_{i}',a_{-i},\theta_{a})>0.
\]
\end{defn}
With richness, we have continuity of approximate BNE outcomes: 
\begin{prop}
\label{prop:BNE_Main}Let $|\Theta|\geq2$. Then for every rich base
game $\mathcal{G}$ and any information structure $P\in\mathcal{P}$,
\[
\lim_{\varepsilon\downarrow0}\bigcup_{\hat{P}\in\mathcal{P}^{*}:d^{ACK}(P,\hat{P})\leq\varepsilon}\mathcal{O}^{BNE}(\mathcal{G},\hat{P})=\mathcal{O}(\mathcal{G},P).
\]
\end{prop}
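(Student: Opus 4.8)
The plan is to prove the equality as a double inclusion, after fixing the reading of the limit of sets. Write $U_{\varepsilon}:=\bigcup_{\hat P\in\mathcal P^{*}:\,d^{ACK}(P,\hat P)\le\varepsilon}\mathcal O^{BNE}(\mathcal G,\hat P)$; this is decreasing in $\varepsilon$, and I read the left-hand side as $\bigcap_{\varepsilon>0}\overline{U_{\varepsilon}}$, i.e. as the set of outcomes $\nu$ for which there are simple $\hat P^{k}\to P$ in the ACK topology and BNE outcomes $\nu^{k}\in\mathcal O^{BNE}(\mathcal G,\hat P^{k})$ with $\nu^{k}\to\nu$. The ``$\subseteq$'' inclusion is the easy direction and reuses the machinery already established; the ``$\supseteq$'' inclusion is the substantive one and is where richness enters, to upgrade the approximate BNE of Corollary~\ref{cor:BNE1} to an exact one.

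For ``$\subseteq$'': suppose $\hat P^{k}\to P$ in the ACK topology with $\hat P^{k}\in\mathcal P^{*}$, and $\sigma^{k}$ is a BNE of $(\mathcal G,\hat P^{k})$ with $\nu_{\sigma^{k}}\to\nu$. By Proposition~\ref{prop: BNEgen}, $\nu_{\sigma^{k}}\in\mathcal O(\mathcal G,\hat P^{k})=\mathcal O^{0}(\mathcal G,\hat P^{k})$, so Proposition~\ref{prop: sufficiency} applied to the fixed base game $\mathcal G$ yields $\delta_{k}\downarrow0$ with $\nu_{\sigma^{k}}\in\mathcal O_{\delta_{k}}(\mathcal G,P)$; hence there is a $\delta_{k}$-BIBCE $\tilde\sigma^{k}$ of $(\mathcal G,P)$ with $\|\nu_{\sigma^{k}}-\nu_{\tilde\sigma^{k}}\|<\delta_{k}$, so $\nu_{\tilde\sigma^{k}}\to\nu$. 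It then suffices to show that a limit of outcomes of $\delta_{k}$-BIBCE of the \emph{fixed} game $(\mathcal G,P)$, with $\delta_{k}\downarrow0$, is a BIBCE outcome. This is upper hemicontinuity of the BIBCE correspondence in the obedience slack at $0$: decision rules for $(\mathcal G,P)$ form a bounded, hence weak$^{*}$-sequentially compact, subset of $L^{\infty}(\Omega,P;\mathbb R^{|A|})$ (with values in $\Delta(A)$); the interim $\varepsilon$-obedience inequalities and belief invariance (a closed-subspace/measurability constraint) are weak$^{*}$-closed, and $\sigma\mapsto\nu_{\sigma}$ is weak$^{*}$-continuous since $\Theta,A$ are finite and $\mathbf 1_{\omega_{\theta}=\theta}$ is a fixed bounded measurable integrand. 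Passing to a weak$^{*}$-limit of $\tilde\sigma^{k}$ gives a $0$-BIBCE of $(\mathcal G,P)$ with outcome $\nu$, so $\nu\in\mathcal O(\mathcal G,P)$.

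For ``$\supseteq$'': fix $\nu\in\mathcal O(\mathcal G,P)$ and $\varepsilon>0$ small relative to a uniform lower bound $\eta_{0}>0$ on the strict inequalities in the definition of richness. By Corollary~\ref{cor:BNE1} there is a simple $P'\in\mathcal P^{*}$ with $d^{*}(P,P')\le\varepsilon$ carrying a conditionally independent, belief-invariant, $\varepsilon$-obedient decision rule $\sigma'$ whose outcome is within $\varepsilon$ of $\nu$. I would remove the $\varepsilon$ slack by appending a low-probability ``disciplining mode'': enlarge each player's signal to record the action $\sigma'$ recommends, and form $\hat Q=(1-\xi)Q_{\mathrm{main}}+\xi Q_{\mathrm{corr}}$ on these augmented signals, where $Q_{\mathrm{main}}$ is $P'$ together with the (conditionally independent) recommendation draw, and $Q_{\mathrm{corr}}$ draws the recommendation profile from the \emph{same} marginal but overrides the state to be $\theta_{a}$ for the realized recommendation profile $a$. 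Because recommendations are generated identically in the two modes, the posterior weight every augmented type places on the disciplining mode is the constant $\xi$; under ``everyone plays their recommendation'', richness makes the disciplining-mode obedience margin at least $\eta_{0}$, while the main-mode margin is at least $-\varepsilon$, so the total margin is at least $\xi\eta_{0}-(1-\xi)\varepsilon>0$ for $\xi$ a suitable $O(\varepsilon/\eta_{0})$. Thus ``play your recommendation'' is an exact, conditionally independent, belief-invariant BIBCE of $\hat Q$, i.e. a BNE. A routine rescaling of the state marginal of $Q_{\mathrm{main}}$ (possible since $\mu$ has full support and $\xi$ is small) restores $\mathrm{marg}_{\Theta}\hat Q=\mu$, and a final first-order-belief perturbation of the type in Proposition~\ref{encode}, taken small enough relative to $\xi\eta_{0}$, makes $\hat Q$ non-redundant and simple without destroying exact obedience. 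The resulting $\hat P\in\mathcal P^{*}$ has $d^{ACK}(P,\hat P)=O(\varepsilon)$ and an exact BNE whose outcome is $O(\varepsilon)$ from $\nu$ (the disciplining mode moves the outcome by only $O(\xi)$); letting $\varepsilon\downarrow0$ places $\nu$ in $\bigcap_{\varepsilon>0}\overline{U_{\varepsilon}}$.

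The main obstacle is precisely the ``$\supseteq$'' direction: converting the approximate BNE of Corollary~\ref{cor:BNE1} into an exact BNE while staying inside $\mathcal P^{*}$ and ACK-close to $P$. Richness is what makes this possible, since it supplies, for every recommendation profile, a payoff state at which that profile is a strict equilibrium, so a vanishingly small admixture of complete-information-like disciplining structure turns the $O(\varepsilon)$ obedience slack into a strict best response. The accompanying bookkeeping — keeping $\mathrm{marg}_{\Theta}$ equal to $\mu$, eliminating the redundancies introduced both by the recommendation coordinate and by the disciplining mode, and bounding the ACK distance — is routine, but it must be carried out in the right order so that nothing reintroduces slack that $\xi\eta_{0}$ cannot absorb. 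Without richness one obtains only approximate BNE, which is why the statement is confined to rich base games.
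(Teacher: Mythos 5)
Your overall architecture coincides with the paper's. For ``$\subseteq$'' you use $\mathcal{O}^{BNE}\subseteq\mathcal{O}$ (Proposition \ref{prop: BNEgen}) together with Proposition \ref{prop: sufficiency} and an upper hemi-continuity argument at zero slack; the paper instead cites Proposition \ref{prop:Properconv}, but your weak$^{*}$-compactness substitute is legitimate, since the obedience and belief-invariance constraints are linear in the decision rule and the outcome map is weak$^{*}$-continuous. For ``$\supseteq$'' you use Corollary \ref{cor:BNE1} and then a richness-based perturbation that puts mass of order $\varepsilon/J_{\mathcal{G}}$ on an event where the state is $\theta_{a}$ for the profile $a$ actually played, so that the strict margin $J_{\mathcal{G}}$ absorbs the $\varepsilon$ obedience slack; this is exactly the paper's $\rho_{\delta}$ construction with $\delta=\varepsilon/(J_{\mathcal{G}}+\varepsilon)$, and your explicit attention to restoring simplicity via a Proposition \ref{encode}-type perturbation matches (indeed makes explicit) what the paper leaves implicit.

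The one step that is not justified as written is your treatment of consistency. In your two-mode mixture the disciplining mode overrides the state to $\theta_{a}$, so $\mathrm{marg}_{\Theta}\hat{Q}=(1-\xi)\mu+\xi\lambda\neq\mu$, and you repair this by ``a routine rescaling of the state marginal of $Q_{\mathrm{main}}$.'' That rescaling is not routine for your purposes: it multiplies main-mode cells by factors $1\pm O\bigl(\xi/\min_{\theta}\mu(\theta)\bigr)$, hence shifts every augmented type's interim beliefs by total variation of order $\xi/\min_{\theta}\mu(\theta)$, which can move the obedience margin by up to order $M\xi/\min_{\theta}\mu(\theta)$. The disciplining gain is only $\xi\eta_{0}$, and since $\eta_{0}\leq2M$ while $\min_{\theta}\mu(\theta)$ may be small, this damage need not be ``small relative to $\xi\eta_{0}$''; unlike the final first-order-belief perturbation, whose size is a free parameter you control, the size of the rescaling is pinned down by the marginal deficit. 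The paper sidesteps this point by formulating the $\delta$-perturbation conditionally on the realized state (so the $\Theta$-marginal is untouched by construction). If you want to keep the state-override formulation, you need a repair that does not uniformly tilt the main mode, for instance absorbing the marginal correction into low-probability, self-contained blocks of new types whose play is an exact Nash equilibrium of the relevant small game (exact obedience there needs no strictness), rather than reweighting existing cells; otherwise adopt the paper's state-conditional version of the perturbation.
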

\begin{proof}
Let $\hat{P}\in\mathcal{P}^{*}$ satisfy $d^{ACK}(P,\hat{P})\leq\varepsilon$.
Then by Corollary \ref{cor:BNE1}, for every $\nu\in\mathcal{O}(\mathcal{G},P)$
there exists a pure strategy $\varepsilon$-BNE, $\sigma$, of $(\mathcal{G},\hat{P})$,
so that $||\nu_{\sigma}-\nu||_{2}\leq\varepsilon$. For every $\tau$
denote the associated action recommendation by $\alpha(\theta,\tau)=\left(\alpha_{i}(\tau_{i})\right)_{i}$,
where $\sigma(\alpha(\theta,\tau)|\theta,\tau)=1$. For any choice
$\delta>0$, define the stochastic map $\rho_{\delta}\colon\Theta\times\mathcal{T}\to\Delta(A)$
\[
\rho_{\delta}(a|\theta,\tau)=\begin{cases}
1-\delta & \text{ if }a=\alpha(\theta,\tau)\\
\delta & \text{ if }\theta=\theta_{a}\\
0 & \text{otherwise.}
\end{cases}
\]
For every $(\theta,\tau,a)\in\Theta\times\mathcal{T}\times A$, let
$P_{\delta}(\theta,\tau,a):=\hat{P}(\theta,\tau)\rho_{\delta}(a|\theta,\tau)$
and note that $P_{\delta}$ has a canonical representation $\hat{P}_{\delta}\in\mathcal{P}$.
Let $J_{\mathcal{G}}:=\min_{i,a,a'}u_{i}(a_{i},a_{-i},\theta_{a})-u_{i}(a_{i}',a_{-i},\theta_{a})$.
Then for every $i$ and type $\tau_{i}$ in the support of $\hat{P}$,
and any deviation $a_{i}'$,
\[
\begin{aligned}\sum_{\theta,\tau,a}\Delta u_{i}(\alpha_{i}(\tau_{i}),\alpha_{-i}(\tau_{-i}),a_{i}',\theta)\hat{P}_{\delta}(\theta,\tau_{-i}|\tau_{i}) & >-\varepsilon(1-\delta)+\delta\sum_{\theta,\tau,a}\Delta u_{i}(\alpha_{i}(\tau_{i}),\alpha_{-i}(\tau_{-i}),a_{i}',\theta_{\alpha(\theta,\tau)})\hat{P}_{\delta}(\theta,\tau_{-i}|\tau_{i})\\
 & \geq-\varepsilon(1-\delta)+\delta J_{\mathcal{G}}
\end{aligned}
\]
Letting $\delta=\frac{\varepsilon}{J_{\mathcal{G}}+\varepsilon}$
implies that $\sigma\in\mathcal{B}_{BNE}(\mathcal{G},\hat{P}_{\delta}).$Moreover,
$d^{ACK}(\hat{P},\hat{P}_{\delta})\leq\delta$ and so
\[
\lim_{\varepsilon\downarrow0}\bigcup_{\hat{P}\in\mathcal{P}^{*}\cap\mathcal{P}:d^{ACK}(P,\hat{P})\leq\varepsilon}\mathcal{O}^{BNE}(\mathcal{G},\hat{P})\supseteq\mathcal{O}(\mathcal{G},P).
\]
 The property of upper hemi-continuity established in Proposition
\ref{prop:Properconv} readily extends to the priors in the subset
$\mathcal{P}$ and so we also have that 
\[
\lim_{\varepsilon\downarrow0}\bigcup_{\hat{P}\in\mathcal{P}^{*}\cap\mathcal{P}:d^{ACK}(P,\hat{P})\leq\varepsilon}\mathcal{O}^{BNE}(\mathcal{G},\hat{P})\subseteq\lim_{\varepsilon\downarrow0}\bigcup_{\hat{P}\in\mathcal{P}^{*}\cap\mathcal{P}:d^{ACK}(P,\hat{P})\leq\varepsilon}\mathcal{O}(\mathcal{G},\hat{P})\subseteq\mathcal{O}(\mathcal{G},P),
\]
and so the result follows.
\end{proof}

\subsection{Discussion}

We conclude this section by discussing related literature about correlating
devices and BNE. 

A number of papers have highlighted the importance of redundant types,
or correlating devices, for BNE, see for example \cite{liu09redundant}
and \cite{sadzik19revealed}.  Our approach in this section is to
observe that such correlation devices can be embedded in the universal
state space in an almost payoff-irrelevant way, so it is natural to
work with (non-redundant) information structures even if one is interested
in BNE. \cite{elpe11} propose an alternative to the standard universal
type space that embeds some correlation devices. 

In the context of complete information games, \cite{brandenburger08intrinsic}
(see also \cite{du12correlated}) asked if correlation devices (supporting
correlated equilibrium) could reflect higher-order strategic uncertainty
(in this case, they said there is intrinsic correlation) or if extrinsic
correlation is required. Their answer is that most correlated equilibria
can be explained by intrinsic correlation alone. An analogous question
(in an incomplete information context) to ask is which BIBCE could
reflect higher-order uncertainty about strategic uncertainty and payoffs.
The spirit of our results is that most BIBCE can be justified this
way. 

\cite{gossner00comparison} provided a partial order on correlating
devices (for complete information games) capturing which correlating
devices would support a larger set of correlated equilibria all games.
A natural exercise would be to define a topology on correlating devices
(generating continuity of the set of correlated equilibria) although
as far as we know that has not been done. We could imagine decomposing
a general information structure into a canonical information structure
and a correlating device and defining a topology on canonical information
structure / correlating device pairs that was sufficient for continuity
of BNE outcomes. We have not pursued this approach. 

\section{Information Design\label{sec:Information-Design}}

When studying information design problems, there will typically be
many equilibria. In formulating information design problems, one must
decide which equilibrium will be played. Two standard choices are
to assume (i) the best equilibrium for the designer is played; or
(ii) the worst equilibrium for the designer is played. In this section,
we propose a formulation of information design problems that includes
both those cases, but also allows for any continuous selection of
equilibrium. 

We will consider the following class of information design problems.
A designer has a continuous (in the Hausdorff topology) objective
function on \emph{sets} of outcomes 
\[
V\colon2^{\Delta(A\times\Theta)}\setminus\emptyset\to\mathbb{R}.
\]
Recall that $\mathcal{P}^{*}\subseteq\mathcal{P}$ denotes the collection
of simple (i.e., finite and first-order) information structures. Now
we have: 
\begin{prop}
\label{prop: information design}For any rich $\mathcal{G}$ and any
open set $\mathcal{P}'\subseteq\mathcal{P}$,
\[
\sup_{\mathcal{S}:P_{\mathcal{S}}\in\mathcal{P}'}V(\mathcal{O}^{BNE}(\mathcal{G},\mathcal{S}))=\sup_{P\in\mathcal{P}'\cap\mathcal{P}^{*}}V(\mathcal{O}^{BNE}(\mathcal{G},P))=\sup_{P\in\mathcal{P}'\cap\mathcal{P}^{*}}V(\mathcal{O}(\mathcal{G},P)).
\]
\end{prop}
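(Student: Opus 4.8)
Write $V_{1},V_{2},V_{3}$ for the three suprema in the statement, read from left to right. The plan is to show $V_{2}\le V_{1}\le V_{3}\le V_{2}$, reducing everything to simple information structures using the continuity of $V$ and the density of $\mathcal{P}^{*}$. First I would settle the ``BIBCE side''. As $A$ and $\Theta$ are finite, $\Delta(A\times\Theta)$ is compact, so its nonempty closed subsets form a compact metric space under the Hausdorff metric $d_{H}$, on which $V$ is continuous by hypothesis. By Proposition~\ref{prop: measurable}, $\mathcal{O}(\mathcal{G},\mathcal{S})=\mathcal{O}(\mathcal{G},P_{\mathcal{S}})$ depends only on the non-redundant representation, so $\sup_{\mathcal{S}:P_{\mathcal{S}}\in\mathcal{P}'}V(\mathcal{O}(\mathcal{G},\mathcal{S}))=\sup_{P\in\mathcal{P}'}V(\mathcal{O}(\mathcal{G},P))$; and by the continuity of exact BIBCE outcomes (the extension of Theorem~\ref{thm: main}, Proposition~\ref{prop:Properconv}) the map $P\mapsto V(\mathcal{O}(\mathcal{G},P))$ is continuous on $(\mathcal{P},d^{ACK})$. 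Since $\mathcal{P}'$ is open and $\mathcal{P}^{*}$ is ACK-dense (Proposition~\ref{prop: denseness }), every $P\in\mathcal{P}'$ is an ACK-limit of points of $\mathcal{P}'\cap\mathcal{P}^{*}$, so $\sup_{P\in\mathcal{P}'}V(\mathcal{O}(\mathcal{G},P))=\sup_{P\in\mathcal{P}'\cap\mathcal{P}^{*}}V(\mathcal{O}(\mathcal{G},P))=V_{3}$.

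The two inclusions $V_{2}\le V_{1}$ and $V_{1}\le V_{3}$ should be cheap. For $V_{2}\le V_{1}$, a simple information structure is itself a non-redundant general information structure $\mathcal{S}$ with $P_{\mathcal{S}}=P$ and the same Bayes--Nash decision rules and outcomes, so the index set of $V_{2}$ is contained in that of $V_{1}$. For $V_{1}\le V_{3}$, Proposition~\ref{prop: BNEgen} gives $\mathcal{O}^{BNE}(\mathcal{G},\mathcal{S})\subseteq\mathcal{O}(\mathcal{G},\mathcal{S})=\mathcal{O}(\mathcal{G},P_{\mathcal{S}})$ with $P_{\mathcal{S}}\in\mathcal{P}'$; for the objectives of primary interest ($V(\mathcal{O})=\max_{\nu}g(\nu)$ or $\min_{\nu}g(\nu)$ for continuous $g$) $V$ is monotone along this inclusion, and in general one absorbs the inclusion by observing that stacking an arbitrarily rich payoff-irrelevant correlating device onto $\mathcal{S}$ leaves $P_{\mathcal{S}}$ unchanged while enlarging $\mathcal{O}^{BNE}$ with $d_{H}$-limit $\mathcal{O}(\mathcal{G},P_{\mathcal{S}})$; either way $V(\mathcal{O}^{BNE}(\mathcal{G},\mathcal{S}))\le\sup_{P'\in\mathcal{P}'}V(\mathcal{O}(\mathcal{G},P'))=V_{3}$.

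The substantive step is $V_{3}\le V_{2}$: I must recover the \emph{whole} BIBCE outcome set of a simple $P\in\mathcal{P}'\cap\mathcal{P}^{*}$, up to $d_{H}$-distance $\varepsilon$, as the BNE outcome set of a \emph{simple} structure still inside $\mathcal{P}'$. I would fix a finite $\varepsilon$-net $\{\nu_{1},\dots,\nu_{k}\}\subseteq\mathcal{O}(\mathcal{G},P)$ of BIBCE outcomes, $\nu_{j}=\nu_{\sigma_{j}}$, and combine the device-embedding of Proposition~\ref{encode}/Corollary~\ref{cor:BNE1} with the perturbation from the proof of Proposition~\ref{prop: denseness }: put on top of $P$ the correlated recommendation signals of all the $\sigma_{j}$ simultaneously, and destroy every coincidence of first-order beliefs by an arbitrarily small perturbation, obtaining a non-redundant simple $\hat{P}_{\varepsilon}$ with $d^{ACK}(P,\hat{P}_{\varepsilon})$ as small as desired; richness then lets me add the small-probability device tied to the states $\theta_{a}$ used in the proof of Proposition~\ref{prop:BNE_Main}, turning ``obey the $j$-th recommendation, ignore the rest'' into an \emph{exact} BNE for each $j$, so that $\mathcal{O}^{BNE}(\mathcal{G},\hat{P}_{\varepsilon})$ meets every $\nu_{j}$ within $O(\varepsilon)$, while at the same time $\mathcal{O}^{BNE}(\mathcal{G},\hat{P}_{\varepsilon})\subseteq\mathcal{O}(\mathcal{G},\hat{P}_{\varepsilon})\subseteq\mathcal{N}_{O(\varepsilon)}(\mathcal{O}(\mathcal{G},P))$ by ACK-closeness and the continuity of $\mathcal{O}(\mathcal{G},\cdot)$. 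Hence $d_{H}(\mathcal{O}^{BNE}(\mathcal{G},\hat{P}_{\varepsilon}),\mathcal{O}(\mathcal{G},P))\to0$, so $V(\mathcal{O}^{BNE}(\mathcal{G},\hat{P}_{\varepsilon}))\to V(\mathcal{O}(\mathcal{G},P))$ by continuity of $V$; letting $\varepsilon\downarrow0$ yields $V_{2}\ge V(\mathcal{O}(\mathcal{G},P))$, and the supremum over $P\in\mathcal{P}'\cap\mathcal{P}^{*}$ gives $V_{2}\ge V_{3}$. Chaining the three relations finishes the proof.

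The main obstacle I anticipate is exactly this last step, and it stems from $V$ being a function of \emph{sets}: for the best-equilibrium objective it would suffice to track a single optimal BIBCE outcome and approximate it by a nearby BNE outcome (Corollary~\ref{cor:BNE1} and richness), but for a general continuous $V$ one must approximate the full BIBCE outcome set, which is convex, by a BNE outcome set, which need not be. This is what forces doing both things at once for an entire $\varepsilon$-grid of target outcomes --- embedding all the required correlating devices into the universal type space so as to remain within the simple, non-redundant information structures and within the open set $\mathcal{P}'$, and then invoking richness to upgrade the resulting $\varepsilon$-best replies to exact ones without leaving a small ACK-ball. The delicate bookkeeping is checking that the common refinement of these recommendation devices creates no BNE whose outcome escapes $\mathcal{N}_{O(\varepsilon)}(\mathcal{O}(\mathcal{G},P))$, which is where the continuity of $\mathcal{O}(\mathcal{G},\cdot)$ in the ACK topology is used.
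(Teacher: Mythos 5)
Your overall architecture (chain $V_{2}\le V_{1}\le V_{3}\le V_{2}$, with density of $\mathcal{P}^{*}$ and continuity doing the work) parallels the paper, but your substantive step is genuinely different from what the paper does. The paper never attempts to approximate the \emph{whole} BIBCE outcome set in Hausdorff distance by BNE outcome sets. Instead it exploits the selection property (\ref{eq:Cont_V}): in Lemma \ref{prop:BNEInfoDes} it fixes the single $V$-relevant BIBCE outcome $\nu_{\sigma}$, uses Proposition \ref{encode}/\ref{prop:BNE_Main} to realize that one outcome as an exact BNE outcome of a nearby simple structure, uses the upper hemi-continuity of Proposition \ref{prop:Properconv} to place the limiting BNE outcome set inside a subset $\mathcal{O}^{\infty}\subseteq\mathcal{O}(\mathcal{G},P)$ containing $\nu_{\sigma}$, and then concludes $V(\mathcal{O}^{BNE}(P^{k},\mathcal{G}))\to V(\mathcal{O}(P,\mathcal{G}))$ from (\ref{eq:Cont_V}) plus Hausdorff continuity of $V$; the same property (\ref{eq:Cont_V}) is what the paper uses to absorb the inclusion $\mathcal{O}^{BNE}\subseteq\mathcal{O}$ in the analogue of your $V_{1}\le V_{3}$ step. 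Your route dispenses with (\ref{eq:Cont_V}) and instead tries to prove the stronger statement $d_{\mathcal{H}}(\mathcal{O}^{BNE}(\mathcal{G},\hat{P}_{\varepsilon}),\mathcal{O}(\mathcal{G},P))\to0$ via an $\varepsilon$-net of BIBCE outcomes and simultaneous embedding of many correlation devices. If it worked, it would cover a larger class of objectives; but it rests on a multi-device embedding lemma that is nowhere in the paper and that you would have to prove (the key point being that belief invariance plus conditional independence of the stacked devices keeps each device's obedience constraints intact when players observe all their own recommendations), on top of the single-device results \ref{encode} and \ref{prop:BNE_Main}.

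Two further places where you over-claim relative to what is available. First, your inclusion $\mathcal{O}(\mathcal{G},\hat{P}_{\varepsilon})\subseteq\mathcal{N}_{O(\varepsilon)}(\mathcal{O}(\mathcal{G},P))$ ``by continuity of $\mathcal{O}(\mathcal{G},\cdot)$'' is not delivered by Theorem \ref{thm: main}, which only controls \emph{approximate}-BIBCE outcomes: ACK-closeness gives $\mathcal{O}^{0}(\mathcal{G},\hat{P}_{\varepsilon})\subseteq\mathcal{O}_{\varepsilon'}(\mathcal{G},P)$, and you still need the step from $\varepsilon'$-BIBCE of $P$ back to exact BIBCE of $P$ (this is exactly the upper hemi-continuity argument of Proposition \ref{prop:Properconv}, which in general only yields convergence to a subset of $\mathcal{O}(\mathcal{G},P)$; for your purposes that direction suffices, and for simple $P$ it can be argued directly, but it must be argued). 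Second, your treatment of $V_{1}\le V_{3}$ is incomplete for objectives that are not monotone under set inclusion (e.g.\ $V_{MIN}$): observing that stacking devices onto $\mathcal{S}$ enlarges $\mathcal{O}^{BNE}$ toward $\mathcal{O}(\mathcal{G},P_{\mathcal{S}})$ bounds the value of the \emph{enlarged} set, not $V(\mathcal{O}^{BNE}(\mathcal{G},\mathcal{S}))$ for the given $\mathcal{S}$, and the fallback ``$V$ is monotone along this inclusion'' only covers $V_{MAX}$-type objectives. The paper handles this step with property (\ref{eq:Cont_V}); without it you would need an additional argument (e.g.\ a set-valued version of Proposition \ref{encode} showing that $\mathcal{O}^{BNE}(\mathcal{G},\mathcal{S})$ itself is Hausdorff-approximated by BNE outcome sets of nearby simple structures), which is again not established in the paper. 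So: same skeleton, genuinely different key mechanism, and the mechanism as written has gaps that would need to be closed before the argument is complete.
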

Thus to choose the optimal information structure within an open set,
it is enough to focus on either BNE or BIBCE with simple information
structures. 

We will first describe how the information design problems described
above fit within this class, and then describe how the result follows
from denseness reported in Section \ref{subsec:Denseness-of-Simple}
and the BNE results reported in Section \ref{sec:Bayes-Nash-Equilibrium}. 

\subsection{Applications}

Our assumption is that the designer cares about a set of outcomes.
We are thinking that the designer cares about the outcomes consistent
with a solution concept. This sub-section spells out the leading examples
of designer objectives where the designer is interested in the best,
or the worst, or some continuous selection from the equilibrium outcomes. 

Suppose that the designer evaluates outcomes with utility function
$u:A\times\Theta\to\mathbb{R}$. 

The usual approach in information design is to assume that the designer
can choose which equilibrium is played. In this case, if $O\subseteq\Delta(A\times\Theta)$
is the set of equilibrium outcomes, the designer's utility will be:
\[
V_{MAX}\left(O\right)=\sup_{\nu\in O}\sum_{a,\theta}\nu\left(a,\theta\right)u\left(a,\theta\right)
\]
This objective is continuous and Proposition \ref{prop: information design}
applies. However, the revelation principle applies to this problem,
so we already know that we can restrict attention to finite information
structures, without appeal to Proposition \ref{prop: information design}
and the machinary behind it. 

An alternative assumption in information design is that there is ``adversarial
equilibrium selection,'' i.e., the designer expects the worst possible
equilibrium (for her) to be played. In this case the designer's utility
over sets of outcomes will be 
\[
V_{MIN}\left(O\right)=\inf_{\nu\in O}\sum_{a,\theta}\nu\left(a,\theta\right)u\left(a,\theta\right).
\]
A few papers have studied this problem is recent years (\cite{mathevet20information},
\cite{inostroza23adversarial}, \cite{moot24} and \cite{li23global}).
It is well known that the maximum is typically not attained in this
design problem. However, while it is clear in the particular settings
of these papers that the supremum can be approached using simple information
structures, there is no existing general statement of this property.
Thus Proposition \ref{prop: information design} is a useful tool
for this literature. 

More generally, we could let the designer's objective correspond to
an arbitrary continuous selection from BIBCE, so there exists $f\colon2^{\Delta(A\times\Theta)}\setminus\emptyset\to\Delta(A\times\Theta)$
with the following property: for every $O'\subseteq O\subseteq\Delta(A\times\Theta)$, 

\begin{equation}
f(O)\in O'\implies f(O')=f(O),\label{eq:Cont_V}
\end{equation}
such that 
\[
V_{f}(O)=\sum_{a,\theta}f(a,\theta|O)u(a,\theta).
\]

\subsection{Proof of Proposition \ref{prop: information design}}

Our denseness result implies:
\begin{lem}
\label{prop:BNEInfoDes}For every rich game $\mathcal{G}$ and every
information structure $P$, there exists $\varepsilon>0$ and a simple
information structure $P^{\varepsilon}\in\mathcal{P}^{*}$ so that
$d^{ACK}(P,P^{\varepsilon})<\varepsilon$ and 
\[
|V(\mathcal{O}(P,\mathcal{G}))-V(\mathcal{O}^{BNE}(P^{\varepsilon},\mathcal{G}))|<\varepsilon.
\]
\end{lem}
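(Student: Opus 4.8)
The plan is to prove the (formally stronger) statement that for every $\varepsilon>0$ there is a simple $P^{\varepsilon}\in\mathcal{P}^{*}$ with $d^{ACK}(P,P^{\varepsilon})<\varepsilon$ and $\mathcal{O}^{BNE}(\mathcal{G},P^{\varepsilon})$ within Hausdorff distance $\varepsilon$ of $\mathcal{O}(\mathcal{G},P)$; the bound on $V$ then follows from Hausdorff-continuity of $V$ (fix at the outset $\eta\in(0,\varepsilon)$ so that any set within Hausdorff distance $3\eta$ of $\mathcal{O}(\mathcal{G},P)$ has $V$-value within $\varepsilon$ of $V(\mathcal{O}(\mathcal{G},P))$, and work with $\eta$ below). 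One of the two Hausdorff inclusions is routine: $\mathcal{O}^{BNE}(\mathcal{G},P^{\varepsilon})\subseteq\mathcal{O}^{0}(\mathcal{G},P^{\varepsilon})$ by Proposition~\ref{prop: BNEgen}; by Proposition~\ref{prop: sufficiency}, if $d^{ACK}(P,P^{\varepsilon})$ is small enough then $d^{*}(P,P^{\varepsilon}|\mathcal{G})<\eta$, so $\mathcal{O}^{0}(\mathcal{G},P^{\varepsilon})\subseteq\mathcal{O}_{\eta}(\mathcal{G},P)$; and $\mathcal{O}^{\eta}(\mathcal{G},P)$ shrinks into any neighborhood of $\mathcal{O}(\mathcal{G},P)$ as $\eta\downarrow0$ by upper hemicontinuity of the $\varepsilon$-BIBCE outcome correspondence (Proposition~\ref{prop:Properconv}). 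Thus $\mathcal{O}^{BNE}(\mathcal{G},P^{\varepsilon})$ automatically sits in a small neighborhood of $\mathcal{O}(\mathcal{G},P)$, and what remains is to make $\mathcal{O}^{BNE}(\mathcal{G},P^{\varepsilon})$ $\eta$-cover $\mathcal{O}(\mathcal{G},P)$.

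Since $\mathcal{O}(\mathcal{G},P)\subseteq\Delta(A\times\Theta)$ is totally bounded, I would fix a finite $\eta$-net $\nu_{1},\dots,\nu_{k}$ of it and, for each $j$, a BIBCE $\sigma_{j}\in\mathcal{B}(\mathcal{G},P)$ with $\nu_{\sigma_{j}}=\nu_{j}$, and then assemble all $k$ recommendation rules into one information structure. Let $P^{0}$ be the (a priori redundant) structure in which player $i$'s signal is $(\tau_{i},a_{i}^{1},\dots,a_{i}^{k})$, with $(\theta,\tau)\sim P$ and, conditionally on $(\theta,\tau)$, the profiles $a^{j}$ drawn from $\sigma_{j}(\cdot\mid\theta,\tau)$ independently across $j$. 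Belief-invariance of each $\sigma_{j}$ gives $(a_{i}^{1},\dots,a_{i}^{k})\perp(\theta,\tau_{-i})\mid\tau_{i}$, and applying this recursively shows each signal $(\tau_{i},a_{i}^{1},\dots,a_{i}^{k})$ carries exactly the hierarchy $\tau_{i}^{*}$; hence the non-redundant version of $P^{0}$ is $P$ itself, and for each $j$ the pure profile $\beta^{j}$ (``each player plays his $j$-th coordinate'') is an exact BNE of $(\mathcal{G},P^{0})$ inducing $\nu_{j}$. I would then apply the first-order-belief perturbation from the proofs of Propositions~\ref{prop: denseness } and~\ref{encode}: append to each signal an auxiliary $\{0,1\}$-valued coordinate whose conditional law over the finitely many signals is chosen so that the induced hierarchies become pairwise distinct and move by at most $\varepsilon'$. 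This produces a simple structure $\tilde{P}$ with $d^{ACK}(P,\tilde{P})<\varepsilon'$ on which each $\beta^{j}$ is an $\varepsilon'$-BNE inducing $\nu_{j}$.

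Finally, I would upgrade these $k$ approximate equilibria to exact BNE of one structure by running the richness perturbation of Proposition~\ref{prop:BNE_Main} simultaneously for all $j$: modify $\tilde{P}$ so that with probability $1-\delta$ a draw of $\tilde{P}$ is kept, and with probability $\delta$ one draws $j$ uniformly, reads off the realized $j$-th recommendation profile $\alpha^{j}=(a_{i}^{j})_{i}$, and replaces the state by the richness state $\theta_{\alpha^{j}}$ and every player's recommendation vector by $(\alpha_{i}^{j},\dots,\alpha_{i}^{j})$, so that under \emph{any} $\beta^{j_{0}}$ every player plays $\alpha^{j}$ in that event and richness makes this a strict best reply with surplus at least $J_{\mathcal{G}}:=\min_{i,a,a'}\big(u_{i}(a_{i},a_{-i},\theta_{a})-u_{i}(a_{i}',a_{-i},\theta_{a})\big)>0$. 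Choosing $\varepsilon'\ll\delta\ll\eta$ with $\varepsilon'<\delta J_{\mathcal{G}}$ closes the obedience gap, so each $\beta^{j_{0}}$ becomes an exact BNE whose outcome lies within $2\delta$ of $\nu_{j_{0}}$; one further negligible first-order perturbation restores simplicity if the previous step disturbed it, and $d^{ACK}(P,P^{\varepsilon})\leq\varepsilon'+\delta<\varepsilon$. Combined with the easy inclusion, this yields $\mathcal{O}^{BNE}(\mathcal{G},P^{\varepsilon})$ within Hausdorff distance $3\eta$ of $\mathcal{O}(\mathcal{G},P)$, hence the claim.

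The main obstacle is exactly this last construction. Propositions~\ref{encode}--\ref{prop:BNE_Main} produce, for each individual BIBCE outcome, its \emph{own} nearby simple information structure carrying a single exact BNE with that outcome, whereas here the \emph{entire} set $\mathcal{O}(\mathcal{G},P)$ must be approximately covered by the BNE set of a \emph{single} structure. Naive gluings (e.g.\ a public randomization over $k$ per-outcome structures) give only a Minkowski average of the component BNE sets, which need not cover the net; encoding all $k$ recommendation rules into one type resolves that, but then one must check that the single richness perturbation confirming all $k$ profiles at once neither spoils the obedience of the other $k-1$ profiles (it does not, because in each perturbed draw the recommendation profile is constant across the $k$ coordinates and is pinned down by the matching richness state) nor pushes the structure more than $\varepsilon$ away in the ACK topology, and that the final relabeling restoring first-order separation is small enough to preserve all of the above.
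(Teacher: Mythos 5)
Your route is genuinely different from the paper's, and more ambitious. The paper does not attempt to approximate the whole set $\mathcal{O}(\mathcal{G},P)$ by the BNE set of a single nearby simple structure: it fixes the one BIBCE $\sigma$ whose outcome determines $V(\mathcal{O}(\mathcal{G},P))$, takes the simple sequence from the denseness/continuity machinery, uses the upper hemi-continuity of Proposition \ref{prop:Properconv} to get Hausdorff convergence of $\mathcal{O}^{BNE}(\mathcal{G},P^{k})$ to \emph{some} subset $\mathcal{O}^{\infty}(\mathcal{G},P)\subseteq\mathcal{O}(\mathcal{G},P)$, arranges via the Proposition \ref{prop:BNE_Main} argument that $\nu_{\sigma}$ lies in that subset, and then closes with the selection property (\ref{eq:Cont_V}) of $V$. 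Your proposal instead aims at full Hausdorff closeness, so that only Hausdorff continuity of $V$ is needed; the device of bundling the $k$ recommendation rules of an $\eta$-net of $\mathcal{O}(\mathcal{G},P)$ into one signal, and the verification via belief invariance that each ``play your $j$-th coordinate'' profile is an exact BNE of the redundant structure whose reduction is $P$, is correct and is a genuine extension of the single-$\sigma$ argument of Proposition \ref{encode}. If completed, it would prove a stronger statement than the paper's proof does (which, as written, really covers objectives with the selection structure (\ref{eq:Cont_V})).

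However, there are two concrete gaps. First, your passage from the bundled structure $P^{0}$ to a simple $\tilde{P}$ by appending a binary coordinate ``over the finitely many signals'' presupposes finite support; the lemma is stated for an arbitrary $P\in\mathcal{P}$, whose support may be uncountable. You must first discretize the $\tau$-component as in Proposition \ref{prop: denseness } and transport each $\sigma_{j}$ through the discretization map, which (as in the extension step of Proposition \ref{prop: sufficiency}) only preserves \emph{approximate} obedience and shifts outcomes by an amount that has to be folded into your $\eta$-bookkeeping; this step is absent. Second, your richness perturbation ``replaces the state by $\theta_{\alpha^{j}}$'' with probability $\delta$: this changes the marginal of the prior on $\Theta$, so the resulting object violates the consistency condition (marginal equal to $\mu$) defining $\mathcal{P}$, on which $d^{ACK}$ is defined. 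The perturbation has to be done as in Proposition \ref{prop:BNE_Main}: keep $(\theta,\tau)$ fixed and, conditional on the realized state being the confirming state $\theta_{a}$ (which has positive probability since $\mu$ has full support), occasionally issue the confirming recommendation. A smaller point: your claim that $\mathcal{O}_{\eta}(\mathcal{G},P)$ collapses onto $\mathcal{O}(\mathcal{G},P)$ as $\eta\downarrow0$ for a fixed, possibly uncountable $P$ is attributed to Proposition \ref{prop:Properconv}, but that proposition establishes upper hemi-continuity only along its specifically constructed sequences of simple priors; for your use you need a separate compactness/limit argument for belief-invariant decision rules on $(\mathcal{G},P)$. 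All three issues look repairable, but as written the argument does not go through for general $P$.
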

\begin{proof}
Suppose $\sigma\in\mathcal{B}(\mathcal{G},P)$ satisfies $V_{G}(\mathcal{O}(\mathcal{G},P))=\sum_{a,\theta}\nu_{\sigma}\left(a,\theta\right)u\left(a,\theta\right)$.
Then there are sequences $(P^{k},\varepsilon^{k})_{k}$with $P^{k}\in\mathcal{P}^{*}$
for all $k$ and $\varepsilon^{k}\downarrow0$ so that $d^{ACK}(P,P^{k})<\varepsilon_{k}$
for all $k$. Moreover, by upper hemi-continuity established in Proposition
\ref{prop:Properconv}, there is a subset $\mathcal{O}^{\infty}(\mathcal{G},P)\subseteq\mathcal{O}(P,\mathcal{G})$
so that the sequence satisfies $\lim_{k\uparrow\infty}d_{\mathcal{H},\mathcal{G}}(\mathcal{O}^{BNE}(\mathcal{G},P^{k}),\mathcal{O}^{\infty}(\mathcal{G},P))=0$.
Moreover, by the arguments in Proposition \ref{prop:BNE_Main} we
can pick the sequence so that $\nu_{\sigma}\in\mathcal{O}^{\infty}(\mathcal{G},P)$
and so by property (\ref{eq:Cont_V}) and continuity of $V$ we have
that $\lim_{k\uparrow\infty}V(\mathcal{O}^{BNE}(P^{k},\mathcal{G}))=V(\mathcal{O}(P,\mathcal{G}))$
and so the result follows. 
\end{proof}
Now the proof of Proposition \ref{prop: information design} is completed
as follows. By Lemma \ref{prop:BNEInfoDes}, we have that 
\[
\sup_{P\in\mathcal{P}'\cap\mathcal{P}^{*}}V_{G}(\mathcal{O}^{BNE}(\mathcal{G},P))=\sup_{P\in\mathcal{P}'}V_{G}(\mathcal{O}(\mathcal{G},P)).
\]
Since $\mathcal{P}^{*}\cap\mathcal{P}'\subseteq\mathcal{P}'$, we
have that 
\[
\sup_{P\in\mathcal{P}'\cap\mathcal{P}^{*}}V_{G}(\mathcal{O}^{BNE}(\mathcal{G},P))\leq\sup_{P\in\mathcal{P}'}V_{G}(\mathcal{O}^{BNE}(\mathcal{G},P)).
\]
Moreover, by property (\ref{eq:Cont_V}) and the fact that $\mathcal{O}^{BNE}(\mathcal{G},P)\subseteq\mathcal{O}(\mathcal{G},P)$,
we have that 
\[
\sup_{P\in\mathcal{P}'}V_{G}(\mathcal{O}^{BNE}(\mathcal{G},P))\leq\sup_{P\in\mathcal{P}'}V_{G}(\mathcal{O}(\mathcal{G},P))
\]
 and so 
\[
\begin{aligned}\sup_{P\in\mathcal{P}'\cap\mathcal{P}^{*}}V_{G}(\mathcal{O}^{BNE}(\mathcal{G},P)) & \leq\sup_{P\in\mathcal{P}'}V_{G}(\mathcal{O}^{BNE}(\mathcal{G},P))\\
 & \leq\sup_{P\in\mathcal{P}'}V_{G}(\mathcal{O}(\mathcal{G},P))\\
 & =\sup_{P\in\mathcal{P}'\cap\mathcal{P}^{*}}V_{G}(\mathcal{O}^{BNE}(\mathcal{G},P)).
\end{aligned}
\]

\section{\label{sec:Alternative-Formulations}Alternative Formulations and
Related Literature}

In this section, we will discuss a number of alternative topologies
characterizing convergence of strategic outcomes. One purpose in doing
so is that it will allow us to formally relate our work to the relevant
related literatures. 

\subsection{\label{subsec:Interim-based-Topology}Interim Topologies }

We have defined and characterize an (ex ante) strategic topology on
(common prior) information structures under an equilibrium solution
concept (BIBCE). By contrast, \cite{dekel2006topologies} defined
an interim strategic topology on hierarchies of beliefs under the
solution concept of interim correlated rationalizability (ICR). Two
belief hierarchies were said to be close in the interim strategic
topology if, in any game, an action that was ICR at one hierarchy
was approximately ICR at the other hierarchy. \cite{chen2017characterizing}
provide a characterization of the interim strategic topology in terms
of belief hierarchies. Crucially, the interim strategic topology imposes
restrictions on the tails of hierarchies of beliefs, unlike the product
topology. We provide a formal statement of the characterization of
\cite{chen2017characterizing} in the Appendix.

Our definition of the almost common knowledge topology used the product
topology in defining the event that interim beliefs were close. But
we noted that the use of the product topology was not essential. A
first purpose of this section is to record what properties the interim
topology must satisfy in order to induce our almost common knowledge
topology: it is enough that it is induced by a ``nice'' metric that
(1) refines the product topology; and (2) has a countable dense subset. 

The interim strategic topology satisfies these properties. But we
also establish that if we had used the interim strategic topology
as our initial interim topology, we could have dispensed with the
requirement of approximate common knowledge in our definition of the
ex ante topology. Intuitively, this is because interim strategic topology
imposes enough restrictions on the tails of hierarchies of beliefs
to generate the required approximate common knowledge. 
\begin{defn}
(Nice Interim Metric) A metric $d$ on $\Omega$ is nice if (1) its
induced topology refines the product topology; (2) there is a countable
subset of $\Omega^{0}$ which is dense in $\Omega$. 
\end{defn}
A topology on $\Omega$ is nice if it is induced by a nice metric.
Our ACK topology would be the same if we replace the product interim
topology with any nice interim topology in the definition. We used
the product topology. We could have used total variation as a metric
on interim beliefs, as \cite{kajii1998payoff} do. But we could also
have substituted the interim strategic topology as defined by \cite{dekel2006topologies}
and characterized by \cite{chen2017characterizing}. Importantly,
that topology (unlike the product topology) also imposes restrictions
on infinite tails of hierarchies of beliefs which has implications
for approximate common knowledge. In particular, the interim strategic
topology the following property.\textbf{ }
\begin{defn}
(Common Belief Invariance) A metric $d_{CB}$ on $\Omega$ satisfies
common belief invariance if for all events $E,E'\subseteq\Omega$
and every $\varepsilon>0$, 
\[
C^{1-\varepsilon}\left(\mathcal{N}_{d_{CB},\varepsilon}(E)\cap\mathcal{N}_{d_{CB},\varepsilon}(E')\right)=\mathcal{N}_{d_{CB},\varepsilon}(E)\cap\mathcal{N}_{d_{CB},\varepsilon}(E'),
\]
where $\mathcal{N}_{d_{CB},\varepsilon}(E)$ is the union of $\varepsilon$-neighborhoods
around the points in $E$ using metric $d_{CB}$. 
\end{defn}
For any metric $d:\Omega\times\Omega\to[0,\infty)$, $\varepsilon>0$
and $P\in\mathcal{P}$, let $$\text{supp}_{d,\varepsilon}(P):=\bigcup_{\omega\in\Omega:P(\mathcal{N}_{d,\varepsilon}(\omega))}\mathcal{N}_{d,\varepsilon}(\omega).$$
Now we can consider the simplest and weakest natural way of translating
an interim distance into an ex ante distance
\begin{defn}
The weak ex-ante distance induced by an interim distance $d$ is defined
as 
\[
d'(P,P')=\inf\left\{ \varepsilon\geq0:\begin{array}{c}
P\left(\hat{T}_{d,\varepsilon}(P,P')\right)>1-\varepsilon\\
\ensuremath{P'}\left(\hat{T}_{d,\varepsilon}(P,P')\right)\ensuremath{>1-}\varepsilon
\end{array}\right\} ,
\]
where $\hat{T}_{d,\varepsilon}(P,P'):=\text{supp}_{d,\varepsilon}(P)\cap\text{supp}_{d,\varepsilon}(P')$. 
\end{defn}
\begin{prop}
\label{prop:-nice-Int.Top}The weak ex ante distance induced by any
nice interim metric satisfying common belief invariance induces the
ACK topology. 
\end{prop}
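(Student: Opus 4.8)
The plan is to show that for any nice interim metric $d_{CB}$ satisfying common belief invariance, the weak ex ante distance $d'$ generates the same topology as $d^{ACK}$. The strategy is to compare $d'$ directly with $d^{ACK}$ and to argue that, because of common belief invariance, the $\varepsilon$-supports with respect to $d_{CB}$ are already common $(1-\varepsilon)$-belief, so the $C^{1-\varepsilon}(\cdot)$ operator in the definition of $d^{ACK}$ becomes vacuous when the underlying metric is $d_{CB}$. Concretely, define the auxiliary distance $d^{ACK}_{d_{CB}}$ exactly as in Definition~\ref{def:ACK} but with all $\varepsilon$-neighborhoods and $\varepsilon$-supports taken with respect to $d_{CB}$ rather than $d_{\Pi}$. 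The first step is to observe that common belief invariance gives
\[
C^{1-\varepsilon}\big(\hat T_{d_{CB},\varepsilon}(P,P')\big)=\hat T_{d_{CB},\varepsilon}(P,P'),
\]
since $\hat T_{d_{CB},\varepsilon}(P,P')=\text{supp}_{d_{CB},\varepsilon}(P)\cap\text{supp}_{d_{CB},\varepsilon}(P')$ is an intersection of two $d_{CB}$-$\varepsilon$-neighborhoods of the events $\text{supp}(P)$ and $\text{supp}(P')$ (a union of $\varepsilon$-balls is itself an $\varepsilon$-neighborhood of the set of its centers). Hence $d^{ACK}_{d_{CB}}=d'$ verbatim: the two constraints defining $d'$ are precisely the two constraints defining $d^{ACK}_{d_{CB}}$.

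The second step reduces the claim to: $d^{ACK}_{d_{CB}}$ and $d^{ACK}$ (the one built from $d_{\Pi}$) generate the same topology. Here I would use the two properties of niceness. Since $d_{CB}$ refines the product topology, for every $\varepsilon>0$ there is $\delta(\varepsilon)>0$ with $d_{CB}(\omega,\omega')<\delta\implies d_{\Pi}(\omega,\omega')<\varepsilon$ uniformly --- this uniformity needs care, but on the compact space $\Omega$ (the universal type space is compact in the product topology) the identity map $(\Omega,d_{CB})\to(\Omega,d_{\Pi})$ is continuous, hence uniformly continuous, giving such a $\delta$. Consequently $\text{supp}_{d_{CB},\delta}(P)\subseteq\text{supp}_{d_{\Pi},\varepsilon}(P)$, so $\hat T_{d_{CB},\delta}(P,P')\subseteq\hat T_{d_{\Pi},\varepsilon}(P,P')$ and, because $B^{p}$ is monotone in its argument, $C^{1-\delta}(\hat T_{d_{CB},\delta})\subseteq C^{1-\varepsilon}(\hat T_{d_{\Pi},\varepsilon})$ once $\delta\le\varepsilon$. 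This yields $d^{ACK}(P,P')\lesssim d^{ACK}_{d_{CB}}(P,P')$ up to the modulus $\delta(\cdot)$, i.e. $d^{ACK}_{d_{CB}}$-convergence implies $d^{ACK}$-convergence. For the reverse direction, I would run the argument of Proposition~\ref{prop: sufficiency}/Proposition~\ref{prop: necessity-1}: both $d^{ACK}_{d_{CB}}$ and $d^{ACK}$ generate continuity of equilibrium outcomes (the sufficiency proof only used that the interim metric is a metric on $\Omega$, together with the existence of a finite $\varepsilon$-grid covering $\hat\Omega$ --- and that grid exists for $d_{CB}$ because of the countable dense subset in $\Omega^0$, via a $d_{CB}$-analogue of Lemma~\ref{claim:GridPart}), and the necessity direction uses the same grid-game construction. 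Since $d^{ACK}$ is \emph{the coarsest} topology with that property (Theorem~\ref{thm: main}), $d^{ACK}$-open sets are $d^{ACK}_{d_{CB}}$-open; combined with the previous paragraph this gives topological equivalence.

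The main obstacle I anticipate is making the necessity half --- that failure of $d'$-convergence forces failure of strategic convergence --- fully rigorous with $d_{CB}$ in place of $d_{\Pi}$. The delicate points are (i) establishing the $d_{CB}$-version of Lemma~\ref{claim:GridPart}: one needs a \emph{finite} grid $G_\varepsilon\subseteq\Omega\setminus\mathcal N_{d_{CB},\varepsilon}(E)$ whose $\varepsilon$-balls cover $\Omega\setminus\mathcal N_{d_{CB},\varepsilon}(E)$, which requires total boundedness of $(\Omega,d_{CB})$ --- this is where the countable dense subset is used, but one should check $d_{CB}$ is actually totally bounded, not merely separable (for the interim strategic topology this follows from Chen--Xiong--Yu-type characterizations, but for a general nice metric one may need to add total boundedness or argue it from compactness of $\Omega$ under the coarser product topology plus an equicontinuity-type estimate); and (ii) verifying that the scoring-rule game of Lemma~\ref{claim_GridGame} still incentivizes reporting the closest $d_{CB}$-grid element when payoffs must remain $\Theta$-measurable --- this is exactly the construction already used in the paper, so it should transfer, but the continuity estimates that make the $\varepsilon$-obedience slack controllable must be re-derived for $d_{CB}$. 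If total boundedness of $(\Omega,d_{CB})$ is genuinely needed and not implied by the stated niceness conditions, I would either strengthen the definition of \emph{nice} to include it or prove it from refinement-of-product-topology plus separability using that the product topology is compact metric.
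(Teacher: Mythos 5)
Your first step is right and is exactly how the paper starts: since $\text{supp}_{d_{CB},\varepsilon}(P)\cap\text{supp}_{d_{CB},\varepsilon}(P')$ is of the form $\mathcal{N}_{d_{CB},\varepsilon}(E)\cap\mathcal{N}_{d_{CB},\varepsilon}(E')$, common belief invariance makes the $C^{1-\varepsilon}$ operator vacuous, so the weak ex ante distance is an ACK-type distance built on $d_{CB}$ (the paper invokes Proposition \ref{prop_CPB_free} for the same purpose). After that the proposal diverges from the paper and has two genuine gaps. First, the uniform modulus in your inclusion $\text{supp}_{d_{CB},\delta}(P)\subseteq\text{supp}_{d_{\Pi},\varepsilon}(P)$ does not follow from your Heine--Cantor argument: uniform continuity of the identity requires the \emph{domain} $(\Omega,d_{CB})$ to be compact, and a metric that strictly refines the product topology on the compact space $\Omega$ (as the interim strategic topology does) cannot be compact; ``refines the product topology'' only gives a pointwise, not uniform, comparison of the two metrics, so this half needs a different justification (the paper simply uses the containment $\hat{T}_{d,\varepsilon}(P,P')\subseteq\hat{T}_{\varepsilon}(P,P')$, i.e.\ treats $d$ as dominating $d_{\Pi}$).

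Second, your reverse direction is both misdirected and substantively problematic. Logically, the appeal to Theorem \ref{thm: main} as you state it yields ``every $d^{ACK}$-open set is $d'$-open,'' which is the \emph{same} inclusion as in your previous paragraph, not its reverse; what your intended route actually needs is a necessity statement for $d'$ (failure of $d'$-convergence forces failure of strategic convergence) combined with Proposition \ref{prop: sufficiency}. Substantively, rerunning Lemmas \ref{claim:GridPart} and \ref{claim_GridGame} with $d_{CB}$ is not a matter of re-deriving estimates: the scoring rule elicits only finitely many orders of beliefs, hence only product-topology proximity, so no finite game with $\Theta$-measurable payoffs can force reports that are close in a metric strictly finer than $d_{\Pi}$; and, as you concede, total boundedness of $(\Omega,d_{CB})$ is not implied by niceness. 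The paper's second direction avoids games entirely and is much shorter: it argues $(\Omega,d)$ is Polish (separable by niceness, complete because $d$-Cauchy sequences are product-Cauchy), so $P$ is tight; it then takes a compact inner approximation $H_{\varepsilon}\subseteq\hat{T}_{d,\varepsilon}(P,P')$ of large $P$-mass, covers the compact set $H_{\varepsilon}$ by a finite $d$-grid, and uses Hausdorffness of the product topology to separate the grid points by some $\delta>0$ in $d_{\Pi}$, which is how $d^{ACK}$-closeness at level $\delta$ is converted into $d'$-closeness at level $\varepsilon$. So the cure for your total-boundedness worry is inner regularity on a compact set of large measure, not total boundedness of the whole space, and the appeal to Theorem \ref{thm: main} should be dropped altogether.
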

\begin{proof}
Appendix.
\end{proof}
Here, we have the special property that approximate common knowledge
is for free.
\begin{prop}
\label{prop:The-ex-ante}The interim strategic topology is nice and
common belief invariant.
\end{prop}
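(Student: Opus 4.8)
The statement splits into two independent claims --- that the interim strategic topology is \emph{nice} and that it is \emph{common belief invariant} --- which I would establish in turn.

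\emph{Niceness.} That the strategic topology refines the product topology is essentially built into its definition: an action that is $\varepsilon$-interim-correlated-rationalizable at a hierarchy must remain approximately so at strategically nearby hierarchies, a continuity that fails for merely product-close hierarchies (the email game of \cite{rubinstein89email}); see \cite{dekel2006topologies}. For the countable dense subset lying in $\Omega^{0}$: either \cite{dekel2006topologies} already exhibit such a set (e.g.\ finite common-prior types with rational data), or one obtains it as follows. By \cite{dekel2006topologies} finite common-prior types are dense in the strategic topology, so $\Omega^{0}$ is dense in $\Omega$, and since density is transitive it suffices to find a countable set dense in $\Omega^{0}$. Stratify $\Omega^{0}$: a finite type lies in a belief-closed set carried by finitely many hierarchies, and after fixing the number of hierarchies and the support pattern of the conditional beliefs --- a discrete datum with countably many values --- the remaining parameters are finitely many conditional probabilities ranging in a finite-dimensional simplex, on which the strategic topology is coarser than the Euclidean one (on a fixed finite type structure the ICR correspondence, hence the strategic distance, varies continuously in the probabilities; \cite{dekel2006topologies}), hence separable. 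Thus $\Omega^{0}$ is a countable union of separable subspaces, hence separable, and a countable dense subset of $\Omega^{0}$ is dense in $\Omega$.

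\emph{Common belief invariance.} Fix events $E,E'\subseteq\Omega$ and $\varepsilon>0$ and put $F:=\mathcal{N}_{d_{CB},\varepsilon}(E)\cap\mathcal{N}_{d_{CB},\varepsilon}(E')$, where $d_{CB}$ is the metric of \cite{chen2017characterizing} inducing the strategic topology. Since $B^{1-\varepsilon}(G)\subseteq G$ --- hence $C^{1-\varepsilon}(G)\subseteq G$ --- for every $G$, it is enough to prove the reverse inclusion $F\subseteq B^{1-\varepsilon}(F)$: that gives $B^{1-\varepsilon}(F)=F$, whence $[B^{1-\varepsilon}]^{m}(F)=F$ for all $m$ by induction and $C^{1-\varepsilon}(F)=F$. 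So the plan is to fix $\omega=(\theta,\tau)\in F$ and a player $i$ and show $\tau_{i}^{*}(F_{-i})\ge1-\varepsilon$; since $(\theta_{1},\tau_{-i}^{1})\in F_{-i}$ whenever $(\theta_{1},(\tau_{i},\tau_{-i}^{1}))\in F$, it suffices to bound below by $1-\varepsilon$ the $\tau_{i}^{*}$-mass of those $(\theta_{1},\tau_{-i}^{1})$ whose extension by $i$'s own type lies in $\mathcal{N}_{d_{CB},\varepsilon}(E)\cap\mathcal{N}_{d_{CB},\varepsilon}(E')$. The engine is the characterization of \cite{chen2017characterizing}: $d_{CB}$ can be taken \emph{uniform}, so that strategic $\varepsilon$-closeness of two states is witnessed by a bilateral coupling event that is itself $(1-\varepsilon)$-believed coordinatewise and never leaves the $d_{CB}$-ball of radius $\varepsilon$ --- informally, $\varepsilon$-closeness is inherited down the hierarchy with total probability loss at most $\varepsilon$, at all orders at once, precisely the feature that separates the strategic from the product topology. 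Feeding the witnesses $\hat\omega\in E$, $\hat\omega'\in E'$ of $\omega$ into this property, and using that both couplings may be taken against $\omega$'s own interim belief (the shared reference point), I would extract a single $\tau_{i}^{*}$-set of measure $\ge1-\varepsilon$ on which the extension lies within $\varepsilon$ of $E$ and simultaneously within $\varepsilon$ of $E'$, i.e.\ in $F_{-i}$, which closes the argument.

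The main obstacle is that appeal to \cite{chen2017characterizing}: one must state the uniform-coupling characterization in exactly the right form --- it concerns full profiles in $\Omega$ whereas players' interim beliefs live on $\Theta\times\mathcal{T}_{-i}$, and the probability slack must be kept at $\varepsilon$ rather than $2\varepsilon$ so that proximity to $E$ and to $E'$ can be secured off a single $\varepsilon$-null set --- and then verify it really delivers common $(1-\varepsilon)$-belief of membership in $F$. By contrast, the niceness claim is routine given the density and continuity results of \cite{dekel2006topologies}, the only remaining work being the bookkeeping relating $d_{CB}$ to its restrictions to sub-profiles.
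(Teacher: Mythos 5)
Your niceness argument is fine and matches the paper's: Proposition \ref{prop:(Niceness)-The-interim} is proved exactly by citing \cite{dekel2006topologies} for (i) the interim strategic topology refining the product topology and (ii) a countable dense set of finite types, so your extra stratification/separability bookkeeping is harmless elaboration of the same route.

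The common-belief-invariance half has a genuine gap, and it is not just the missing "uniform coupling" lemma you flag. First, the step you admit is the main obstacle is doing all the work: you never state, let alone prove, a property of the metric of \cite{chen2017characterizing} strong enough to convert interim closeness of $\omega$ to witnesses in $E$ and $E'$ into a $(1-\varepsilon)$-belief statement about the neighborhood $F=\mathcal{N}_{d_{CB},\varepsilon}(E)\cap\mathcal{N}_{d_{CB},\varepsilon}(E')$. Second, and decisively, the plan cannot succeed for arbitrary events $E,E'$: the inclusion $F\subseteq B^{1-\varepsilon}(F)$ is false in general. Take $E=E'=\{\hat{\omega}\}$ with $\hat{\omega}=(\theta_{1},\hat{\tau})$ a type profile whose first-order beliefs put mass $1/2$ on each of two states. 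Then $\hat{\omega}\in F$, but for small $\varepsilon$ every point of $F$ has payoff state $\theta_{1}$ (the state component enters $d^{\mathscr{F}}$ through $d_{\Theta}$), so $\hat{\tau}_{i}^{*}(F_{-i})\leq1/2<1-\varepsilon$ and $\hat{\omega}\notin B^{1-\varepsilon}(F)$; no coupling property of the metric can repair this, because the failure occurs at distance zero. What the paper actually proves (Proposition \ref{prop:(Common-Belief-Invariance)}), and all that is used in Proposition \ref{prop:-nice-Int.Top}, is invariance for the specific sets $\hat{T}_{d^{\mathscr{F}},\varepsilon}(P,P')$ built from supports of consistent common-prior information structures. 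There the argument is elementary and rests on two structural facts your sketch never invokes: the supports are belief-closed, $\tau_{i}^{*}(\Omega_{P})=1$ for $\omega\in\Omega_{P}$, and they are frame-measurable, $\Omega_{P}\in\mathscr{F}$, so the uniform-over-frames definition of $d^{\mathscr{F}}$ directly transfers the probability-one belief: $d^{\mathscr{F}}(\omega,\hat{\omega})<\varepsilon$ gives $\hat{\tau}_{i}^{*}(\mathcal{N}_{d^{\mathscr{F}},\varepsilon}(\Omega_{P}))\geq1-\varepsilon$, which combined with $\hat{\tau}_{i}^{*}(\Omega_{P'})=1$ yields $\hat{T}=B^{1-\varepsilon}(\hat{T})$ and hence $C^{1-\varepsilon}(\hat{T})=\hat{T}$. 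To fix your proof, restrict the claim to these belief-closed, frame-measurable sets and replace the coupling heuristic with this two-line transfer argument.
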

\begin{proof}
Follows from Propositions \ref{prop:(Niceness)-The-interim} and \ref{prop:(Common-Belief-Invariance)}
in the Appendix. 
\end{proof}
For example, \cite{kajii1998payoff} define a topology on ex ante
information structures (discussed below) but use total variation as
a metric on interim beliefs.

\subsection{\label{subsec:Value-Based-Topology}Value-Based Topology}

We could alternatively define our topology in terms of convergence
of the ex ante expected equilibrium payoffs rather than equilibrium
outcomes. This was the approach of \cite{monderer1996proximity} and
\cite{kajii1998payoff} and also the recent work of \cite{gensbittel22value}
for zero sum games. This distinction is not important for our results. 

Let $V$$(G,P)$ be the set of ex ante utilities of players (in $\mathbb{R}^{n}$)
that can arise from some BIBCE of $(G,P)$, and let $V_{\varepsilon}$$(G,P)$
be the set of ex ante utilities of players (in $\mathbb{R}^{n}$)
that are within $\varepsilon$ of some $\varepsilon$-BIBCE of $(G,P)$.
We can say that $P,P'$ are $\varepsilon$-value close in game $\mathcal{G}$
if $V(\mathcal{G},P)\subseteq V_{\varepsilon}(G,P')$ and $V(\mathcal{G},P')\subseteq V_{\varepsilon}(G,P)$.
\begin{defn}
Let $d^{V}(P,P'|\mathcal{G}$) be the infimum of the set of $\varepsilon$
such that $P\text{ and }P'$ are $\varepsilon$-value close in game
$\mathcal{G}$.
\end{defn}
\begin{lem}
Now $d^{*}(P^{k},P|\mathcal{G})\rightarrow P$ if and only if \textup{$d^{V}(P^{k},P|\mathcal{G})\rightarrow0\text{}$
for all $\mathcal{G}$.}
\end{lem}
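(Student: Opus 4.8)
The plan is to show that the outcome-based strategic topology and the value-based topology coincide by relating $\varepsilon$-closeness of BIBCE \emph{outcomes} to $\varepsilon'$-closeness of the associated \emph{ex ante payoff vectors}, in both directions, with $\varepsilon'$ a function of $\varepsilon$ (and of the payoff bound $M$ and cardinality $|A\times\Theta|$) that tends to $0$ as $\varepsilon\to0$. The ``only if'' direction is the routine one: if $\nu,\nu'\in\Delta(A\times\Theta)$ satisfy $\|\nu-\nu'\|_2<\varepsilon$, then for every player $i$ the ex ante payoffs $\sum_{a,\theta}\nu(a,\theta)u_i(a,\theta)$ and $\sum_{a,\theta}\nu'(a,\theta)u_i(a,\theta)$ differ by at most $M\sqrt{|A\times\Theta|}\,\varepsilon$, since $u_i$ is bounded by $M$. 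Applying this to $\nu=\nu_\sigma$ for $\sigma\in\mathcal B^0(\mathcal G,P)$ and its $\varepsilon$-approximating outcome from an $\varepsilon$-BIBCE of $P'$ shows $V(\mathcal G,P)\subseteq V_{\varepsilon'}(\mathcal G,P')$ with $\varepsilon'=M\sqrt{|A\times\Theta|}\,\varepsilon$; symmetrically for the reverse inclusion. Hence $d^*(P^k,P\mid\mathcal G)\to0$ implies $d^V(P^k,P\mid\mathcal G)\to0$ for every $\mathcal G$.

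For the ``if'' direction — value closeness in all games implies outcome closeness in all games — I would use an \emph{auxiliary game} construction, exactly the kind of trick already used to prove the grid-game lemmas (Lemmas \ref{claim:GridPart}, \ref{claim_GridGame}): given a base game $\mathcal G$, build an enriched game $\mathcal G'$ in which players' payoffs encode the realized action-state pair finely enough that the ex ante payoff vector pins down the outcome up to $\varepsilon$. Concretely, fix a game $\mathcal G$ and suppose $d^V(P^k,P\mid\mathcal G')\to0$ for \emph{all} $\mathcal G'$, in particular for a family $\{\mathcal G'_j\}$ of ``scoring'' games. For each target coordinate $(a,\theta)\in A\times\Theta$ one can add a player (or a payoff component) whose payoff is an indicator-type statistic correlated with $(a,\theta)$, so that that player's ex ante equilibrium payoff equals (approximately) $\nu_\sigma(a,\theta)$. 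One must check that adding such payoff components does not alter the set of BIBCE of the original component — which follows because belief invariance and obedience for the original players are unaffected by the statistic-reporting players, and obedience for the added players is automatic when their payoffs are chosen to make truthful reporting a strict best response regardless of others' actions (a proper-scoring-rule argument). Then value closeness of the $P^k$ in these games forces $\nu_{\sigma^k}(a,\theta)\to\nu_\sigma(a,\theta)$ for each coordinate, i.e. outcome closeness in $\mathcal G$, giving $d^*(P^k,P\mid\mathcal G)\to0$.

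The main obstacle will be the ``if'' direction, specifically ensuring that the auxiliary payoff components can be added \emph{without} enlarging or shrinking the BIBCE outcome set of the original game in a way that breaks the correspondence between payoff vectors and outcomes. The subtlety is that belief invariance constrains how a player's action can depend on the state, so one cannot simply have an extra player ``announce $(a,\theta)$''; instead the statistic must be implementable as a belief-invariant, obedient decision rule, which is why a scoring-rule/indicator construction keyed to the \emph{recommended action profile} (a variable the mediator controls) rather than to $\theta$ directly is needed, together with the observation that any BIBCE of the product game restricts to a BIBCE of each factor. Once that bookkeeping is in place, the quantitative estimates are the same bounded-payoff, finite-cardinality estimates as in the ``only if'' direction, and the equivalence of the two convergence notions follows; combined with Theorem \ref{thm: main} this shows the value-based topology is again the ACK topology.
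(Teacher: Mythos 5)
Your first direction is fine and matches the paper: outcome closeness plus the payoff bound $M$ gives value closeness with $\varepsilon'=M\sqrt{|A\times\Theta|}\,\varepsilon$. The hard direction is where your proposal has a genuine gap. The paper does something much more modest than your auxiliary scoring games: it augments the payoffs of the \emph{existing} players by terms $h_{i}(a_{-i},\theta)$ that depend only on opponents' actions and the state. Because obedience constraints involve only differences $u_{i}(a_{i},a_{-i},\theta)-u_{i}(a_{i}',a_{-i},\theta)$, such a term changes no incentive whatsoever, so the set of ($\varepsilon$-)BIBCE of the augmented game is literally identical to that of $\mathcal{G}$, while the map from outcomes to values is re-weighted so that two outcomes with equal values under $u$ acquire different values under $u+h$. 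No preservation argument and no new strategic component is needed.

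Your distinctive ingredient --- reporting players or scoring-rule components whose equilibrium value ``equals (approximately) $\nu_{\sigma}(a,\theta)$'' --- does not go through as stated. First, the player set $I$ is fixed throughout the paper, so ``adding a player'' takes you outside the class of base games over which the quantifier ``for all $\mathcal{G}$'' ranges. Second, if the indicator statistic attached to an existing player depends on that player's \emph{own} action (which it must, if it is to identify the full coordinate $(a,\theta)$ rather than only $(a_{-i},\theta)$), then incentives change and the BIBCE set of the augmented game need not coincide with that of $\mathcal{G}$; keying the payoff to the ``recommended'' profile is not available either, since payoffs are functions of realized actions and states, and it is exactly the deviation payoffs that determine whether obedience holds. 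Third, if instead you add a genuine belief-elicitation component, the equilibrium expected score measures how well the player's hierarchy and recommendation predict $(a_{-i},\theta)$; this quantity is not a function of the outcome $\nu_{\sigma}$ alone (two BIBCE with the same outcome but different correlation structure give different scores), so value closeness in those auxiliary games does not force coordinate-wise convergence $\nu_{\sigma^{k}}(a,\theta)\to\nu_{\sigma}(a,\theta)$. The incentive-neutral special case of your idea --- lump-sum indicator payments depending only on $(a_{-i},\theta)$ --- is precisely the paper's $h_{i}$, so the workable part of your plan collapses to the paper's construction rather than providing an alternative route.
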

\begin{proof}
One direction is immediate, because convergence of outcomes implies
converges of values. In the other direction, it is enough to change
payoffs so differences in outcome translate into large differences
in payoffs. Consider two outcomes $\nu,\nu'$ of $P$ and $P'$ respectively
so that $V_{i}(\nu)=V_{i}(\nu')$ but $\nu\neq\nu'$. Consider the
augmented payoffs $u_{i}(a,\theta)+h_{i}(a_{-i},\theta)$ and associated
values $V_{i}^{h}(\nu)=V_{i}(\nu)+\sum_{a,\theta}\nu(a,\theta)h_{i}(a_{-i},\theta)$
and $V_{i}^{h}(\nu')=V_{i}(\nu')+\sum_{a,\theta}\nu'(a,\theta)h_{i}(a_{-i},\theta)$.
Hence 
\[
V_{i}^{h}(\nu)-V_{i}^{h}(\nu')=\sum_{a,\theta}h_{i}(a_{-i},\theta)(v(a,\theta)-v'(a,\theta))
\]
Consider the choice $h_{i}(a_{-i},\theta)=\boldsymbol{1}_{\left\{ (a_{-i},\theta):\exists\ a_{i}\in A_{i}\text{ s.t. }v(a,\theta)>v'(a,\theta)\right\} }-\boldsymbol{1}_{\left\{ (a_{-i},\theta):\exists\ a_{i}\in A_{i}\text{ s.t. }v(a,\theta)<v'(a,\theta)\right\} }$
and so $V_{i}^{h}(\nu)-V_{i}^{h}(\nu')>0$. 
\end{proof}
Thus there is little difference working with outcome-based strategic
topologies and value-based strategic topologies. 

However, in the case of zero-sum games, the value is uniquely defined
although many outcomes might give rise to the same value. So it is
convenient and natural to work with value-based strategic topologies
in that case. \cite{peski08comparison} and \cite{gossner20value}
characterize changes in information structure that increase one player's
payoff in all zero sum games. \cite{gensbittel22value} study essentially
the value-based strategic topology defined above but restricted to
zero-sum games. 

\subsection{Join Measurability\label{subsec:Countable-Type-Spaces}}

Our approach in this paper has been to fix a set of payoff-relevant
states $\Theta$, and look at common prior information structures
that describe beliefs and higher-order beliefs about those states.
Then we characterize the coarsest topology under which equilibrium
outcomes converge for any game where payoffs are measurable with respect
to $\Theta$. In particular, we do not allow games to depend in an
arbitrary way on players' types (or signals). 

An alternative approach would be to allow any game where payoffs were
measurable with respect to the join of players' types. Equivalently,
we could restrict attention to information structures where each payoff-relevant
state could arise under only one profile of types; we call this a
``join measurability'' restriction on information structures. It
was implicitly maintained in the works of \cite{monderer1996proximity}
and \cite{kajii1998payoff}. This restriction greatly simplifies the
arguments. In particular, in the proofs of sufficiency analogous to
Proposition \ref{prop: sufficiency} the join measurability approach
allows a straightforward mapping of a strategy profile on one information
structure to another. We were not able to do that, and required a
continuous extension exploiting the structure of the universal type
space. In the proofs of necessity analogous to Proposition \ref{prop: necessity-1},
the join measurability approach requires only an email game like component
where an infection argument operates and not also an iterated scoring
rule to reveal finite levels of beliefs, as in this paper. 

This is the most important difference between the work of \cite{monderer1996proximity}
and \cite{kajii1998payoff}, and this work. There are number of other
differences that are less important. First, the earlier papers focused
on BNE as a solution concept, while we focus on BIBCE. Second, they
focused on countable information structures (to ensure existence of
BNE), whereas we do not impose that restriction. Third, their topologies
were value-based whereas our topology is outcome-based (a difference
that we argued was not important in the previous section) . Fourth,
we restrict ourselves to a finite set of payoff-relevant states, but
these papers must allow for countable payoff relevant states. 

The set of information structures considered in \cite{monderer1996proximity}
and \cite{kajii1998payoff}, while both satisfying join measurability,
were modelled differently. \cite{monderer1996proximity} fixed a state
space and prior probability. An information structure was then a profile
of (countable) partitions of the state space. And payoffs could depend
in arbitrary ways on the state space. On the other hand, \cite{kajii1998payoff}
fixed a countable set of ``types'' (or labels) for each player and
allowed arbitrary probability distributions on the types space. The
exact connection between the similar topologies defined on different
classes of information structures was not known until recently, when
\cite{kambhampati23payoffcontinuity} showed an equivalence between
the results.

\subsection{\label{subsec:Improper-Priors-and}Improper Priors and Completeness}

In this paper, we have focused on common prior information structures.
Our results imply that equilibrium outcomes converge along Cauchy
sequences in the ACK topology. However, in general, Cauchy sequences
may not have well-defined limits within the space of information structures.
In this section, we show that if we enrich the class of information
structures to include ``improper'' common prior information structures,
and extend the ACK topology to this class of information structures,
then all Cauchy sequences do converge to a well defined limit. This
result is of independent interest, in the light of the importance
of improper common prior limits in the literature on higher-order
beliefs in games (discussed below).

An improper prior on the universal type space is simply a measure
with perhaps infinite mass such that there is a conditional probability
consistent with the interim beliefs on the universal type space. Formally,
we have:
\begin{defn}
(Improper Prior) A measure on $\Omega$, $Q\colon\mathscr{B}\to[0,\infty]$,
is an improper prior if for every player $i$ there is a measurable
map $Q_{i}\colon\mathcal{T}_{i}\to\Delta(\Omega)$ so that
\[
\tau_{i}^{*}=\text{marg}_{\Theta\times\mathcal{T}_{-i}}\left(Q_{i}(\tau_{i})\right),\ Q\text{-a.e.}
\]
and for every measurable $E\in\mathscr{B}$,
\[
\int_{\text{proj}_{\mathcal{T}_{i}}\left(E\right)}Q_{i}(E|\tau_{i})Q(\text{d}\tau_{i})=Q(E).
\]
\end{defn}
Let $\bar{\mathcal{P}}$ denote the set of improper priors and note
that $\mathcal{P}\subseteq\bar{\mathcal{P}}.$ The approximate common
knowledge topology extends in a natural way to $\mathcal{\bar{\mathcal{P}}}$
:
\begin{defn}
\label{def:ACK-1}(Approximate Common Knowledge Distance) For every
$P,P'\in\bar{\mathcal{P}}$, let 

\[
d^{ACK}\left(P,P'\right):=\inf\left\{ \varepsilon\geq0:\begin{array}{c}
P\left(\Omega\setminus C^{1-\varepsilon}\left(\hat{T}_{\varepsilon}(P,P')\right)\right)<\varepsilon P\left(C^{1-\varepsilon}\left(\hat{T}_{\varepsilon}(P,P')\right)\right)\\
P'\left(\Omega\setminus C^{1-\varepsilon}\left(\hat{T}_{\varepsilon}(P,P')\right)\right)<\varepsilon P'\left(C^{1-\varepsilon}\left(\hat{T}_{\varepsilon}(P,P')\right)\right)
\end{array}\right\} .
\]
\end{defn}
Notice that this Definition coincides with the earlier Definition
\ref{def:ACK} when applied to proper priors. 

A decision rule $\sigma$ is a BIBCE of an improper prior $Q\in\mathcal{Q}$,
if $\sigma$ is belief invariant and obedience holds almost everywhere.
For any $\varepsilon>0$, let the collection of $\varepsilon$-BIBCE
on an improper prior $Q$ be denoted by $\overline{\mathcal{B}}^{\varepsilon}(\mathcal{G},Q)$. 
\begin{prop}
The extended approximate common knowledge topology on $\overline{\mathcal{P}}$
is complete. 
\end{prop}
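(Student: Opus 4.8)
The plan is to show that a Cauchy sequence must settle down to a limit determined by the limiting behaviour of the ``essential supports'' of its terms, and that this limit is realised by an improper prior. Fix a metric $d$ inducing the extended ACK topology (Proposition~\ref{prop:The-ACK-topology} applies verbatim with Definition~\ref{def:ACK-1} replacing Definition~\ref{def:ACK}), and let $(P^{k})_{k}$ be $d$-Cauchy; it suffices to produce $Q\in\overline{\mathcal{P}}$ with $d^{ACK}(P^{k},Q)\to0$, since then $d(P^{k},Q)\to0$. The structural fact I would rely on is that the support $K=\text{supp}(P)$ of any improper prior is \emph{belief-closed}: for $P$-a.e.\ $(\theta,\tau)$ the interim belief $\tau_{i}^{*}$ is carried by $K_{-i}:=\text{proj}_{\Theta\times\mathcal{T}_{-i}}(K)$, and since $K$ is the support, continuity of $\tau\mapsto\tau^{*}$ and weak-closedness of $\{\nu:\nu(K_{-i})=1\}$ upgrade this to \emph{all} $(\theta,\tau)\in K$; hence $B^{1}(K)=K$, so $C^{1}(K)=K$. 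The payoff is a one-sided estimate: if $\text{supp}(P)\subseteq\mathcal{N}_{\varepsilon}(\text{supp}(P'))$ and vice versa, then $K\subseteq\text{supp}_{\varepsilon}(P)\cap\text{supp}_{\varepsilon}(P')=\hat{T}_{\varepsilon}(P,P')$, whence $K=C^{1}(K)\subseteq C^{1-\varepsilon}(\hat{T}_{\varepsilon}(P,P'))$ and $P\big(\Omega\setminus C^{1-\varepsilon}(\hat{T}_{\varepsilon}(P,P'))\big)=0$; symmetrically for $P'$. Thus $d^{ACK}(P,P')\le\varepsilon$ as soon as the (essential) supports are mutually within $\varepsilon$.

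\emph{Extracting the limit set.} Since $\Theta$ is finite and each $\mathcal{T}_{i}$ is compact in the product topology, $\Omega$ is compact metrizable, so the hyperspace of nonempty closed subsets of $\Omega$ under the Hausdorff metric is compact. I would pass to a subsequence along which the essential supports $K_{k}$ of $P^{k}$ (the closed sets carrying all but a vanishing relative fraction of the mass) converge in Hausdorff distance to a closed set $K_{\infty}$. Moreover $K_{\infty}$ is belief-closed: for $(\theta,\tau)\in K_{\infty}$ choose $(\theta,\tau^{k})\in K_{k}$ with $\tau^{k}\to\tau$; then $(\tau^{k}_{i})^{*}$ is carried by $(K_{k})_{-i}$, $(\tau^{k}_{i})^{*}\to\tau_{i}^{*}$ weakly, and $(K_{k})_{-i}\to(K_{\infty})_{-i}$ in Hausdorff distance, which forces $\tau_{i}^{*}$ to be carried by $(K_{\infty})_{-i}$ (the marginal-on-$\Theta$ and consistency clauses pass to the limit the same way). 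Hence $C^{1}(K_{\infty})=K_{\infty}$.

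\emph{Building an improper prior with support $K_{\infty}$ --- the main obstacle.} I would obtain $Q$ as a locally renormalized limit of the $P^{k}$. Cover $\Omega$ by finitely many small balls $B_{1},\dots,B_{N}$; on each $B_{j}$ renormalize $P^{k}|_{B_{j}}$ to a probability measure and extract, by compactness of $\Delta(\Omega)$, a weak cluster point; then glue the local limits, via a partition of unity subordinate to $\{B_{j}\}$ together with a consistent choice of local scalings on the overlaps (which the Cauchy property makes possible), into a single $[0,\infty]$-valued Borel measure $Q$ with $\text{supp}(Q)=K_{\infty}$ --- this is where infinite total mass is accommodated. For the defining identity of an improper prior, I would pass to (locally renormalized, subsequential) weak limits of the joint state--belief laws $\int P^{k}(\mathrm{d}\tau_{i})\,\delta_{\tau_{i}}\otimes(P^{k})_{i}(\tau_{i})$ on $\mathcal{T}_{i}\times\Omega$, disintegrate the limit to obtain a conditional kernel $Q_{i}$ whose $\Theta\times\mathcal{T}_{-i}$-marginal is $\tau_{i}^{*}$ a.e., and then pass to the limit in $\int_{\text{proj}_{\mathcal{T}_{i}}(E)}(P^{k})_{i}(E\mid\tau_{i})\,P^{k}(\mathrm{d}\tau_{i})=P^{k}(E)$ ball by ball, using uniform tightness of the renormalized families on each $B_{j}$. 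The genuinely delicate points are the consistent choice of local scalings and the interchange of limit and integral for a possibly infinite $Q$; I expect these to be the bulk of the work.

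\emph{Conclusion.} Given such a $Q$, fix $\varepsilon>0$; for all large $k$ along the extracted subsequence the first step yields $K_{k}\subseteq\mathcal{N}_{\varepsilon}(K_{\infty})\subseteq\text{supp}_{\varepsilon}(Q)$ and $K_{\infty}\subseteq\mathcal{N}_{\varepsilon}(K_{k})\subseteq\text{supp}_{\varepsilon}(P^{k})$, so the structural fact of the first paragraph gives both inequalities of Definition~\ref{def:ACK-1}, i.e.\ $d^{ACK}(P^{k},Q)\le\varepsilon$ (with slack in the relative-mass terms). Thus a subsequence of $(P^{k})$ converges to $Q$, and since $(P^{k})$ is $d$-Cauchy the whole sequence converges to $Q$. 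Hence every Cauchy sequence in $\overline{\mathcal{P}}$ converges: the extended ACK topology is complete.
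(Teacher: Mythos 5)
Your overall strategy (extract a Hausdorff limit of supports, realize it as the support of a limiting improper prior, then use belief-closedness of supports to get the $d^{ACK}$ estimate) is reasonable, but as written it has two genuine gaps. First, the central object of the proof --- the limit improper prior $Q$ --- is never actually constructed. The partition-of-unity gluing of locally renormalized weak cluster points, the ``consistent choice of local scalings on the overlaps,'' and the verification of the defining identity $\int_{\text{proj}_{\mathcal{T}_{i}}(E)}Q_{i}(E|\tau_{i})Q(\text{d}\tau_{i})=Q(E)$ are exactly the content of the proposition, and you explicitly defer them (``I expect these to be the bulk of the work''). The paper does this step head-on and differently: it defines a set function on a semi-ring generating $\mathscr{B}$ by $\xi_{0}(E):=\lim_{k}P^{k}(\mathcal{N}_{\varepsilon_{k}}(E))$, where $\varepsilon_{k}$ bounds the Cauchy tail, checks the pre-measure properties (finite additivity, countable monotonicity via reverse Fatou), extends by the Carath\'eodory--Hahn theorem, and then verifies the consistency condition; no local renormalization or gluing is needed, and infinite total mass is accommodated automatically by the extension. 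Without either that construction or a completed version of yours, the proposal does not prove the statement.

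Second, the final estimate $d^{ACK}(P^{k},Q)\leq\varepsilon$ is not established on the $P^{k}$ side. Your ``structural fact'' ($B^{1}(K)=K$, hence $C^{1}(K)=K$, hence $P(\Omega\setminus C^{1-\varepsilon}(\hat{T}_{\varepsilon}))=0$) is proved for the \emph{full} support of a prior, but in the conclusion you apply it to the \emph{essential} supports $K_{k}$, which carry only mass $1-\delta_{k}$ and are not belief-closed: a type in $K_{k}\subseteq\hat{T}_{\varepsilon}(P^{k},Q)$ may assign positive probability to $\text{supp}(P^{k})\setminus\mathcal{N}_{\varepsilon}(K_{\infty})$, which need not lie in $\hat{T}_{\varepsilon}$, so membership in $C^{1-\varepsilon}(\hat{T}_{\varepsilon})$ does not follow from $C^{1}$-closedness of anything. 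What you actually get from the Hausdorff convergence is only $P^{k}(\hat{T}_{\varepsilon}(P^{k},Q))\geq1-\delta_{k}$; passing from high probability of $\hat{T}_{\varepsilon}$ to high probability of $C^{1-\varepsilon}(\hat{T}_{\varepsilon})$ requires a quantitative common $p$-belief bound in the spirit of Monderer--Samet (roughly, $P(C^{1-\varepsilon}(E))\geq1-P(\Omega\setminus E)/\varepsilon$), which you neither state nor prove; the parenthetical ``with slack in the relative-mass terms'' is doing real work here. (The $Q$-side of the estimate is fine, since there you do use the full support $K_{\infty}$ and its belief-closedness.) Relatedly, note that the essential supports $K_{k}$ are not canonical objects, and the claim that they lie within $\varepsilon$ of $\text{supp}_{\varepsilon}(Q)$ needs $\text{supp}(Q)=K_{\infty}$, which again hinges on the unfinished construction of $Q$.
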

\begin{proof}
Let $\mathscr{S}\subseteq\mathscr{B}$ be a semi-ring so that the
sigma algebra it generates equals $\mathscr{B}$. Let $(P^{k})_{k}$
be a Cauchy sequence in $\mathcal{P}$ and $(\varepsilon_{k})_{k}$
a sequence so that for every $k$,
\[
\varepsilon_{k}\leq\sup_{h>k}d^{ACK}(P^{k},P^{h}).
\]
Define a limit pre-measure on $\mathscr{S}$, which on any event $E\in\mathscr{S}$
takes the form
\[
\xi_{0}(E):=\lim_{k\uparrow\infty}P^{k}(\mathcal{N}_{\varepsilon_{k}}(E_{n})),
\]
where we set $\mathcal{N}_{\varepsilon}(\varnothing)=\varnothing$.
We now show that $\xi_{0}$ is a pre-measure on $\mathscr{S}$. Indeed,
$\xi_{0}(\varnothing)=0$. Since each $P^{k}$ is a measure, they
are all finitely additive and so for any finite, disjoint $(E_{n})_{n\leq N}$,
the limit and sum can be exchanged so that 
\[
\begin{split}\xi_{0}\left(\bigcup_{n\leq N}E_{n}\right) & =\lim_{k\uparrow\infty}\sum_{n\leq N}P^{k}(\mathcal{N}_{\varepsilon_{k}}(E_{n}))\\
 & =\sum_{n\leq N}\xi_{0}(E_{n})
\end{split}
\]
Finally, countable monotonicity follows from Reverse Fatou's lemma
(each $P^{k}$ is bounded and non-negative): 
\[
\begin{split}\xi_{0}\left(\bigcup_{n\in\mathbb{N}}E_{n}\right) & =\lim_{k\uparrow\infty}\sum_{n\in\mathbb{N}}P^{k}(\mathcal{N}_{\varepsilon^{k}}(E_{n}))\\
 & \leq\sum_{n\in\mathbb{N}}\xi_{0}(E_{n}).
\end{split}
\]
By Caratheodory-Hahn's Extension Theorem, $\xi_{0}$ extends to a
measure $\xi$ on the sigma algebra generated by $\mathscr{S}$. It
remains to show that $\xi_{0}$ is a Canonical Improper Prior. We
need to prove that $\xi_{0}$ satisfies the consistency condition.
For every event $E\in\mathscr{B}$, we have that 
\[
\begin{split}|P^{k}(E)-\xi(E)| & =\bigg|\int_{\mathcal{T}_{i}}\tau_{i}(\mathcal{N}_{\varepsilon}(E))P^{k}(\text{d}\tau_{i})-\lim_{k\uparrow\infty}\int_{\mathcal{T}_{i}}\tau_{i}(\mathcal{N}_{\varepsilon_{k}}(E))P^{k}(\text{d}\tau_{i})\bigg|\\
 & \leq\varepsilon.
\end{split}
\]
Hence $\xi_{0}\in\overline{\mathcal{P}}$. Deduce that $\overline{\mathcal{P}}$
is a complete metric space. 
\end{proof}

\section{Appendix}

\subsection{Proof of Proposition \ref{prop:(Existence-of-Equilibria)}}

In the set-up of \cite{stinchcombe2011correlated} a prior is a countably
additive probability $P\in\Delta(\prod_{i}\Omega_{i}),$ where each
$\Omega_{i}$ is endowed with a sigma algebra $\mathcal{F}_{i}$ and
$P$ is defined on a sigma algebra containing the product sigma algebra.
Payoffs are given by $U_{i}\colon\prod_{i}\Omega_{i}\to\mathbb{R}^{A}$.
It is assumed that $\int_{\Omega}||U_{i}(\omega)||_{\infty}dP(\omega)<\infty$.
A behavior strategy is a $\mathcal{F}_{i}$-measurable map $b_{i}\colon\Omega_{i}\to\Delta(A_{i})$.
Let $\mathbb{B}_{i}$ be the set of behavior strategies and let $\mathbb{B}:=\prod_{i}\mathbb{B}_{i}$.
A measure $\nu\in\Delta(\prod_{i}\mathbb{B}_{i})$ is a correlated
equilibrium\footnote{In \cite{stinchcombe2011correlated} this corresponds to a variation
Strategy Correlated Equilibrium where deviation strategies depend
action recommendations at realized types only.} if for every $i$ and measurable deviation $\gamma_{i}\colon\Omega_{i}\times\Delta\left(A_{i}\right)\to\Delta\left(A_{i}\right)$,
\[
\int_{\mathbb{B}}\int_{\Omega}\left(\sum_{a\in A}U_{i}(\omega)(a)\prod_{i}b_{i}(a|\omega)-\sum_{a\in A}U_{i}(\omega)(a)\cdot\gamma_{i}(\omega,b_{i}(a|\omega))\cdot\prod_{j\neq i}b_{j}(a|\omega)\right)dP(\omega)d\nu(b)\geq0.
\]

We show that for every correlated equilibrium $(P,\nu)$ there is
an outcome equivalent equilibrium $(P,\sigma)$. We now translate
this set-up into ours for any choice of general information structure
$\mathcal{S}$: $\Omega_{i}$ must correspond to $\Theta\times T_{i}$,
where $\mathcal{F}_{i}$ is generated by the projection of $\Theta\times T_{i}$
onto $T_{i}$ . Payoffs translate into our set-up by setting $U_{i}(\theta,\tau)(a)=u_{i}(\theta,a)$.
The condition that $\int_{\Omega}||U_{i}(\omega)||_{\infty}dP(\omega)<\infty$
is then satisfied since $\Theta$ is finite and payoffs depend $\Theta$
and not $T$. A behavior strategy corresponds to a marginal decision
rule, $\sigma(a_{i}|t_{i}).$It remains to show that a distribution
$\nu$ over profiles of marginal decision rules as in \cite{stinchcombe2011correlated}
is equivalent to a belief invariant decision rule. First, note that
every measure $\nu\in\Delta(\prod_{i}\mathbb{B}_{i})$ induces a belief
invariant decision rule $\sigma_{\nu}$ where for every measurable
event $E\subseteq A$
\[
\sigma_{\nu}(a|\omega)=\int_{\mathbb{B}}\prod_{i}b_{i}(\tau_{i})(a_{i})\ d\nu(b).
\]
Indeed, the resulting marginal probability $\sigma_{\nu}(a_{i}|\omega)=\int_{\mathbb{B}_{i}}b_{i}(\tau_{i})(a_{i})\ \text{d}\nu(b_{i})$
verifies belief invariance.

\subsection{Proof of Proposition \ref{prop: measurable}}

Let $(\mathcal{G},P)$ be base game and an information structure,
i.e. so that $P\in\mathcal{P}$. Consider an information structure
$\hat{\mathcal{S}}=((S_{i},\hat{P}_{i})_{i\in I},\hat{P})$ satisfying
$P=P_{\hat{\mathcal{S}}}$ and let $\sigma'$ be a BIBCE of $(\mathcal{G},\hat{\mathcal{S}})$.
We construct an outcome equivalent BIBCE $\sigma$ of $(\mathcal{G},\hat{\mathcal{S}})$.
For every $\theta\in\Theta$ and $P$-almost every $\tau\in\mathcal{T}$,
let 
\[
\sigma(a|\theta,\tau)=\int_{\overline{\tau}^{-1}(\tau)}\sigma'(a|\theta,s')\text{ d}\hat{P}(s'|\theta,\overline{\tau}^{-1}(\tau)),
\]
where $\tau\mapsto\hat{P}(\cdot|\theta,\overline{\tau}^{-1}(\tau))\in\Delta(\overline{\tau}^{-1}(\tau))$
is a conditional probability on $S$. Then obedience constraints for
any $P$-almost every type $\tau_{i}\in\mathcal{T}_{i}$ are satisfied

\begin{multline*}
\sum_{\theta,a_{-i}}(u_{i}(a,\theta)-u_{i}(a_{i}',a_{-i},\theta))\int_{S_{-i}}\sigma(a|\theta,\tau)\text{ d}P(\tau_{-i},\theta|\tau_{i})=\\
\sum_{\theta,a_{-i}}(u_{i}(a,\theta)-u_{i}(a_{i}',a_{-i},\theta))\int_{S_{-i}}\int_{\overline{\tau}^{-1}(\tau)}\sigma'(a|\theta,s')\text{ d}\hat{P}(s'|\theta,\overline{\tau}^{-1}(\tau))\text{ d}P(\tau_{-i},\theta|\tau_{i}).
\end{multline*}
Noting that 
\[
\int_{\mathcal{T}_{-i}}\int_{\overline{\tau}^{-1}(\tau)}\sigma'(a|\theta,s')\text{ d}\hat{P}(s'|\theta,\overline{\tau}^{-1}(\tau))\text{ d}P(\tau_{-i},\theta|\tau_{i})=\int_{S_{-i}}\sigma'(a|\theta,s')\text{ d}\hat{P}(s_{-i}',\theta|s_{i}),
\]
we conclude that $\sigma$ is a BIBCE of $(\mathcal{G},P)$. Conversely,
note that every BIBCE $\sigma$ of $(\mathcal{G},P)$ induces a $id\times\overline{\tau}$-measurable
decision rule $\sigma'$, where for every action profile $a\in A$,
state $\theta\in\Theta$ and $\hat{P}$-almost every $s'\in S$,
\[
\sigma'(a|\theta,s')=\sigma(a|\theta,\overline{\tau}(s')).
\]
Performing a change of variables 
\begin{multline*}
\int_{S_{-i}}\sigma'(a|\theta,s')\text{ d}\hat{P}(s_{-i}',\theta|s_{i})=\int_{\mathcal{T}_{-i}}\int_{\overline{\tau}^{-1}(\tau)}\sigma'(a|\theta,s')\text{ d}\hat{P}(s'|\theta,\overline{\tau}^{-1}(\tau))\text{ d}P(\tau_{-i},\theta|\tau_{i})\\
=\int_{\mathcal{T}_{-i}}\sigma(a|\theta,\tau)\text{ d}P(\tau_{-i},\theta|\tau_{i}).
\end{multline*}
and so obedience follows. 

\subsection{Proof of Proposition \ref{prop: BNEgen}}

Let $(\sigma_{i}\colon S_{i}\to\Delta(A_{i}))_{i\in I}$ be a BNE
under $(\mathcal{G},\mathcal{S})$. Consider the following representation
of $\mathcal{S}$: Define the graph of $\overline{\tau},$ $\phi_{\overline{\tau}}\colon s\mapsto(\overline{\tau}(s),s)$
and consider the push forward probability $P^{*}:=P\circ(id\times\phi_{\overline{\tau}})^{-1}$.
Let $\sigma^{R}\colon\Theta\times\mathcal{T}\to\Delta(S)$ denote
a ($id\times\overline{\tau}$)-conditional probability of $P^{*}$
so that for any measurable $E\subseteq S$ and state $\theta\in\Theta$
in the support of $P$, 
\[
P(E|\theta)=\int_{\mathcal{T}}\sigma^{R}(E|\theta,\tau)\text{d}P^{*}(\tau|\theta).
\]
We now construct a BIBCE $\sigma^{*}$ on $\text{marg}_{\Omega}P^{*}$
as follows: 
\[
\sigma^{*}(a|\theta,\tau)=\int_{S}\prod_{i\in I}\sigma_{i}(a|s_{i})\text{ d}\sigma^{R}(s|\theta,\tau).
\]
Since $\sigma$ satisfies obedience, so does $\sigma^{*}$:

\begin{multline*}
\sum_{\theta,a_{-i}}(u_{i}(a,\theta)-u_{i}(a_{i}',a_{-i},\theta))\int_{\mathcal{T}_{-i}}\sigma^{*}(a|\theta,\tau)\text{ d}\text{marg}_{\Omega}P^{*}(\theta,\tau_{-i}|\tau_{i})\\
=\sum_{\theta,a_{-i}}(u_{i}(a,\theta)-u_{i}(a_{i}',a_{-i},\theta))\int_{S}\prod_{i\in I}\sigma_{i}(a|s_{i})\text{ d}P(\theta,s_{-i}|s_{i}).
\end{multline*}

\subsection{Proof of Proposition \ref{prop:The-ACK-topology}}

The ACK topology is metrizable by the metric
\[
d^{f,g}(P,P')=\inf\left\{ \varepsilon\geq0:\begin{array}{c}
P\left(C^{1-g(\varepsilon)}\left(\hat{T}_{g(\varepsilon)}(P,P')\right)\right)>f(\varepsilon)\\
P'\left(C^{1-g(\varepsilon)}\left(\hat{T}_{g(\varepsilon)}(P,P')\right)\right)>f(\varepsilon)
\end{array}\right\} .
\]
where $g(\varepsilon):=\varepsilon^{2}$ and $f(\varepsilon):=1-\frac{1}{2}\varepsilon(\varepsilon+1)$. 
\begin{proof}
First note that $d^{f,g}(P,P)=0$ and $d^{f,g}(P,P')=d^{f,g}(P',P)$
are both immediate. Suppose now that $d^{f,g}(P,P')=0$. Then we have
that $\text{supp}(P)=\text{supp}(P')$, which by the definition of
$\mathcal{P}$ means that $P=P'$. It remains to show that $d^{f,g}$
satisfies the triangle inequality. Let $P^{1},P^{2},P^{3}\in\mathcal{P}$
satisfy $d^{f,g}(P^{1},P^{2})<\varepsilon_{1,2}$ and $d^{f,g}(P^{3},P^{2})<\varepsilon_{3,2}$,
for $\varepsilon_{1,2},\varepsilon_{3,2}\in[0,1]$. Note that for
every $\omega\in\hat{T}_{g(\varepsilon_{1,2})}(P^{1},P^{2})\cap\hat{T}_{g(\varepsilon_{2,3})}(P^{2},P^{3})\subseteq\text{supp}_{g(\varepsilon_{1,2})}(P^{1})\cap\text{supp}_{g(\varepsilon_{3,2})}(P^{3})$
there is $\omega'\in\text{supp}(P^{2})$ so that 
\[
d_{\Pi}(\omega,\omega')\leq\underline{\varepsilon}:=\min\left\{ g(\varepsilon_{1,2}),g(\varepsilon_{3,2})\right\} .
\]
Moreover, $P^{2}(F)>1-\left(1-f(\varepsilon_{1,2})+1-f(\varepsilon_{3,2})\right)$,
where
\[
F:=C^{1-g(\varepsilon_{1,2})}\left(\hat{T}_{g(\varepsilon_{1,2})}(P^{1},P^{2})\right)\cap C^{1-g(\varepsilon_{3,2})}\left(\hat{T}_{g(\varepsilon_{2,3})}(P^{2},P^{3})\right)\subseteq C^{1-g(\varepsilon_{1,2}+\varepsilon_{3,2})}\left(\hat{T}_{g(\varepsilon_{1,2}+\varepsilon_{3,2})}(P^{1},P^{3})\right),
\]
and the last containment is due to the fact that $g(\varepsilon_{1,2})+g(\varepsilon_{3,2})=\varepsilon_{1,2}^{2}+\varepsilon_{3,2}^{2}\leq(\varepsilon_{1,2}+\varepsilon_{3,2})^{2}=g(\varepsilon_{1,2}+\varepsilon_{3,2})$.
Consider a $P^{1},P^{3}$ and $P^{2}$ measurable, surjective map
$\phi\colon F\to F\cap\text{supp}(P^{2})$ so that for all $\omega\in F$,
$d_{\Pi}(\phi(\omega),\omega)<\underline{\varepsilon}$. Letting $F_{i}:=\text{proj}_{\mathcal{T}_{i}}(F)$,
\[
\begin{aligned}|P^{k}(F)-P^{2}(F)| & =\bigg|\int_{F_{i}}\tau_{i}(F)P^{k}(\text{d}\tau_{i})-\int_{F_{i}}\tau_{i}(F)P^{2}(\text{d}\tau_{i})\bigg|\\
 & \leq\underline{\varepsilon}
\end{aligned}
.
\]
Hence $P^{k}(F)\geq f(\varepsilon_{1,2})+f(\varepsilon_{3,2})-1-\underline{\varepsilon}$.
Then for any $\varepsilon_{1,2},\varepsilon_{3,2}\in[0,1]$,
\[
\begin{aligned}P^{k}\left(C^{1-g(\varepsilon_{1,2}+\varepsilon_{3,2})}\left(\hat{T}_{g(\varepsilon_{1,2}+\varepsilon_{3,2})}(P^{1},P^{3})\right)\right) & \geq f(\varepsilon_{1,2})+f(\varepsilon_{3,2})-1-\underline{\varepsilon}\end{aligned}
.
\]
Given $g(\varepsilon)=\varepsilon^{2}$ and $f(\varepsilon)=1-\frac{1}{2}\varepsilon(\varepsilon+1)$.
Then we have 
\[
\begin{aligned}f(\varepsilon_{1,2})+f(\varepsilon_{3,2})-\underline{\varepsilon}-1 & \geq f(\varepsilon_{1,2}+\varepsilon_{3,2})\\
1-\frac{1}{2}\varepsilon_{1,2}(\varepsilon_{1,2}+1)+1-\frac{1}{2}\varepsilon_{3,2}(\varepsilon_{3,2}+1)-\underline{\varepsilon}-1 & \geq1-\frac{1}{2}\left(\varepsilon_{1,2}+\varepsilon_{3,2}\right)(\varepsilon_{1,2}+\varepsilon_{3,2}+1)\\
-\frac{1}{2}\varepsilon_{1,2}(\varepsilon_{1,2}+1)-\frac{1}{2}\varepsilon_{3,2}(\varepsilon_{3,2}+1)-\underline{\varepsilon} & \geq-\frac{1}{2}\left(\varepsilon_{1,2}+\varepsilon_{3,2}\right)(\varepsilon_{1,2}+\varepsilon_{3,2}+1)\\
(\varepsilon_{1,2}^{2}+\varepsilon_{3,2}^{2}+\varepsilon_{1,2}+\varepsilon_{3,2})+2\underline{\varepsilon} & \leq\left(\left(\varepsilon_{1,2}+\varepsilon_{3,2}\right)^{2}+\varepsilon_{1,2}+\varepsilon_{3,2}\right)\\
(\varepsilon_{1,2}^{2}+\varepsilon_{3,2}^{2}+\varepsilon_{1,2}+\varepsilon_{3,2})+2\underline{\varepsilon} & \leq\left(\varepsilon_{1,2}^{2}+\varepsilon_{3,2}^{2}+\varepsilon_{1,2}+\varepsilon_{3,2}\right)+2\varepsilon_{1,2}\varepsilon_{3,2}\\
\min_{k\in\left\{ 1,3\right\} }\varepsilon_{k,2}^{2} & \leq\varepsilon_{1,2}\varepsilon_{3,2}\\
\min_{k\in\left\{ 1,3\right\} }\varepsilon_{k,2} & \leq\max_{k\in\left\{ 1,3\right\} }\varepsilon_{k,2}.
\end{aligned}
\]
Hence $d^{f,g}(P^{1},P^{3})<\varepsilon_{1,2}+\varepsilon_{3,2}$
and so $d^{f,g}$ satisfies the triangle inequality. 
\end{proof}

\subsection{Proof of Lemma \ref{claim:GridPart} }
\begin{proof}
For every precision $z\in\mathbb{N}$ define a finite grid approximation
recursively: Consider the grid on first order beliefs, given by 
\[
A_{i}^{1,z}:=\left\{ a\in\mathbb{R}^{\Theta}:a_{\theta}\in\left\{ \frac{n}{z}:0\leq n\leq z\right\} ,\sum_{\theta}a_{\theta}=1\right\} 
\]
 for any player $i$. Given a finite set $A_{i}^{m-1,z}\subseteq\mathcal{T}_{i}^{m-1}$
for every $i$, define 
\begin{equation}
A_{i}^{m,z}:=\left\{ a\in\mathbb{R}^{\prod_{n<m}A_{-i}^{n,z}\times\Theta}:\begin{array}{c}
a_{a^{1},\dots,a^{m-1},\theta}\in\left\{ \frac{n}{z}:0\leq n\leq z\right\} ,\\
\sum_{a^{1},\dots,a^{m-1},\theta}a_{a^{1},\dots,a^{m-1},\theta}=1
\end{array}\right\} .\label{eq:Grid}
\end{equation}
Then for any $m\in\mathbb{N}$, and any $t^{m}\in\mathcal{T}^{m},$there
is $a^{m}\in\prod_{i\in I}A_{i}^{m,z}$ so that $d_{Weak}^{m}(t^{m},a^{m})<1/z$.
Then there is $m,z$ large enough so that for every, $\theta\in\Theta$,
every $g\in\mathcal{T}$satisfying $\left(g^{1},\dots g^{m}\right)=a^{m}$
and $\tau\in\mathcal{T}$satisfying $(\tau^{1},\dots,\tau^{m})=t^{m},$
\[
\begin{split}d_{\Pi}(\left(\theta,\tau\right),\left(\theta,g\right))= & \sum_{n=1}^{\infty}\eta^{n}d_{Weak}^{n}(\tau^{1},\dots,\tau^{n},g^{1},\dots,g^{n})\\
= & \frac{\eta}{z}\frac{1-\eta^{m+1}}{1-\eta}+\eta^{m}\frac{1}{1-\eta}\\
< & \varepsilon.
\end{split}
\]
Let $G_{\varepsilon}:=\prod_{i\in I}A_{i}^{m,z}\cap(\Omega\setminus\mathcal{N}_{\varepsilon}(E))$.
Suppose that $d_{\Pi}(\omega,\omega_{g})<\varepsilon$ for some $\omega\in\Omega$
and $\omega_{g}\in G_{\varepsilon}.$Then clearly, it must be that
$\omega\notin E$, since the opposite would imply that $\omega_{g}\in\mathcal{N}_{\varepsilon}(E)\cap G_{\varepsilon}$.
Suppose now that $\omega\in\Omega\setminus\mathcal{N}_{\varepsilon}(E)$
and so by the above there is $\omega_{g}\in G_{\varepsilon}$ so that
$d_{\Pi}(\omega',\omega_{g})<\varepsilon$.
\end{proof}

\subsection{Proof of Lemma \ref{claim_GridGame} (Iterative Scoring Rule)}
\begin{proof}
Fix $m,z$. and let $A^{m,z}$ be defined as in \ref{eq:Grid}. Following
\cite{dekel2006topologies}, for every profile $\left(a^{1},\dots,a^{m},\theta\right)\in A^{m,z}\times\Theta$
we define
\begin{multline*}
u_{i}^{m}\left(a^{1},\dots,a^{m},\theta\right):=2a_{i}^{1}(\theta)-\sum_{\hat{\theta}}\left(a_{i}^{1}(\hat{\theta})\right)^{2}\\
+\sum_{n=2}^{m}\left(2a_{i}^{n}(a^{1},\dots,a^{n-1},\theta)-\sum_{\hat{a}^{1},\dots,\hat{a}^{n-1},\hat{\theta}}\left(a_{i}^{n}(\hat{a}^{1},\dots,\hat{a}^{n-1},\hat{\theta})\right)^{2}\right).
\end{multline*}
 \cite{dekel2006topologies} show that the game where payoffs are
given by $u_{i}^{m},$ the uniquely interim (correlated) rationalizable
action profile of $\tau$ is the profile $a^{1},\dots,a^{m}$ which
is $1/z$-close to $\tau$ in terms of distance $d_{Weak}^{m}$ on
$\mathcal{T}^{m}$. The result then follows from Claim \ref{claim:GridPart}. 

\newpage{}
\end{proof}

\subsection{Open Set Definition of Strategic Topology}
\begin{defn}
\label{def:(Strategic-Distance)}(Strategic Distance) A function $\overline{d}_{\mathcal{P}}\colon\mathcal{P}\times\mathcal{P}\to[0,\infty)$
is a strategic distance if,
\end{defn}
\begin{enumerate}
\item for every game $\mathcal{G}$ and $\varepsilon>0$, there exists $\delta_{\varepsilon,\mathcal{G}}>0$
so that for all $P,P'\in\mathcal{P}$ satisfying $\overline{d}_{\mathcal{P}}(P,P')<\delta_{\varepsilon,u}$,
we have that $d^{u}(P,P')<\varepsilon$,
\item for every $\varepsilon>0$ and $P,P'\in\mathcal{P}$ satisfying $\sup_{u}d^{u}(P,P')<\varepsilon$,
we also have $\overline{d}_{\mathcal{P}}(P,P')<\varepsilon$.
\end{enumerate}

\subsection{Proof of Proposition \ref{prop:-nice-Int.Top}\protect 
}\begin{proof}
Let $d$ induce a nice, common belief invariant topology on $\Omega$
which is a stronger topology than the one induced by $d_{\Pi}$ and
we have that $\hat{T}_{d,\varepsilon}(P,P')\subseteq\hat{T}_{\varepsilon}(P,P')$.
By Proposition \ref{prop_CPB_free} we deduce that $\hat{T}_{d,\varepsilon}(P,P')=C^{1-\varepsilon}\left(\hat{T}_{d,\varepsilon}(P,P')\right)\subseteq C^{1-\varepsilon}\left(\hat{T}_{\varepsilon}(P,P')\right)$
and so 
\[
\begin{aligned}\min\left\{ P(\hat{T}_{d,\varepsilon}(P,P')),P(\hat{T}_{d,\varepsilon}(P,P'))\right\} >1-\varepsilon & \implies d^{ACK}(P,P')<\varepsilon\\
 & \implies\forall\ \mathcal{G},\ \exists\ N>0\text{ s.t. }d^{*}(P,P'|\mathcal{G})<N\varepsilon.
\end{aligned}
\]
We now show that for every $\varepsilon>0$ there is $\delta>0$ so
that $P(C^{1-\delta}\left(\hat{T}_{\delta}(P,P')\right))>1-\delta\implies P(\hat{T}_{d,\varepsilon}(P,P'))>1-\varepsilon$.
Suppose that $\text{supp}(P)\neq\text{supp}(P')$ (otherwise the condition
is trivially satisfied). There is a countable subset $\Omega_{d}^{0}\subseteq\Omega^{0}$
which is dense under $d$. Suppose that for every $\varepsilon>0$
there is a finite grid $G_{d,\varepsilon}$ whose $\varepsilon$-neighborhood
covers a subset $H_{\varepsilon}\subseteq\hat{T}_{d,\varepsilon}(P,P')$
with $P(H_{\varepsilon})>1-\varepsilon$. Since $\Omega$ is Hausdorff
under the product topology, there is a $\delta$ so that $\min_{g,g'\in G_{d,\varepsilon}:g\neq g'}d_{\Pi}(g,g')>\delta$
and so the result follows. It remains to prove the existence of such
a grid. We show that for any nice metric $d$, $\Omega$ is Polish.
To establish that $\Omega$ is Polish it is enough to show that it
is complete under $d$. Let $(\omega^{k})_{k}$ be Cauchy. Since $d$
refines the product topology, the sequence $(\omega^{k})_{k}$ is
Cauchy under the product topology. Since $\Omega$ is Polish under
the product topology (see \cite{mertens2015repeated}), we deduce
that $\Omega$ is Polish under $d$. Then $P$ is regular and so every
set, in particular the set $\hat{T}_{d,\varepsilon}(P,P')$, admits
a compact approximation $H_{\varepsilon}$ from below. By compactness,
$H_{\varepsilon}$ admits such a grid. 
\end{proof}

\subsection{Interim Strategic Topology}

We follow \cite{chen2017characterizing} in describing the interim
strategic topology. 
\begin{defn}
(Frame) A frame is a profile of maps $(\pi_{i})_{i\in I}$ where for
every $i\in I$, $\pi_{i}\colon\mathcal{T}_{i}\to F_{i}$, $F_{i}$
is a finite set and for every prior $P\in\mathcal{P}$ and all types
$\tau_{i},\tau_{i}'\in\mathcal{T}_{i}$, 
\[
P(\cdot,\cdot|\tau_{i})\circ(id\times\pi_{-i})^{-1}=P(\cdot,\cdot|\tau_{i}')\circ(id\times\pi_{-i})^{-1}\implies\pi_{i}(\tau_{i})=\pi_{i}(\tau_{i}').
\]
\end{defn}
For any set $E\subseteq\Omega$ and any player $i$ , let $E_{i}\subseteq\mathcal{T}_{i}$
denote the projection of $E$ on $\mathcal{T}_{i}.$ Let $\mathscr{F}$
denote the collection of events that are measurable with respect to
a frame. 
\[
\mathscr{F}=\left\{ E\in\mathscr{B}:\exists\ \pi\in\Pi,\tau\in E\text{ s.t. }\forall\ i\in I,\ \pi_{i}^{-1}(\pi_{i}(\tau))=E_{i}\right\} .
\]
 The formal definition is recursive:
\begin{defn}
(Uniform Weak Distance on Frames) For all pairs $\omega=(\theta,\tau),\hat{\omega}=(\hat{\theta},\hat{\tau})\in\Omega$,
define the Uniform Prokohorov Distance on Frames by $d^{\mathscr{F}}(\omega,\hat{\omega}):=\sup_{m}d^{m}(\omega,\hat{\omega})$,
where the sequence of functions $(d^{m})_{m}$ on $\Omega\times\Omega$
is defined recursively from a metric representing the weak{*} topology
on profiles of first order beliefs $d^{1}(\omega,\hat{\omega})=d_{\Delta(\Theta)^{I}}(\tau^{1},\hat{\tau}^{1})$,
and for every $m>1$, 
\[
d^{m}(\omega,\hat{\omega}):=d_{\Theta}(\theta,\hat{\theta})+\inf\{\delta>0:\forall\ i,\ \tau_{i}(E)\leq\hat{\tau}_{i}(\mathcal{N}_{d^{m-1},\delta}(E))+\delta,\ \forall\ E\in\mathscr{F}\}.
\]
\end{defn}
\begin{defn}
(Interim Strategic Topology) Define the uniform weak topology on frames
as the topology on $\Omega$ generated by the sets $\left\{ \omega'\in\Omega:d^{\mathscr{F}}(\omega,\omega')<\varepsilon\right\} $
for $\omega\in\Omega$ and $\varepsilon>0$.
\end{defn}
\begin{prop}
\label{prop:(Niceness)-The-interim}(Niceness) The interim strategic
topology is nice.
\end{prop}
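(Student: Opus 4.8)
The plan is to verify directly, for the metric $d^{\mathscr{F}}$ that induces the interim strategic topology, the two requirements in the definition of a nice interim metric: that its topology refines the product topology, and that some countable subset of $\Omega^{0}$ is $d^{\mathscr{F}}$-dense in $\Omega$.

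For the first requirement I would show that $d^{\mathscr{F}}$-convergence of a sequence forces convergence of every finite order of beliefs; since the two topologies are metrizable this yields the desired refinement. Because $d^{\mathscr{F}}=\sup_{m}d^{m}$, it is enough to prove by induction on $m$ that $d^{m}(\omega^{k},\omega)\to 0$ implies that the profiles of $m$-th order beliefs converge weakly. The base case $m=1$ is immediate, since $d^{1}$ is by construction a metric for the weak topology on $\Delta(\Theta)^{I}$. For the inductive step, the term $\inf\{\delta>0:\forall\,i,\ \tau_{i}(E)\le\hat{\tau}_{i}(\mathcal{N}_{d^{m-1},\delta}(E))+\delta,\ \forall\,E\in\mathscr{F}\}$ appearing in $d^{m}$ is a nested Prokhorov distance whose test class consists of frame events; the key step is the observation that every event determined by $(m-1)$-th order beliefs up to a finite grid is a frame event --- finitely valued maps reading off finite-order beliefs are admissible frames, by the recursive construction of frames --- so this test class is convergence-determining for weak convergence relative to the topology generated by $d^{m-1}$, which by the inductive hypothesis is at least as fine as the product topology at level $m-1$. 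Hence the $m$-th order beliefs converge weakly, and $d^{\mathscr{F}}$-convergence implies product convergence. (This monotonicity may alternatively be invoked from \cite{dekel2006topologies} and \cite{chen2017characterizing}.)

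For the second requirement I would combine two ingredients. First, finite types are dense in the interim strategic topology --- this is a theorem of \cite{dekel2006topologies}. Second, I would show that among finite types, those arising from a finite information structure with rational probabilities form a $d^{\mathscr{F}}$-dense (and countable) family. To this end, fix a finite type and a finite information structure generating it with common prior $p$, and let $p'$ be another common prior on the same finite support. Coupling the two induced hierarchies through the identity on this finite set, one shows that the level-$m$ distances $d^{m}$ between the corresponding points are all bounded by $C\|p-p'\|$ with a constant $C$ that does \emph{not} depend on $m$: the level-$m$ nested Prokhorov distance is controlled by a maximum --- move all but an $\varepsilon$-fraction of the mass by at most $\varepsilon$ --- rather than a sum over levels, so the constants stabilize instead of accumulating. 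Thus the map from the common prior on a fixed finite support into $(\Omega,d^{\mathscr{F}})$ is locally Lipschitz, rational priors approximate arbitrary ones in $d^{\mathscr{F}}$, and the countably many finite types with rational data are $d^{\mathscr{F}}$-dense among finite types; with the first ingredient they are dense in $\Omega$.

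The step I expect to be the main obstacle is the inductive step of the first part: making rigorous that the class $\mathscr{F}$ of frame events is rich enough to pin down weak convergence of $m$-th order beliefs. One must check that the maps reading off $(m-1)$-th order beliefs on a grid qualify as frames (they are coarser than what is determined by beliefs about the opponents' grid cells, hence satisfy the defining measurability condition of a frame) and that such grid events approximate $d^{m-1}$-closed sets from the outside, so that the nested Prokhorov inequality defining $d^{m}$ genuinely delivers weak convergence; this is exactly where the combinatorics of $\mathscr{F}$ from \cite{chen2017characterizing} does the work. Granting that, the Lipschitz estimate and the density argument in the second part are routine.
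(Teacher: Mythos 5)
Your proposal is correct in substance, but it takes a more self-contained route than the paper, whose entire proof is a citation: \cite{dekel2006topologies} are invoked both for the fact that the interim strategic topology is metrizable and refines the product topology, and for the denseness of a countable set of finite types. What you do differently is (i) attempt a direct inductive proof of the refinement property via the nested Prokhorov condition over frame events, and (ii) supply an explicit countability argument, approximating finite types by finite information structures with rational priors through a coupling estimate showing the level-$m$ distances are bounded by $C\lVert p-p'\rVert$ uniformly in $m$. Part (ii) is a genuine addition: the paper leaves the passage from ``finite types are dense'' to ``a \emph{countable} subset of $\Omega^{0}$ is dense'' entirely to the citation, and your locally Lipschitz bound (the constant requires the prior to be bounded away from zero on the fixed finite support, which your ``locally'' correctly accommodates) is the right way to fill it in. Part (i) is where the risk sits, exactly as you flag: the claim that discretized finite-order-belief partitions qualify as frames does check out (each cell is recoverable from the belief over $\Theta$ times the opponents' cells, since the $\Theta$-marginal returns the exact first-order belief and the opponents' cells refine their lower-level grid cells), but making the class $\mathscr{F}$ convergence-determining for weak convergence at each level is nontrivial bookkeeping that the paper avoids by citing \cite{dekel2006topologies} and \cite{chen2017characterizing}; your stated fallback of invoking those results directly is precisely what the paper does, so nothing essential is missing. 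One caveat you inherit from the paper rather than introduce: the definition of niceness asks for a countable dense subset of $\Omega^{0}$, i.e.\ of finite \emph{common prior} states, so the denseness result being cited must be the common-prior version of the finite-type denseness theorem; your rational-prior perturbation step is consistent with this, since it perturbs the common prior of a finite information structure and stays inside $\Omega^{0}$.
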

\begin{proof}
\cite{dekel2006topologies} show that a countable set of finite states
is dense in $\Omega$ in the interim strategic topology. The interim
strategic topology is a stronger topology than the product topology
and is metrizable (see \cite{dekel2006topologies}). 
\end{proof}
\begin{prop}
\label{prop:(Common-Belief-Invariance)}(Common Belief Invariance)
\label{prop_CPB_free} For every canonical $P,P'\in\mathcal{P}$ and
$\varepsilon>0$, 
\[
C_{1-\varepsilon}\left(\hat{T}_{d^{\mathscr{F}},\varepsilon}(P,P')\right)=\hat{T}_{d^{\mathscr{F}},\varepsilon}(P,P').
\]
\end{prop}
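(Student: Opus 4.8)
The plan is to reduce the asserted identity to a single ``$(1-\varepsilon)$-evidence'' statement and then prove that statement from the recursive structure of $d^{\mathscr{F}}$ together with the labelling condition on $P$ and $P'$. Write $F:=\hat{T}_{d^{\mathscr{F}},\varepsilon}(P,P')=\text{supp}_{d^{\mathscr{F}},\varepsilon}(P)\cap\text{supp}_{d^{\mathscr{F}},\varepsilon}(P')$. Since $B^{p}(E)\subseteq E$ for every event $E$, the iterates $[B^{1-\varepsilon}]^{m}(F)$ are decreasing and $C^{1-\varepsilon}(F)\subseteq F$ is automatic; conversely, if $F$ is \emph{$(1-\varepsilon)$-evident}, i.e. $F\subseteq B^{1-\varepsilon}(F)$, then $F=B^{1-\varepsilon}(F)$, hence $[B^{1-\varepsilon}]^{m}(F)=F$ for all $m$ and so $C^{1-\varepsilon}(F)=F$. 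Thus the whole statement comes down to: for every $\omega=(\tau,\theta)\in F$ and every player $i$, $\tau_{i}^{*}(F_{-i})\geq1-\varepsilon$, where $F_{-i}=\text{proj}_{\Theta\times\mathcal{T}_{-i}}(F)$. (Note that this is \emph{not} a property a $d^{\mathscr{F}}$-neighbourhood of an arbitrary event would have — the argument genuinely needs the prior structure, which is why the proposition is stated for $P,P'\in\mathcal{P}$.)

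I would first isolate two structural inputs. (1) \emph{Shape of balls and self-reference.} From the definition, each $d^{m}$ splits as a state-penalty $\boldsymbol{1}_{\theta\neq\hat{\theta}}$ plus a maximum over players of a per-player term which is a Prokhorov distance between interim beliefs computed in the metric $d^{m-1}$; taking $d^{\mathscr{F}}=\sup_{m}d^{m}$ makes these per-player terms close up, so that for $\varepsilon\in(0,1)$ the ball $\mathcal{N}_{d^{\mathscr{F}},\varepsilon}(\omega)$ is a product $\{\theta\}\times\prod_{i}B_{i,\varepsilon}(\tau_{i})$ and, crucially, $d^{\mathscr{F}}(\omega,\hat{\omega})<\varepsilon$ holds exactly when $\theta=\hat{\theta}$ and, for each $i$, $\tau_{i}^{*}$ and $\hat{\tau}_{i}^{*}$ are within $\varepsilon$ in the Prokhorov metric on $\Delta(\Theta\times\mathcal{T}_{-i})$ built from the induced opponent metric. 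The point of the supremum is that this Prokhorov characterisation is stated in terms of $d^{\mathscr{F}}$ itself, so moving to a nearby point and then to its interim belief costs no extra radius. (2) \emph{Labelling input.} If $P(\mathcal{N}_{d^{\mathscr{F}},\varepsilon}(\zeta))>0$ then $\mathcal{N}_{d^{\mathscr{F}},\varepsilon}(\zeta)$ meets $\text{supp}(P)$, and at any $\omega^{0}\in\text{supp}(P)$ the labelling condition gives $(\tau_{i}^{0})^{*}=P_{i}(\tau_{i}^{0})$, a conditional of $P$, hence concentrated on $\text{supp}(P)_{-i}$; moreover every opponent profile in that support co-occurs in $\text{supp}(P)$ with $\tau_{i}^{0}$, which is $d^{\mathscr{F}}$-close to $\tau_{i}$ whenever $\omega^{0}$ is close to $\omega$.

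The core argument would then be: given $\omega=(\tau,\theta)\in F$ and $i$, extract from $\omega\in\text{supp}_{d^{\mathscr{F}},\varepsilon}(P)$ a genuine support point $\omega^{0}\in\text{supp}(P)$ near $\omega$; by self-reference $\tau_{i}^{*}$ is Prokhorov-close to $P_{i}(\tau_{i}^{0})$, which lives on opponent profiles close to $\text{supp}(P)_{-i}$ and co-occurring in $\text{supp}(P)$ with the type $\tau_{i}^{0}$ close to $\tau_{i}$. Doing the same with $P'$ shows $\tau_{i}^{*}$ is also close to a measure on opponent profiles close to $\text{supp}(P')_{-i}$ co-occurring with some $\tau_{i}^{0\prime}$ close to $\tau_{i}$. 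Intersecting, at least $1-\varepsilon$ of the mass of $\tau_{i}^{*}$ sits on opponent profiles $\sigma_{-i}$ lying within $\varepsilon$ of a $P$-support point \emph{and} of a $P'$-support point whose $i$-components are both close to $\tau_{i}$; fixing that common $i$-component as the witness $\sigma_{i}$ puts $(\theta,\sigma_{i},\sigma_{-i})$ into $\text{supp}_{d^{\mathscr{F}},\varepsilon}(P)\cap\text{supp}_{d^{\mathscr{F}},\varepsilon}(P')=F$, hence $(\theta,\sigma_{-i})\in F_{-i}$. Therefore $\tau_{i}^{*}(F_{-i})\geq1-\varepsilon$, which is the required $(1-\varepsilon)$-evidence, and $C^{1-\varepsilon}(F)=F$ follows.

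The hard part is the constant bookkeeping in that last step: keeping the radius at exactly $\varepsilon$ — not $2\varepsilon$ — when one passes from $\omega$ to a genuine support point, transports the support of the conditional $P_{i}(\tau_{i}^{0})$ into $F_{-i}$ through a \emph{common} witnessing $i$-component (rather than naively intersecting the projections $[\text{supp}_{d^{\mathscr{F}},\varepsilon}(P)]_{-i}$ and $[\text{supp}_{d^{\mathscr{F}},\varepsilon}(P')]_{-i}$, which is strictly coarser than $F_{-i}$), and merges the $P$- and $P'$-estimates into a single mass bound on $\tau_{i}^{*}$. This is precisely what the construction of $d^{\mathscr{F}}$ buys: the supremum over levels makes the Prokhorov self-reference exact, so no triangle-inequality slack accumulates, while the frame structure underlying $\mathscr{F}$ (equivalently, the consistency/labelling of canonical information structures) ensures that the neighbourhoods one fattens are describable through beliefs, so the $\varepsilon$ mass-budget is never exceeded. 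A subsidiary point, needed to take the limit over levels in the recursion and to move between topological supports and $\varepsilon$-fattenings, is that $(\Omega,d^{\mathscr{F}})$ is Polish (Proposition \ref{prop:(Niceness)-The-interim}), so that all the relevant sets admit compact inner approximations on which these exchanges are legitimate.
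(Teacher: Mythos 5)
Your reduction of the statement to the $(1-\varepsilon)$-evidence property $F\subseteq B^{1-\varepsilon}(F)$ for $F=\hat{T}_{d^{\mathscr{F}},\varepsilon}(P,P')$ is correct and is implicitly how the paper proceeds as well. But the core of your argument rests on two assertions that are not established and that bypass the one ingredient the paper's proof actually turns on. First, your ``structural input (1)'' claims that $d^{\mathscr{F}}(\omega,\hat{\omega})<\varepsilon$ holds \emph{exactly} when the states agree and the interim beliefs $\tau_{i}^{*},\hat{\tau}_{i}^{*}$ are $\varepsilon$-close in a self-referential Prokhorov metric on $\Delta(\Theta\times\mathcal{T}_{-i})$. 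That is strictly stronger than the definition: $d^{\mathscr{F}}$ only imposes the Prokhorov-type inequality $\tau_{i}(E)\leq\hat{\tau}_{i}(\mathcal{N}_{d^{m-1},\delta}(E))+\delta$ for events $E\in\mathscr{F}$, i.e.\ for \emph{frame-measurable} events, and the first level is a weak* comparison of first-order beliefs; nothing in the recursion gives you a Prokhorov comparison over all Borel events, nor a product-ball structure. Second, the step you yourself flag as ``the hard part'' --- passing to genuine support points without doubling the radius, and merging the $P$- and $P'$-estimates through a common witnessing $i$-component into a single mass bound $\tau_{i}^{*}(F_{-i})\geq1-\varepsilon$ --- is exactly where the proof content lies, and you assert it (``the supremum makes the self-reference exact, so no slack accumulates'') rather than prove it. Done naively, two separate $\varepsilon$-closeness statements to two different measures give at best a $1-2\varepsilon$ bound, and the witness-intersection argument as described does not show the relevant opponent profiles lie in $F_{-i}$.

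What makes the paper's proof work, and what is missing from yours, is the observation that the supports $\Omega_{P}$ and $\Omega_{P'}$ of canonical priors are themselves frame-measurable events, $\Omega_{P}\in\mathscr{F}$. Then for $\omega=(\theta,\tau)\in\Omega_{P}$ and $\hat{\omega}=(\hat{\theta},\hat{\tau})\in\Omega_{P'}$ with $d^{\mathscr{F}}(\omega,\hat{\omega})<\varepsilon$, the definition of $d^{\mathscr{F}}$ applied to the single event $E=\Omega_{P}$, together with the labelling condition $\tau_{i}^{*}(\Omega_{P})=1$, gives $\hat{\tau}_{i}^{*}(\mathcal{N}_{d^{\mathscr{F}},\varepsilon}(\Omega_{P}))\geq1-\varepsilon$ in one step; since $\hat{\tau}_{i}^{*}(\Omega_{P'})=1$, this mass sits in $\mathcal{N}_{d^{\mathscr{F}},\varepsilon}(\Omega_{P})\cap\Omega_{P'}\subseteq\hat{T}_{d^{\mathscr{F}},\varepsilon}(P,P')$, which is the $(1-\varepsilon)$-evidence bound with no accumulation of slack. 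Your proposal never invokes frames or the frame-measurability of the supports, so the quantitative bookkeeping you defer cannot be closed by the tools you cite; the Polishness/compact-approximation remark at the end is not needed for this proposition and does not substitute for it.
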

\begin{proof}
Consider $\omega=(\theta,\tau)\in\Omega_{P}$ and $\hat{\omega}=(\hat{\theta},\hat{\tau})\in\Omega_{P'}$
with $d^{\mathscr{F}}(\omega,\hat{\omega})<\varepsilon$. Then it
must be that $\omega,\hat{\omega}\in\hat{T}_{d^{\mathscr{F}},\varepsilon}(P,P')$.
Since $\Omega_{P}\in\mathscr{F}$, the fact that $d^{\mathscr{F}}(\omega,\hat{\omega})<\varepsilon$
implies that 
\[
\tau_{i}^{*}(\Omega_{P})\leq\hat{\tau}_{i}^{*}(\mathcal{N}_{d^{\mathscr{F}},\varepsilon}(\Omega_{P}))+\varepsilon.
\]
Since $\tau_{i}^{*}(\Omega_{P})=1$ and $\hat{\tau}_{i}^{*}(\Omega_{P'})=1$
we have 
\[
\hat{\tau}_{i}^{*}(\hat{T}_{d^{\mathscr{F}},\varepsilon}(P,P'))=\hat{\tau}_{i}^{*}(\mathcal{N}_{d^{\mathscr{F}},\varepsilon}(\Omega_{P}))\geq1-\varepsilon.
\]

A symmetric argument implies that $\tau_{i}^{*}(\hat{T}_{d^{\mathscr{F}},\varepsilon}(P,P'))\geq1-\varepsilon$
and so $\hat{T}_{d^{\mathscr{F}},\varepsilon}(P,P')=B^{1-\varepsilon}(\hat{T}_{d^{\mathscr{F}},\varepsilon}(P,P'))$,
which establishes the result. 
\end{proof}

\subsection{\label{subsec:Continuity-of-Exact}Continuity of Exact BIBCE in Rich
Games}

Our main result established that the ACK topology is the coarsest
topology generating continuity in approximate BIBCE outcomes. For
many applications of interest, we would like to make statements about
continuity of exact BIBCE outcomes. We can extend our results to exact
BIBCE if we allow for sufficiently rich games. 

We introduce a ``strong richness'' property of a base game. 
\begin{defn}
(strong richness) A base game $\mathcal{G}$ satisfies \emph{strong
richness} if there exists $\theta^{*}\in\varTheta$ such that for
every player $i$ and every $a_{i}\in A_{i},$ there exists $a_{-i}\in A_{-i}$
such that $(a_{i},a_{-i})$ is a strict Nash Equilibrium of $(u_{j}(\cdot,\cdot,\theta^{*}))_{j\in I}$. 
\end{defn}
Richness requires that the set of possible payoffs is sufficiently
rich. It is in the spirit of but stronger than richness properties
in the literature, e.g., \cite{weyi07b}. Recall that it is a maintained
assumption that every state $\theta\in\varTheta$ is assigned positive
probability. 
\begin{lem}
\label{prop:epsilonto0}For every base game $\mathcal{G}$ satisfying
strong richness, $\varepsilon>0$; and simple information structure
$P\in\mathcal{P}$, there exists a simple information structure $P^{+}\in\mathcal{P}$
so that $d^{ACK}(P,P^{+})<\varepsilon$ and 
\[
d_{\mathcal{H}}(\mathcal{O^{\varepsilon}}(\mathcal{G},P),\mathcal{O}(\mathcal{G},P^{+}))<2M|A\times\Theta|\varepsilon,
\]
where \textup{$d_{\mathcal{H}}(X,Y)$ is the Hausdorff distance between
$X,Y\subseteq\Delta(A\times\Theta)$.}
\end{lem}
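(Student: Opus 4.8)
The plan is to start from a simple information structure $P$ and an $\varepsilon$-BIBCE $\sigma$ of $(\mathcal{G},P)$ realizing some outcome in $\mathcal{O}^{\varepsilon}(\mathcal{G},P)$, and to ``repair'' the $\varepsilon$-slack in the obedience constraints by perturbing first-order beliefs so that each type strictly prefers its recommended action, using the strict Nash equilibria at $\theta^{*}$ guaranteed by strong richness. Concretely, for each player $i$ and each action $a_i$ in the (finite) range of $\sigma_i$, fix $a_{-i}(a_i)$ so that $(a_i,a_{-i}(a_i))$ is a strict Nash equilibrium of the complete-information game $(u_j(\cdot,\cdot,\theta^{*}))_j$; let $J_{\mathcal{G}}>0$ be the minimal strict-Nash gain over this finite collection. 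Then I would construct $P^{+}$ by splitting each type $\tau_i$ in the support of $P$ into finitely many copies indexed by a correlated ``recommendation'' drawn according to $\sigma$, while with a small probability $\delta$ attaching a state-$\theta^{*}$ event on which the recommended profile is played; this is exactly the device used in the proof of Proposition \ref{prop:BNE_Main}. The resulting $P^{+}$ is finite, and by a small further generic perturbation of first-order beliefs (as in the proof of Proposition \ref{prop: denseness }) it can be taken to be simple, i.e.\ in $\mathcal{P}^{*}$.

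The three things to verify are: (i) $d^{ACK}(P,P^{+})<\varepsilon$; (ii) the induced decision rule $\sigma^{+}$ (follow the recommendation) is an exact BIBCE of $(\mathcal{G},P^{+})$; and (iii) the outcome of $\sigma^{+}$ under $P^{+}$ is within $2M|A\times\Theta|\varepsilon$ of the outcome of $\sigma$ under $P$, which together with upper hemicontinuity of BIBCE outcomes (Proposition \ref{prop:Properconv}) gives the Hausdorff bound. For (i): the splitting map and the $\delta$-perturbation move hierarchies by at most $O(\delta)$ in $d_{\Pi}$ and leave $\hat{T}_{\delta'}(P,P^{+})$ equal to $\text{supp}_{\delta'}(P)$ for suitable $\delta'$, so $d^{ACK}(P,P^{+})\le c\delta$; choosing $\delta$ small relative to $\varepsilon$ gives the bound. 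For (ii): on the recommendation event the original obedience constraint holds up to $-\varepsilon(1-\delta)$, and the $\delta$-probability $\theta^{*}$ event contributes at least $+\delta J_{\mathcal{G}}$ to the obedience inequality for every deviation, so taking $\delta=\varepsilon/(J_{\mathcal{G}}+\varepsilon)$ (exactly as in Proposition \ref{prop:BNE_Main}) makes the net obedience slack nonnegative — an \emph{exact} BIBCE. For (iii): the outcome of $\sigma^{+}$ differs from that of $\sigma$ only through the $\delta$-mass placed on the $\theta^{*}$ event, so each coordinate $\nu(a,\theta)$ moves by at most $\delta$; summing gives total variation $\le |A\times\Theta|\delta$, and tracking the $2M$ payoff bound through the definition of the strategic/Hausdorff distance yields the stated $2M|A\times\Theta|\varepsilon$ after substituting $\delta\le\varepsilon$.

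I would organize the write-up as: (1) fix $\sigma\in\mathcal{B}^{\varepsilon}(\mathcal{G},P)$ and the strict-Nash data; (2) define $P_{\delta}$ via the recommendation-splitting plus $\theta^{*}$-perturbation and its canonical representative $P^{+}$; (3) check $P^{+}$ simple and $d^{ACK}(P,P^{+})\le c\delta$; (4) verify exact obedience of the follow-the-recommendation rule $\sigma^{+}$ with $\delta=\varepsilon/(J_{\mathcal{G}}+\varepsilon)$; (5) bound the outcome distance and invoke upper hemicontinuity. The main obstacle — and the only genuinely delicate point — is step (4) combined with the simplicity requirement: the generic first-order-belief perturbation that makes types have distinct first-order beliefs (needed for $P^{+}\in\mathcal{P}^{*}$) must be chosen small enough not to destroy the strict obedience just engineered, so the two perturbations (the $\theta^{*}$-repair of size $\Theta(\varepsilon)$ and the separating perturbation of size $o(\varepsilon)$) have to be sequenced and their magnitudes balanced; this is exactly the balancing already carried out in the proofs of Propositions \ref{prop: denseness } and \ref{prop:BNE_Main}, so I would cite those constructions rather than redo them.
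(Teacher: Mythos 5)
There is a genuine gap, and it concerns the set-valued nature of the claim. Your construction builds $P^{+}$ by recommendation-splitting: you fix one $\varepsilon$-BIBCE $\sigma$ of $(\mathcal{G},P)$ and encode its recommendations (plus the $\theta^{*}$-repair event) into the type space, as in Propositions \ref{encode} and \ref{prop:BNE_Main}. This makes $P^{+}$ depend on the particular $\sigma$, so at best you show that the single outcome $\nu_{\sigma}$ lies near $\mathcal{O}(\mathcal{G},P^{+}_{\sigma})$. But the lemma requires one fixed simple $P^{+}$ with $d_{\mathcal{H}}(\mathcal{O}^{\varepsilon}(\mathcal{G},P),\mathcal{O}(\mathcal{G},P^{+}))$ small, and the hard direction of the Hausdorff distance is precisely that \emph{every} $\varepsilon$-BIBCE outcome of $P$ — not just the one you encoded — is matched by an exact BIBCE outcome of the same $P^{+}$. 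Under your $P^{+}_{\sigma}$ the $\theta^{*}$-bonus rewards only the recommendations of the encoded $\sigma$; a different $\varepsilon$-BIBCE $\sigma'$ of $(\mathcal{G},P)$ gets no bonus at types where its recommendation differs, its obedience slack stays negative, and its outcome need not be attained exactly under $P^{+}_{\sigma}$. Your appeal to Proposition \ref{prop:Properconv} cannot close this: upper hemicontinuity (like Proposition \ref{prop: sufficiency}) controls the opposite inclusion, namely that exact BIBCE outcomes of $P^{+}$ are near approximate BIBCE outcomes of $P$, and moreover the lower-hemicontinuity half of Proposition \ref{prop:Properconv} is itself proved \emph{from} Lemma \ref{prop:epsilonto0}, so invoking it here is circular.

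The paper's proof avoids this by keeping the perturbation of the information structure independent of $\sigma$: it adds, for every existing type $\tau_{i}$, a $\delta$-probability event at state $\theta^{*}$ populated by fresh opponent types $t_{-i}^{\tau_{i}}$ (together with the auxiliary profiles $\bar{T}^{i}$), and only the decision rule adapts to the particular $\sigma$ being repaired: the new opponent types play $\alpha_{-i}(a_{i})$ \emph{correlated} with player $i$'s recommendation $a_{i}\sim\sigma(\cdot|\tau_{i})$, which is exactly the correlation that belief invariance permits in a BIBCE but that cannot be carried by a $\sigma$-independent type space in your scheme. With $\delta=\varepsilon/(J_{\mathcal{G}}+\varepsilon)$ the strict-Nash gain $J_{\mathcal{G}}$ then repairs obedience simultaneously for every $\sigma\in\mathcal{B}^{\varepsilon}(\mathcal{G},P)$, which is what delivers the Hausdorff bound. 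Your per-$\sigma$ encoding is the right tool for the BNE results (where the structure must carry the correlation), but for this lemma the correlation has to live in the solution concept, not in $P^{+}$.
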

\begin{proof}
By richness part (2), for any player $i$ and any action $a_{i}\in A_{i}$
there is $\alpha_{-i}(a_{i})\in A_{-i}$ so that $(a_{i},\alpha_{-i}(a_{i}))$
is a strict NE in the game with payoffs $(\overline{u}_{j}(\cdot,\cdot):=u_{j}(\cdot,\cdot,\theta^{*}))_{j\in I}$.
We first perturb $P$ so that every type assigns probability at least
$\delta>0$ to $\theta^{*}$ by adding a new profile of types $t^{\tau_{i}}=(t_{j}^{\tau_{i}})_{j\in I}$
for every player $i$ and every type $\tau_{i}\in T_{i}:=\left\{ \tau_{i}:\tau\in\text{supp}(P)\right\} $.
Define the extended prior $P^{+}$ with support given by
\[
T^{+}:=\text{supp}(P)\cup\left(\bigcup_{i\in I}\left(T^{i}\cup\bar{T}^{i}\right)\right),
\]
where $T^{i}:=\left\{ (\theta^{*},\tau_{i},t_{-i}^{\tau_{i}}):\tau\in\text{supp}(P)\right\} $
and $\bar{T}^{i}:=\left\{ (\theta^{*},t^{\tau_{i}}):\tau_{i}\in T_{i}\right\} $.
For every $(\theta,\tau)\in T^{+}$ and any given choices $\delta,\eta\in[0,1)$,
define 
\[
P^{+}(\theta,\hat{\tau}):=\begin{cases}
(1-\eta)(1-\delta)P(\theta,\hat{\tau}) & \text{ if }(\theta,\hat{\tau})\in\text{supp}(P),\\
(1-\eta)\delta/|I|\sum_{i\in I}\boldsymbol{1}_{\left(\theta,\hat{\tau}\right)\in T^{i}}P(\hat{\tau}_{i})\mu(\theta) & \text{ if }(\theta,\hat{\tau})\in\cup_{i\in I}T^{i},\\
\eta/|I|\sum_{i\in I}\boldsymbol{1}_{\left(\theta,\hat{\tau}\right)\in\bar{T}^{i}}\sum_{\tau_{i}}\boldsymbol{1}_{\hat{\tau}=t^{\tau_{i}}}P(\tau_{i})\mu(\theta) & \text{ if }(\theta,\hat{\tau})\in\cup_{i\in I}\bar{T}^{i}.
\end{cases}
\]
For every $\sigma\in\mathcal{B^{\varepsilon}}(\mathcal{G},P)$ we
construct an associated BIBCE $\sigma^{+}\in\mathcal{B}(\mathcal{G},P^{+})$
as follows: 
\[
\sigma^{+}(a|\theta,\hat{\tau})=\begin{cases}
\sum_{i\in I}\boldsymbol{1}_{\left(\theta,\hat{\tau}\right)\in T^{i}\cup\overline{T}^{i}}\boldsymbol{1}_{a=\left(a_{i},\alpha_{-i}(a_{i})\right)}\sigma(a_{i}|\hat{\tau}_{i}) & \text{ if }(\theta,\hat{\tau})\in\cup_{i\in I}\left(T^{i}\cup\bar{T}^{i}\right),\\
\sigma(a|\theta,\tau) & \text{ otherwise.}
\end{cases}
\]
We first verify that $\sigma^{+}$is belief invariant: 
\[
\sum_{a_{-i}}\sigma^{+}(a|\theta,\hat{\tau}_{-i},\hat{\tau}_{i})=\begin{cases}
\sigma(a_{i}|\hat{\tau}_{i}) & \text{if }(\theta,\hat{\tau})\in\text{supp}(P),\\
\sigma(a_{i}|\hat{\tau}_{i}) & \text{if }(\theta,\hat{\tau})\in T^{i}\cup\bar{T}^{i},\\
\sum_{j}\boldsymbol{1}_{\left(\theta,\hat{\tau}\right)\in T^{j}\cup\bar{T}^{j}}\sigma(a_{i}|\hat{\tau}_{j}) & \text{if }(\theta,\hat{\tau})\in\cup_{j\neq i}T^{j}\cup\bar{T}^{j},
\end{cases}
\]
and so $\left(\theta,\hat{\tau}\right)\mapsto\sigma(a_{i}|\theta,\hat{\tau})$
is measurable with respect to the projection $(\theta,\hat{\tau})\mapsto\hat{\tau}_{i}$
for all $a_{i}\in A_{i}$. This decision rule increases the expected
payoff of $\sigma$ for every type in the support of $P$: Let $J_{\mathcal{G}}:=\min_{a_{i},\hat{a}_{i}\in A_{i},i\in I}\left(\bar{u}_{i}(a_{i},\alpha_{-i}(a_{i}))-\bar{u}_{i}(\hat{a}_{i},\alpha_{-i}(a_{i}))\right)>0$
and define $\delta:=\frac{\varepsilon}{J_{\mathcal{G}}+\varepsilon}<\varepsilon$.
Consider any $\tau_{i}\in T_{i}$ and deviation $a_{i}'$

\[
\begin{aligned}\sum_{\theta,\tau}\sum_{a}\Delta u_{i}(a,a_{i}',\theta)P^{+}\circ\sigma^{+}(a,\theta,\tau_{-i}|\tau_{i}) & >-\varepsilon(1-\delta)+\delta\sum_{a_{i}}\Delta\bar{u}_{i}(a_{i},\alpha_{-i}(a_{i}),a_{i}')\ \sigma(a_{i}|\tau_{i})\\
 & \geq-\varepsilon(1-\delta)+J_{\mathcal{G}}\delta=0.
\end{aligned}
\]
It remains to check if obedience also holds for each type $t_{i}^{\tau_{j}}$
of player $i$ and any associated $\tau_{j}\in T_{j}$ of player $j$:
\[
\begin{aligned}\sum_{\theta,\tau}\sum_{a}\Delta u_{i}(a,a_{i}',\theta)\ P^{+}\circ\sigma^{+}(a,\theta,\tau_{-i}|t_{i}^{\tau_{j}}) & =\sum_{a_{j}}\Delta\bar{u}_{i}(a_{j},\alpha_{-j}(a_{j}),a_{i}')\ \sigma(a_{j}|\tau_{j})\\
 & \geq J_{\mathcal{G}}>0.
\end{aligned}
\]
Hence $\sigma^{+}\in\mathcal{B}(\mathcal{G},P^{+})$ and outcomes
$\nu_{\sigma}\in\mathcal{O^{\varepsilon}}(\mathcal{G},P)$ and $\nu_{\sigma^{+}}\in\mathcal{O}(\mathcal{G},P^{+})$
satisfy $||\nu_{\sigma}-\nu'||_{2}\leq2M\frac{\varepsilon}{J_{\mathcal{G}}+\varepsilon}<2M\varepsilon$
. Let $g_{i}(\tau):=\overline{\tau}_{i}\left(P^{+}(\cdot,\cdot|\tau_{i})\right)$
denote the new canonical type/belief hierarchy of player $i$, for
each $\tau\in\text{supp}(P)$. Then for every event $E\subseteq\text{supp}(P)$
and any player $i$, $|g_{i}(\tau)(E)-\tau_{i}(E)|\leq\delta.$ Since
the total variation norm is stronger than the product topology, we
conclude that $d_{\Pi}(g(\tau),\tau)\leq\delta<\varepsilon$. Hence
$P^{+}\left(\mathcal{N}_{\varepsilon}(\text{supp}(P))\right)>1-\varepsilon$
and so $d^{ACK}(P,P^{+})<\varepsilon.$ Hence for every $\nu'\in\mathcal{O}(\mathcal{G},P^{+})$
there is $\nu\in\mathcal{O^{\varepsilon}}(\mathcal{G},P)$ so that
$||\nu-\nu'||_{2}\leq2M|A\times\Theta|\varepsilon$. 
\end{proof}
\begin{prop}
\label{prop:Properconv}For every base game $\mathcal{G}$ and any
information structure $P\in\mathcal{P}$ there is a sequence $(\hat{P}^{k})_{k}$
of simple information structures ACK-converging to $P$ and a subset
$\mathcal{O}^{\infty}(\mathcal{G},P)\subseteq\mathcal{O}(\mathcal{G},P)$
so that $\lim_{k\uparrow\infty}d_{\mathcal{H}}(\mathcal{O}(\mathcal{G},\hat{P}^{k}),\mathcal{O}^{\infty}(\mathcal{G},P))=0.$
If $\mathcal{G}$ satisfies strong richness, then 
\[
\lim_{k\uparrow\infty}d_{\mathcal{H}}(\mathcal{O}(\mathcal{G},\hat{P}^{k}),\mathcal{O}(\mathcal{G},P))=0.
\]
\end{prop}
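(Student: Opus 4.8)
\emph{Strategy.} The plan is to split the statement into an upper–hemicontinuity part (no richness needed) and, under strong richness, a lower–hemicontinuity part, reducing both to the sufficiency estimate of Proposition~\ref{prop: sufficiency} together with a single closedness fact about the exact–BIBCE outcome set. Throughout I use that $\Omega$ is compact metrizable and $A$ is finite, so $\{\lambda\in\Delta(A\times\Omega):\text{marg}_{\Omega}\lambda=P\}$ is weak-$*$ compact, that $\Delta(A\times\Theta)$ is compact and hence its hyperspace of nonempty closed subsets is compact in the Hausdorff metric $d_{\mathcal H}$, and that BIBCE exist (Proposition~\ref{prop:(Existence-of-Equilibria)}). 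The key lemma is the following \emph{Closedness claim}: if $\sigma^{k}$ is an $\varepsilon_{k}$-BIBCE of $(\mathcal G,P)$ with $\varepsilon_{k}\downarrow0$ and $\nu_{\sigma^{k}}\to\nu$, then $\nu\in\mathcal O(\mathcal G,P)$ (so in particular $\mathcal O(\mathcal G,P)$ is closed). To prove it I pass to a weak-$*$ limit $\sigma\circ P$ of the joint laws $\sigma^{k}\circ P$; the outcome map $\lambda\mapsto\text{marg}_{A\times\Theta}\lambda$ is continuous so $\nu_{\sigma}=\nu$. Because all these laws share the fixed $\Omega$-marginal $P$ and $A$ is finite, weak-$*$ convergence upgrades (via density of $C(\Omega)$ in $L^{1}(P)$) to convergence of $\int h(a)g(\omega)\,d(\sigma^{k}\circ P)$ for $h$ on $A$ and \emph{bounded measurable} $g$ on $\Omega$. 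Applying this with $g=\psi(\tau_{i})$ forces the conditional law of $a_{i}$ under $\sigma\circ P$ to depend on $\tau_{i}$ only, i.e.\ belief invariance survives; applying it to the (fixed, finite) linear combination of masses of sets $\{a\}\times\{\theta\}\times(E\times\mathcal T_{-i})$ that make up the ex-ante gain of an arbitrary measurable deviation plan $\psi_{i}\colon A_{i}\times\mathcal T_{i}\to A_{i}$, and using that interim $\varepsilon_{k}$-obedience bounds each such gain by $\varepsilon_{k}$, gives nonpositive ex-ante gain for every $\psi_{i}$ under $\sigma\circ P$, which by the standard equivalence between ex-ante and interim optimality at $\varepsilon=0$ is interim obedience of $\sigma$. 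Hence $\sigma$ is a BIBCE.

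\emph{First assertion.} By Proposition~\ref{prop: denseness } choose simple $\hat P^{k}$ with $d^{ACK}(P,\hat P^{k})\to0$; by Proposition~\ref{prop: sufficiency} this gives $d^{*}(P,\hat P^{k}\mid\mathcal G)=:\varepsilon_{k}\downarrow0$, hence $\mathcal O(\mathcal G,\hat P^{k})\subseteq\mathcal O_{\varepsilon_{k}}(\mathcal G,P)$. By compactness of the hyperspace, pass to a subsequence (relabelled) with $\mathcal O(\mathcal G,\hat P^{k})\to\mathcal O^{\infty}$ in $d_{\mathcal H}$ for some nonempty closed $\mathcal O^{\infty}$; the subsequence still ACK-converges to $P$. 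If $\nu\in\mathcal O^{\infty}$, write $\nu=\lim\nu^{k}$ with $\nu^{k}\in\mathcal O(\mathcal G,\hat P^{k})\subseteq\mathcal O_{\varepsilon_{k}}(\mathcal G,P)$, so there exist $\varepsilon_{k}$-BIBCE $\sigma^{k}$ of $(\mathcal G,P)$ with $\nu_{\sigma^{k}}\to\nu$; the Closedness claim gives $\nu\in\mathcal O(\mathcal G,P)$. Thus $\mathcal O^{\infty}\subseteq\mathcal O(\mathcal G,P)$ and $d_{\mathcal H}(\mathcal O(\mathcal G,\hat P^{k}),\mathcal O^{\infty})\to0$, which is the first assertion.

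\emph{Strong-richness assertion.} By Proposition~\ref{prop: denseness } take simple $Q^{k}$ with $d^{ACK}(P,Q^{k})\to0$, and set $\delta_{k}:=d^{*}(P,Q^{k}\mid\mathcal G)\downarrow0$ (Proposition~\ref{prop: sufficiency}). Apply Lemma~\ref{prop:epsilonto0} to the simple $Q^{k}$, the strongly rich $\mathcal G$, and a parameter $\eta_{k}\ge\delta_{k}$ with $\eta_{k}\downarrow0$, obtaining simple $\hat P^{k}$ with $d^{ACK}(Q^{k},\hat P^{k})<\eta_{k}$ and $d_{\mathcal H}\bigl(\mathcal O^{\eta_{k}}(\mathcal G,Q^{k}),\mathcal O(\mathcal G,\hat P^{k})\bigr)<2M|A\times\Theta|\eta_{k}$; since the ACK topology is metrizable (Proposition~\ref{prop:The-ACK-topology}), $\hat P^{k}\to P$ in it. For $\mathcal O(\mathcal G,P)\subseteq\liminf_{k}\mathcal O(\mathcal G,\hat P^{k})$: any $\nu\in\mathcal O(\mathcal G,P)$ is, via $d^{*}(P,Q^{k}\mid\mathcal G)\le\delta_{k}$, within $\delta_{k}$ of some $\nu'\in\mathcal O^{\delta_{k}}(\mathcal G,Q^{k})\subseteq\mathcal O^{\eta_{k}}(\mathcal G,Q^{k})$, and then within $2M|A\times\Theta|\eta_{k}$ of some $\nu''\in\mathcal O(\mathcal G,\hat P^{k})$ by Lemma~\ref{prop:epsilonto0}, so $\nu$ is approached. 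For $\limsup_{k}\mathcal O(\mathcal G,\hat P^{k})\subseteq\mathcal O(\mathcal G,P)$: $d^{ACK}(P,\hat P^{k})\to0$ gives $d^{*}(P,\hat P^{k}\mid\mathcal G)\to0$ (Proposition~\ref{prop: sufficiency}), so the argument of the first assertion and the Closedness claim apply verbatim. Since $\liminf\subseteq\limsup$ always, both Kuratowski limits equal $\mathcal O(\mathcal G,P)$; Kuratowski convergence of closed sets to a closed set inside the compact $\Delta(A\times\Theta)$ is $d_{\mathcal H}$-convergence, giving $d_{\mathcal H}(\mathcal O(\mathcal G,\hat P^{k}),\mathcal O(\mathcal G,P))\to0$.

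\emph{Main obstacle.} The hard step is the Closedness claim, i.e.\ showing that exact BIBCE outcomes arise as limits of $\varepsilon$-BIBCE outcomes as $\varepsilon\downarrow0$: decision rules live in an infinite-dimensional space, and one must verify that both belief invariance and \emph{interim} obedience survive a weak-$*$ limit even though the natural test functions (indicators of the sets cut out by these constraints) are discontinuous in the type. The feature that makes this go through is precisely that every joint law in play has the same fixed marginal $P$ on $\Omega$, which upgrades weak-$*$ convergence to convergence against bounded measurable functions of $\omega$; everything else (denseness, the sufficiency estimate, Lemma~\ref{prop:epsilonto0}, and the compactness of $\Delta(A\times\Theta)$ and its hyperspace) is bookkeeping.
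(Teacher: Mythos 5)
Your proof is correct in outline, and it substitutes a genuinely different argument precisely at the step where the paper's own proof is thinnest. The paper obtains upper hemi-continuity by asserting that the specific $\delta$-extensions constructed in the proof of Proposition \ref{prop: sufficiency} converge almost surely to an obedient decision rule (because they arise from conditional expectations); you instead prove a self-contained closedness lemma: $\varepsilon_k$-BIBCE of the fixed game $(\mathcal{G},P)$ with converging outcomes have an exact BIBCE limit, obtained by passing the joint laws $\sigma^{k}\circ P$ to a weak-$*$ limit in the compact set of measures on $A\times\Omega$ with $\Omega$-marginal $P$, and exploiting the fixed marginal to upgrade weak convergence to convergence against bounded measurable functions of $\omega$. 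This is cleaner and more robust than the paper's route, since it does not depend on which approximate equilibria realize the outcomes in $\mathcal{O}_{\varepsilon_k}(\mathcal{G},P)$; the price is a subsequence extraction via compactness of the hyperspace of closed subsets of $\Delta(A\times\Theta)$, which is harmless because the proposition only asserts the existence of one sequence. The strong-richness half of your argument --- lower hemi-continuity via Lemma \ref{prop:epsilonto0} applied to the simple approximations from Proposition \ref{prop: denseness }, combined with Proposition \ref{prop: sufficiency} and the metrization from Proposition \ref{prop:The-ACK-topology} to stand in for the failing triangle inequality of $d^{ACK}$ --- is essentially the paper's own argument, with the Kuratowski-to-Hausdorff step made explicit.

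One step needs tightening. Your claim that testing against $h(a)\psi(\tau_i)$ ``forces the conditional law of $a_i$ to depend on $\tau_i$ only'' is not right as stated: such tests only identify $E[\mathbf{1}_{\{a_i\}}\mid\tau_i]$, not that the conditional law of $a_i$ given the full $(\theta,\tau)$ is $\tau_i$-measurable. The repair lives inside your own framework: belief invariance of $\sigma^{k}$ is equivalent to the condition, linear in the joint law, that for every bounded measurable $g$ on $\Theta\times\mathcal{T}_{-i}$ and $\phi$ on $\mathcal{T}_i$ the integral of $\mathbf{1}_{\{a_i\}}\,g(\theta,\tau_{-i})\,\phi(\tau_i)$ equals the integral of $\mathbf{1}_{\{a_i\}}\,E_P[g\mid\tau_i]\,\phi(\tau_i)$; since $E_P[g\mid\tau_i]$ depends only on the fixed marginal $P$, both sides are of the form covered by your upgraded convergence and the condition passes to the limit (alternatively, extract $L^{\infty}(P)$-weak-$*$ limits of the marginal rules $\sigma^{k}_i(a_i\mid\cdot)$ and match them against the limit joint law). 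With that local fix the closedness claim, and hence both assertions of the proposition, go through.
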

\begin{proof}
By Proposition \ref{prop: denseness } there is a sequence of simple
priors $P^{k}$ converging strategically to $P$ and a sequence of
positive $(\varepsilon^{k})_{k}$ converging monotonically to $0$
so that for every $k$, $d^{*}(P^{k},P)<\varepsilon^{k}.$ We first
prove upper hemi-continuity of BIBCE with respect to strategic convergence.
Let $(\sigma^{k})_{k}$ be a sequence of BIBCE so that for every $k\in\mathbb{N}$,
$\sigma^{k}\in\mathcal{B}^{0}(\mathcal{G},P^{k})$ and the induced
sequence of outcomes $(\nu_{\sigma^{k}})_{k}$ converges to some outcome
$\nu_{\infty}$, i.e. $\lim_{k\uparrow\infty}||\nu_{\sigma^{k}}-\nu_{\infty}||_{2}=0$.
We have shown that for every $k$ there is a $\varepsilon^{k}$-BIBCE
$\sigma^{\varepsilon^{k}}\in\mathcal{B}^{\varepsilon^{k}}(\mathcal{G},P)$
so that $||\nu_{\sigma^{k}}-\nu_{\sigma^{\varepsilon^{k}}}||_{2}\leq4M|A\times\Theta|\varepsilon^{k}$.
Our construction in the proof of Proposition \ref{prop: sufficiency}
has the feature that the sequence $\left(\sigma^{\varepsilon^{k}}\right)_{k}$
converges almost surely to an obedient decision rule (as it was constructed
from a conditional expectation) and so there is a BIBCE $\sigma^{\infty}\in\mathcal{B}(\mathcal{G},P)$
so that $\nu_{\sigma^{\infty}}=\nu_{\infty}$, which establishes upper
hemi-continuity: There is a subset $\mathcal{O}^{\infty}(\mathcal{G},P)\subseteq\mathcal{O}(\mathcal{G},P)$
so that 
\[
\lim_{k\uparrow\infty}d_{\mathcal{H},\mathcal{G}}(\mathcal{O}(\mathcal{G},P^{k}),\mathcal{O}^{\infty}(\mathcal{G},P))=0.
\]
Lower hemi-continuity is a consequence of Lemma \ref{prop:epsilonto0}
and richness part (2) of $\mathcal{G}$. Indeed, for every BIBCE $\sigma\in\mathcal{B}(\mathcal{G},P)$
and every $k\in\mathbb{N}$, $d^{*}(P,P^{k})<\varepsilon^{k}$ implies
that there is $\sigma^{\varepsilon^{k},k}\in\mathcal{B}^{\varepsilon^{k}}(\mathcal{G},P^{k})$
so that $||\nu_{\sigma^{\varepsilon^{k},k}}-\nu_{\sigma}||_{2}<\varepsilon^{k}$.
Hence, there is a superset $\overline{\mathcal{O}}^{\infty}(\mathcal{G},P)\supseteq\mathcal{O}(\mathcal{G},P)$
so that $\lim_{k\uparrow\infty}d_{\mathcal{H},\mathcal{G}}(\mathcal{O}^{\varepsilon^{k}}(\mathcal{G},P^{k}),\overline{\mathcal{O}}^{\infty}(\mathcal{G},P))=0.$
From the construction in the proof of Lemma \ref{prop:epsilonto0}
there is simple $\hat{P}^{k}\in\mathcal{P}$ so that $d^{*}(P^{k},\hat{P}^{k})<\varepsilon^{k}$
so that $d_{\mathcal{H},\mathcal{G}}(\mathcal{O}^{\varepsilon^{k}}(\mathcal{G},P^{k}),\mathcal{O}(\mathcal{G},\hat{P}^{k}))<2M(u)\varepsilon^{k}$.
By the triangle inequality, the sequence $(\hat{P}^{k})_{k}$ also
converges strategically to $P$. By upper hemi-continuity established
earlier and the triangle inequality there is a subset $\underline{\mathcal{O}}^{\infty}(\mathcal{G},P)\subseteq\mathcal{O}(\mathcal{G},P)$
so that $\lim_{k\uparrow\infty}d_{\mathcal{H},\mathcal{G}}(\mathcal{O}^{\varepsilon^{k}}(\mathcal{G},P^{k}),\underline{\mathcal{O}}^{\infty}(\mathcal{G},P))=0.$
But then we have that $\overline{\mathcal{O}}^{\infty}(\mathcal{G},P)=\underline{\mathcal{O}}^{\infty}(\mathcal{G},P)=\mathcal{O}(\mathcal{G},P)$
and so the result follows.
\end{proof}
Focusing attention on BIBCE as a solution concept, we have: 
\begin{lem}
\label{cor:information design}For any game $\mathcal{G}$ and any
open $\mathcal{P}^{*}\subseteq\mathcal{P}$,
\[
\sup_{P\in\mathcal{P}^{*}\cap\mathcal{P}^{SIMPLE}}V(\mathcal{O}(\mathcal{G},P))\leq\sup_{P\in\mathcal{P}^{*}}V(\mathcal{O}(\mathcal{G},P)).
\]
Moreover, if $\mathcal{G}$ satisfies strong richness, 
\[
\sup_{P\in\mathcal{P}^{*}\cap\mathcal{P}^{SIMPLE}}V(\mathcal{O}(\mathcal{G},P))=\sup_{P\in\mathcal{P}^{*}}V(\mathcal{O}(\mathcal{G},P)).
\]
\end{lem}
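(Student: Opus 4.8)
The plan is to get the first inequality from set inclusion alone and the equality under strong richness by combining the Hausdorff-convergence statement of Proposition \ref{prop:Properconv} with the continuity of $V$ and the openness of $\mathcal{P}^{*}$.

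First I would note that $\mathcal{P}^{*}\cap\mathcal{P}^{SIMPLE}\subseteq\mathcal{P}^{*}$, so the supremum on the left is over a subset of the index set of the supremum on the right; the inequality $\sup_{P\in\mathcal{P}^{*}\cap\mathcal{P}^{SIMPLE}}V(\mathcal{O}(\mathcal{G},P))\leq\sup_{P\in\mathcal{P}^{*}}V(\mathcal{O}(\mathcal{G},P))$ is then immediate and uses no richness hypothesis.

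For the equality, fix an arbitrary $P\in\mathcal{P}^{*}$. Since $\mathcal{G}$ satisfies strong richness, Proposition \ref{prop:Properconv} supplies a sequence $(\hat{P}^{k})_{k}$ of simple information structures that ACK-converges to $P$ and satisfies $d_{\mathcal{H}}(\mathcal{O}(\mathcal{G},\hat{P}^{k}),\mathcal{O}(\mathcal{G},P))\to 0$. Because $\mathcal{P}^{*}$ is open in the ACK topology and $\hat{P}^{k}\to P$, there is $K$ with $\hat{P}^{k}\in\mathcal{P}^{*}$ for all $k\geq K$, and each such $\hat{P}^{k}$ is simple, hence lies in $\mathcal{P}^{*}\cap\mathcal{P}^{SIMPLE}$. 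Continuity of $V$ in the Hausdorff topology then gives $V(\mathcal{O}(\mathcal{G},\hat{P}^{k}))\to V(\mathcal{O}(\mathcal{G},P))$, so $\sup_{P'\in\mathcal{P}^{*}\cap\mathcal{P}^{SIMPLE}}V(\mathcal{O}(\mathcal{G},P'))\geq V(\mathcal{O}(\mathcal{G},P))$. Since $P\in\mathcal{P}^{*}$ was arbitrary, taking the supremum over $P$ yields $\sup_{P'\in\mathcal{P}^{*}\cap\mathcal{P}^{SIMPLE}}V(\mathcal{O}(\mathcal{G},P'))\geq\sup_{P\in\mathcal{P}^{*}}V(\mathcal{O}(\mathcal{G},P))$, which is the reverse of the first inequality; the two together give equality.

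The only point that requires care — and the reason strong richness is needed — is that without it Proposition \ref{prop:Properconv} only provides Hausdorff convergence to a subset $\mathcal{O}^{\infty}(\mathcal{G},P)\subseteq\mathcal{O}(\mathcal{G},P)$, which is enough for the inequality but not for the equality; strong richness is exactly what upgrades upper hemicontinuity to two-sided Hausdorff convergence of the BIBCE outcome correspondence along the approximating sequence, so that $V$ evaluated along $(\hat{P}^{k})_{k}$ converges to $V(\mathcal{O}(\mathcal{G},P))$ rather than merely to $V$ of a subset.
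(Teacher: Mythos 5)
Your proposal is correct and follows essentially the same route as the paper, whose proof is just a citation of the denseness and exact-continuity results: you spell out the trivial set-inclusion argument for the inequality and, for the equality under strong richness, combine the two-sided Hausdorff convergence along the simple approximating sequence from Proposition \ref{prop:Properconv} with openness of the set and Hausdorff-continuity of $V$, exactly the ingredients the paper invokes. Your closing remark about why strong richness is needed (upgrading upper hemicontinuity to full Hausdorff convergence) matches the paper's intent as well.
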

The first statement follows from our denseness result (\ref{prop: denseness })
while the second statement follows from our exact continuity results
(Propositions \ref{prop:epsilonto0}and \ref{prop:Properconv})

\subsection{Example Existence of BIBCE \label{Asubsec:Example-Existence-of}}

We consider the game by \cite{hellman14no} where BNE fails to exist
and yet construct a straightforward BIBCE of this game. There are
two players $A$ and $B$ and two actions $L,R$ (see \cite{hellman14no}
for the payoff matrix)

\[
\sigma(y,\theta)=\begin{cases}
\frac{1}{2}(L,L),\frac{1}{2}(R,R) & \text{, if }\theta\in\left\{ (A,1),(B,1)\right\} \\
\frac{1}{2}(R,L),\frac{1}{2}(L,R) & \text{, if }\theta\in\left\{ (A,-1),(B,-1)\right\} 
\end{cases}
\]

This decision rule is independent of the type space and satisfies
belief invariance: For any $a_{i}\in\left\{ L,R\right\} $ and any
$\theta\in\left\{ (A,1),(B,1),(A,-1),(B,-1)\right\} $
\[
\sum_{a_{-i}\in\left\{ L,R\right\} }\sigma(a_{i},a_{-i}|y,\theta)=\sigma(a_{i},L|y,\theta)+\sigma(a_{i},R|y,\theta)=\frac{1}{2}.
\]

We do not need to report the full information structure considered
in \cite{hellman14no}, to verify obedience as $\sigma$ is independent
of types. Each player $i\in\left\{ A,B\right\} $ has one of two first-order
beliefs $P_{i}(\cdot|y)$ indexed by $y\in\left\{ -1,1\right\} $
\[
P_{A}(\theta|y)=\begin{cases}
\frac{1}{2} & \text{ if }\theta=(A,y)\\
\frac{1}{4} & \text{ if }\theta=(B,1)\\
\frac{1}{4} & \text{ if }\theta=(B,-1)
\end{cases},\ P_{B}(\theta|y)=\begin{cases}
\frac{1}{2} & \text{ if }\theta=(B,y)\\
\frac{1}{4} & \text{ if }\theta=(A,1)\\
\frac{1}{4} & \text{ if }\theta=(A,-1)
\end{cases}.
\]

So if recommended $a_{i}\in\left\{ L,R\right\} $ player $i$'s payoff
increment from playing a different action $a_{i}'\neq a_{i}$ is given
by when $i=A:$ 

For the case where $y=1$ and $a_{A}=L$
\begin{eqnarray*}
\frac{1}{2}\left(u_{A}(L,L,\left(A,1\right))-u_{A}(R,L,\left(A,1\right))\right)+\frac{1}{4}\left(u_{A}(L,L,\left(B,1\right))-u_{A}(R,L,\left(B,1\right))\right)\\
+\frac{1}{4}\left(u_{A}(L,R,(B,-1))-u_{A}(R,R,(B,-1))\right)=0.35>0.
\end{eqnarray*}

For the case where $y=1$ and $a_{A}=R$

\begin{eqnarray*}
\frac{1}{2}\left(u_{A}(R,R,\left(A,1\right))-u_{A}(L,R,\left(A,1\right))\right)+\frac{1}{4}\left(u_{A}(R,R,\left(B,1\right))-u_{A}(L,R,\left(B,1\right))\right)\\
+\frac{1}{4}\left(u_{A}(R,L,(B,-1))-u_{A}(L,L,(B,-1))\right)=0.15>0.
\end{eqnarray*}

For the case $y=-1$ and $a_{A}=L$,

\begin{eqnarray*}
\frac{1}{2}\left(u_{A}(L,R,\left(A,-1\right))-u_{A}(R,R,\left(A,-1\right))\right)+\frac{1}{4}\left(u_{A}(L,R,\left(B,-1\right))-u_{A}(R,R,\left(B,-1\right))\right)\\
+\frac{1}{4}\left(u_{A}(L,L,(B,1))-u_{A}(R,L,(B,1))\right)=0.35>0.
\end{eqnarray*}

For the case $y=-1$ and $a_{A}=R$,
\begin{eqnarray*}
\frac{1}{2}\left(u_{A}(R,L,\left(A,-1\right))-u_{A}(L,L,\left(A,-1\right))\right)+\frac{1}{4}\left(u_{A}(R,L,\left(B,-1\right))-u_{A}(L,L,\left(B,-1\right))\right)\\
+\frac{1}{4}\left(u_{A}(R,R,(B,1))-u_{A}(L,R,(B,1))\right)=0.15>0.
\end{eqnarray*}

Similarly for player $i=B$:

For the case where $y=1$ and $a_{B}=L$
\begin{eqnarray*}
\frac{1}{2}\left(u_{B}(L,L,\left(B,1\right))-u_{B}(L,R,\left(B,1\right))\right)+\frac{1}{4}\left(u_{B}(L,L,\left(A,1\right))-u_{B}(L,R,\left(A,1\right))\right)\\
+\frac{1}{4}\left(u_{B}(R,L,(A,-1))-u_{B}(R,R,(A,-1))\right)=0.35>0.
\end{eqnarray*}

For the case where $y=1$ and $a_{B}=R$

\begin{eqnarray*}
\frac{1}{2}\left(u_{B}(R,R,\left(B,1\right))-u_{B}(R,L,\left(B,1\right))\right)+\frac{1}{4}\left(u_{B}(R,R,\left(A,1\right))-u_{B}(R,L,\left(A,1\right))\right)\\
+\frac{1}{4}\left(u_{B}(L,R,(A,-1))-u_{B}(L,L,(A,-1))\right)=0.15>0.
\end{eqnarray*}

For the case $y=-1$ and $a_{B}=L$,

\begin{eqnarray*}
\frac{1}{2}\left(u_{B}(R,L,\left(B,-1\right))-u_{B}(R,R,\left(B,-1\right))\right)+\frac{1}{4}\left(u_{B}(R,L,\left(A,-1\right))-u_{B}(R,R,\left(A,-1\right))\right)\\
+\frac{1}{4}\left(u_{B}(L,L,(A,1))-u_{B}(L,R,(A,1))\right)=0.35>0.
\end{eqnarray*}

For the case $y=-1$ and $a_{B}=R$,
\begin{eqnarray*}
\frac{1}{2}\left(u_{B}(L,R,\left(B,-1\right))-u_{B}(L,L,\left(B,-1\right))\right)+\frac{1}{4}\left(u_{B}(L,R,\left(A,-1\right))-u_{B}(L,L,\left(A,-1\right))\right)\\
+\frac{1}{4}\left(u_{B}(R,R,(B,1))-u_{A}(L,R,(B,1))\right)=0.15>0,
\end{eqnarray*}

and thus all the obedience conditions for a BIBCE are satisfied.\newpage\bibliographystyle{ecta}
\bibliography{MyBibFile,general}

\end{document}